\let\SNbibliographystyle\bibliographystyle
\renewcommand{\bibliographystyle}[1]{\SNbibliographystyle{naturemag}}
\documentclass[pdflatex,iicol,sn-nature]{sn-jnl}
\usepackage{graphicx,amsmath,amssymb,mathtools,booktabs,microtype}
\usepackage{xcolor,url}
\usepackage{xurl,etoolbox}
\usepackage{longtable,array,placeins,needspace,pdflscape,capt-of}
\providecommand{\burl}[1]{\url{#1}}
\setcitestyle{super,open={},close={}}

\definecolor{refblue}{HTML}{0066CC}
\hypersetup{colorlinks=true,allcolors=refblue,pdftitle={Calibrating generative flows through regional probability matching}}
\newcommand{\suppref}[1]{\textcolor{refblue}{Supplementary #1}}
\newcommand{\E}{\mathbb E}
\newcommand{\R}{\mathbb R}
\newcommand{\Law}{\mathcal L}
\newcommand{\TV}{\operatorname{TV}}
\newcommand{\Var}{\operatorname{Var}}
\begin{document}
\title[Calibrating generative flows]{Calibrating Generative Flows through \mbox{Regional Probability Matching}}
\makeatletter
\patchcmd{\@maketitle}{Corresponding author(s). E-mail(s): \corrauthemail}{Correspondence to Xiongyu Hu or Jingzehua Xu.\par E-mails: \corrauthemail}{}{\PackageError{author-info}{Correspondence label patch failed}{Check the Nature class version.}}

\makeatother
\author[1]{\fnm{Yubo} \sur{Huang}}
\author*[1]{\fnm{Xiongyu} \sur{Hu}}\email{huxiongyu@swjtu.edu.cn}
\author[1]{\fnm{Xin} \sur{Lai}}
\author[2]{\fnm{Zixi} \sur{Wang}}
\author[3]{Liao~Jian}
\author*[4]{\fnm{Jingzehua} \sur{Xu}}\email{u3014074@connect.hku.hk}
\author[1]{\fnm{Chuan} \sur{He}}

\affil[1]{\orgdiv{Key Laboratory of Transportation Tunnel Engineering, Ministry of Education}, \orgname{Southwest Jiaotong University}, \orgaddress{\city{Chengdu}, \country{China}}}
\affil[2]{\orgdiv{School of Information and Software Engineering}, \orgname{University of Electronic Science and Technology of China}, \orgaddress{\city{Chengdu}, \country{China}}}
\affil[3]{\orgdiv{School of Civil Engineering and Geomatics}, \orgname{Southwest Petroleum University}, \orgaddress{\country{China}}}
\affil[4]{\orgdiv{Department of Mechanical Engineering}, \orgname{The University of Hong Kong}, \orgaddress{\city{Hong Kong}, \country{China}}}
\hypersetup{pdfauthor={Yubo Huang, Xiongyu Hu, Xin Lai, Zixi Wang, Liao Jian, Jingzehua Xu, Chuan He}}

\abstract{
Generative models offer a powerful route to scientific simulation and discovery by enabling the study of complex populations and rare phenomena. Flow matching can generate realistic samples, yet inaccurate occurrence frequencies can distort population statistics and compromise inferences about rare events. Here we developed generalized mass-preserving coupled flow matching (G-MPCF), which matches the probability masses of Gaussian source regions to those of weighted data regions. This construction encodes the desired regional frequencies in the correspondence between noise and observations, shaping the learned dynamics while preserving the prescribed target distribution. On a rare-mode benchmark, G-MPCF reduces relative score error by 95.0\% and numerical integration error by 89.9\% compared with independent flow matching. On CIFAR-10 with a 100:1 class-probability ratio, it reduces corrected class-mass error by 44.6\% compared with minibatch optimal transport, while improving mean image fidelity with matched computational budgets. We envision that incorporating regional probability calibration into generative modelling will support more reliable scientific simulation and inference, enabling the study of systems in which rare events and their occurrence frequencies are central to understanding the underlying mechanisms.
}
\keywords{flow matching, generative models, probability calibration, optimal transport, rare events}
\maketitle

Generative models are becoming powerful tools for exploring complex scientific systems, including populations in which rare states carry important mechanistic information. Their usefulness depends not only on generating plausible observations but also on reproducing how frequently these states occur, as distorted probabilities can bias population statistics and the interpretation of rare phenomena. Diffusion and score-based models learn to reverse progressive data corruption\cite{sohl2015deep,ho2020denoising,song2019generative,song2021score}, whereas continuous normalizing flows learn differential equations that transform tractable source distributions into complex target distributions\cite{chen2018neural,grathwohl2019ffjord,papamakarios2021normalizing}. These approaches have enabled studies of molecular transition states, turbulent trajectories and high-resolution Earth-system fields\cite{duan2025reactot,li2024turbulence,hess2025downscaling}. Yet realistic generated structures do not necessarily imply accurate population frequencies, making the simultaneous recovery of structural fidelity and prescribed occurrence frequencies a central challenge for scientific generative modelling.

\begin{figure*}[t]
\centering\includegraphics[width=\textwidth]{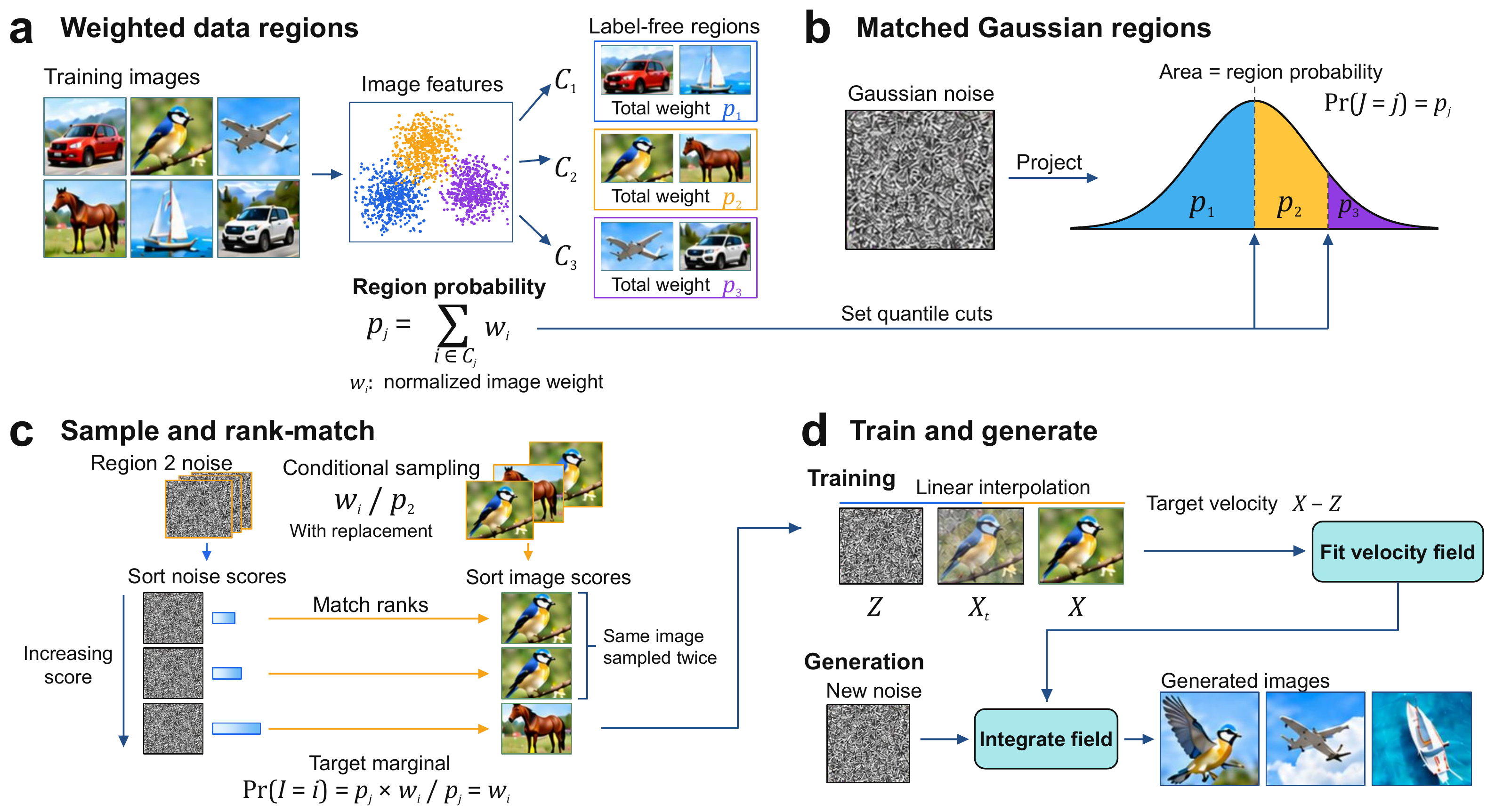}
\caption{
\textbf{Mass-calibrated construction of training pairs}.
\textbf{a}, Image features define regions $C_j$ with probabilities
$p_j=\sum_{i\in C_j}w_i$, where $w_i$ is the normalized image weight.
\textbf{b}, Gaussian projection quantiles define source regions with matching probabilities and colours.
\textbf{c}, Conditional sampling uses probabilities $w_i/p_j$ with replacement.
This calibrated sampling step establishes the prescribed target marginal, while subsequent rank matching preserves repeated draws and rearranges source--target correspondences without changing the sampled target counts.
\textbf{d}, Paired noise $Z$ and image $X$ define interpolants $X_t$ and target velocity $X-Z$.
Integrating the learned field from fresh noise generates images.
Images, feature clusters and interpolation states are schematic.}
\label{fig:method}
\end{figure*}

Flow matching and stochastic interpolants learn generative velocity fields along probability paths between tractable sources and target distributions\cite{lipman2023flow,tong2024improving,albergo2023building,albergo2023stochastic}. The source--target coupling determines which displacements supervise these fields and can alter the interpolation geometry without changing the prescribed endpoint marginals. Dynamic optimal transport provides a kinetic-energy formulation of transport\cite{benamou2000computational,peyre2019computational}. Minibatch transport and multisample matching construct finite-sample assignments\cite{tong2024improving,pooladian2023multisample}, while rectified flow uses learned dynamics to generate revised training pairs\cite{liu2023flow}. Optimal flow matching develops direct transport mappings\cite{kornilov2024optimal}; cluster-wise transport exploits local data structure\cite{chiang2026cotfm}, and learned plans and model-aligned costs further adapt pairing to the regression problem\cite{chapel2025differentiable,lin2026modelaligned}. Together, these approaches establish coupling design as an important factor in generative learning, while leaving open how its joint probability structure affects endpoint frequency accuracy under limited model capacity and computational budgets.

For imbalanced generation, the desired distribution also specifies how often each part of the data should occur. Sample weights allow this prescribed population to differ from the empirical data frequencies. Their interaction with source--target assignment is particularly relevant when low-probability regions receive few draws within a training batch. Previous studies have investigated bias correction and distribution calibration in generative models\cite{grover2019bias,choi2020fair}, while precision, recall, density and coverage characterize sample fidelity and support\cite{sajjadi2018assessing,kynkaanniemi2019improved,naeem2020reliable}. A region can contain convincing and diverse generated samples while receiving an inaccurate fraction of the total probability. In flow matching, this frequency error can arise even when training pairs have the prescribed endpoint marginals. Couplings with identical source and target laws induce different conditional velocity fields along the interpolation path. Their approximation by a finite network affects the distribution reached during generation\cite{benton2024errorbounds,zhou2025erroranalysis}.

Regional probability offers a direct way to organize this regression problem. Matching the probability of a source region to that of a target region assigns a corresponding share of source draws to the target's local structure. Conditional sampling within the target region then retains the relative weights of its observations. The remaining freedom in pairing can organize the supervising displacements within each region. Regions fitted in a continuous feature representation can capture structure across semantic classes. Class-frequency error and within-class coverage then provide complementary measurements of the resulting feature-based coupling. This construction connects population frequencies to the geometry of training pairs and motivates examining their effects on field approximation, sample allocation and within-region coverage together.

The learned field also determines the numerical work required for generation. Implicit diffusion sampling and higher-order solvers reduce the cost of following learned dynamics\cite{song2021ddim,lu2022dpmsolver,karras2022elucidating}; consistency models and mean flows learn mappings for few-step inference\cite{song2023consistency,geng2025meanflow}. Coupling design acts earlier, through the interpolation states and velocities presented during training. Its endpoint effects therefore involve both approximation of the training field and evolution along the learned trajectory. Joint measurements of field, rollout and integration errors can connect changes in the training pairs to generation under a limited evaluation budget.

Here we introduce generalized mass-preserving coupled flow matching (G-MPCF), which constructs training pairs by aligning Gaussian source regions with weighted data regions according to their probabilities.
As illustrated in Fig.~\ref{fig:method}, target observations are sampled conditionally within matched regions, while additional within-region rank matching refines pair geometry without altering target sampling frequencies.
Rare-mode experiments show reductions of 95.0\% in relative score error and 89.9\% in solver error.
On CIFAR-10 with a 100:1 class-probability ratio, matched-computation comparisons yield 49.3\% and 44.6\% lower corrected class-mass errors than independent pairing and minibatch optimal transport, respectively, alongside improved mean fidelity.
Component interventions identify regional probability matching as the principal source of these gains, with mass-only coupling achieving comparable calibration and fidelity.
Furthermore, rollout diagnostics connect field improvements to endpoint performance across targets and computation budgets.

\section{Results}
\subsection{Mass-calibrated pairing simplifies rare-mode fields}

G-MPCF divides a Gaussian source into regions with the same probabilities as weighted data regions.
It samples targets with replacement within each matched region to establish a calibrated regional correspondence, and the full implementation further pairs source and target scores by rank.
The image implementation uses twenty regions fitted to pretrained features and learns the velocity in image coordinates.
Let $Q$ denote a source law, $P$ a target law and $\Pi\in\Gamma(Q,P)$ a coupling.
For $(Z,X)\sim\Pi$ and $t$ uniform on $[0,1]$, the flow-matching objective is
\begin{equation}
\begin{aligned}
X_t&=(1-t)Z+tX,\qquad U=X-Z,\\
\mathcal L_\Pi(\theta)
&=\E_{t,(Z,X)\sim\Pi}\|v_\theta(X_t,t)-U\|^2.
\end{aligned}
\label{eq:objective}
\end{equation}
Its population minimizer is the conditional velocity $u_\Pi(x,t)=\E[U\mid X_t=x]$, which supplies the vector field associated with the chosen interpolation law\cite{lipman2023flow,albergo2023building}.

For weighted observations, $P=\sum_i w_i\delta_{x_i}$, we fit label-free regions $C_j$ and assign their probabilities $p_j=\sum_{i\in C_j}w_i$. A unit projection of a standard Gaussian is divided at quantiles with successive probabilities $p_j$. Writing $J$ for the source region and $I$ for the sampled target index gives, for $i\in C_j$,
\begin{equation}
\begin{aligned}
\Pr(J=j)&=p_j,\qquad \Pr(I=i\mid J=j)=w_i/p_j,\\
\Pr(I=i)&=p_j(w_i/p_j)=w_i.
\end{aligned}
\label{eq:mass-preservation}
\end{equation}
Within each region, source and target scores are paired by rank.
The prescribed target marginal in equation~\ref{eq:mass-preservation} is already established by calibrated regional sampling, while the subsequent permutation preserves sampled target-index multiplicities and rearranges source--target correspondences.
For fixed one-dimensional score multisets, equal-rank matching minimizes the sum of squared score differences over within-region permutations, providing a geometric criterion without altering target probabilities.

Regional probability assignment and within-region geometric ordering can both change the regression problem, despite preserving the prescribed source and target marginals.
For any square-integrable field $f$,
\begin{equation}
\begin{aligned}
\E\|U-f(X_t,t)\|^2
&=\E\!\left[\operatorname{tr}\Var(U\mid X_t)\right]\\
&\quad+\E\|u_\Pi(X_t,t)-f(X_t,t)\|^2.
\end{aligned}
\label{eq:risk}
\end{equation}
The two terms are conditional velocity variance and conditional-mean approximation error.
Both depend on the coupling, so preserving $Q$ and $P$ permits different regression problems.
To examine how this change affects approximation at fixed model capacity, we use targets whose population velocity and score can be evaluated directly.

\begin{figure*}[!tp]
\centering\includegraphics[width=\textwidth]{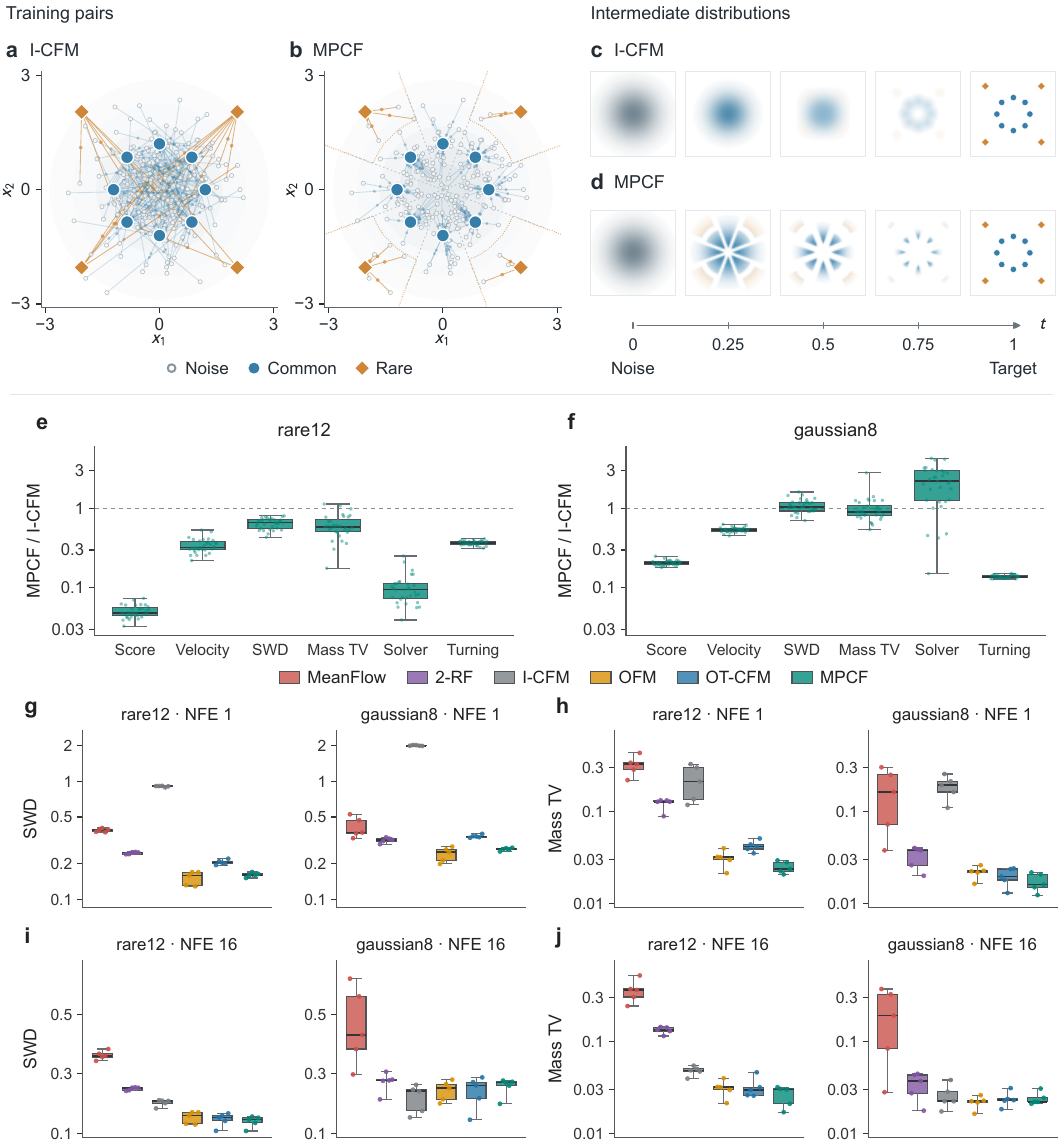}
\caption{\textbf{Rare-mode transport and field accuracy.} \textbf{a,b}, Independent and mass-calibrated training pairs for rare12. Open circles denote sources; blue circles and orange diamonds identify common and rare targets. \textbf{c,d}, Exact intermediate densities with shared intensity scales within each time column. \textbf{e,f}, MPCF-to-I-CFM ratios for six diagnostics across 32 paired runs per target; endpoint and path metrics use 16 evaluations. Dashed lines mark equality. \textbf{g,h}, Endpoint SWD and mode-mass total variation at one evaluation. \textbf{i,j}, Corresponding results at 16 evaluations, with five runs per method and target throughout \textbf{g--j}. Methods follow decreasing mean rare12 SWD at 16 evaluations. OFM uses one evaluation throughout. Boxes show medians and interquartile ranges, whiskers span the observed ranges and dots denote individual runs. Vertical axes in \textbf{e--h,j} are logarithmic; those in \textbf{i} are linear. NFE, number of field evaluations.}
\label{fig:controlled}
\end{figure*}

We first evaluated a rare-mode target with an explicit coupling and reference field. Rare12 contains eight common modes of probability 0.115 and four outer modes of probability 0.02, with centred and whitened coordinates. We partition a two-dimensional standard Gaussian into alternating sectors of probability 0.115 and 0.135, then split the larger sectors at $r_\star=\sqrt{-2\log(0.02/0.135)}=1.95425$ to assign outer probability 0.02 to each rare mode. We call this analytic construction MPCF and compare it with independent conditional flow matching (I-CFM). Independent interpolation paths cross the central region towards different targets. MPCF assigns rare targets to outer Gaussian sectors, separating common- and rare-mode contributions throughout interpolation, as shown in Fig.~\ref{fig:controlled}a--d. The exact velocity and scaled score $s_t(x)=-x+t u_\Pi(x,t)$ provide references for evaluating the learned field.

MPCF reduced relative score mean squared error (MSE) by 95.0\%, from 0.16476 to 0.00817, with improvement in all 32 paired training runs in Fig.~\ref{fig:controlled}e. Relative velocity MSE fell by 66.7\%, and Euler-to-reference MSE at 16 evaluations fell by 89.9\%. Endpoint sliced-Wasserstein distance (SWD)\cite{bonneel2015sliced} decreased by 35.2\%, from 0.2061 to 0.1335, alongside 43.6\% lower mode-mass error. The paired SWD difference was $-0.07253$, with a 95\% bootstrap interval from $-0.08080$ to $-0.06472$. All six exact two-sided paired tests in \suppref{Table~S1} favoured MPCF, with $P\leq5.13\times10^{-9}$. With one field evaluation, MPCF reached SWD 0.1606, below the independent flow's 0.1982 at 32 evaluations. Its error stabilized near 0.13 from two evaluations onwards, while independent pairing improved gradually as the budget increased. Figure~\ref{fig:images}a follows the generated densities and endpoint errors across inference budgets.

The equal-mass gaussian8 target tests the same construction with uniform mode probabilities. Score error fell by 79.3\% and velocity error by 46.1\% in Fig.~\ref{fig:controlled}f. MPCF also improved the one- and two-evaluation endpoints. At 16 evaluations, independent pairing reached SWD 0.2543 and MPCF 0.2722, accompanied by Euler-to-reference MSE of 0.00489 and 0.01019. The ordering of endpoint and integration errors therefore changes with the inference budget. Figure~\ref{fig:images}b follows the generated densities and endpoint errors across the same budgets.

A six-method comparison with five training runs per method and target extends the analysis to minibatch optimal-transport conditional flow matching (OT-CFM), two-stage rectified flow (2-RF), Optimal Flow Matching (OFM) and MeanFlow\cite{tong2024improving,liu2023flow,kornilov2024optimal,geng2025meanflow}. At a budget of 16 evaluations, MPCF reached mean rare12 SWD 0.1392, compared with 0.1456 for OT-CFM, 0.1522 for OFM and 0.2026 for I-CFM. Its mode-mass error was 0.0260, the lowest among the six methods in Fig.~\ref{fig:controlled}i,j. At one evaluation, OFM gave the lowest mean SWD on both targets; MPCF gave the lowest mode-mass error, reaching 0.0248 on rare12 and 0.0171 on gaussian8 in Fig.~\ref{fig:controlled}g,h.

\begin{figure*}[!tp]
\centering\includegraphics[width=\textwidth]{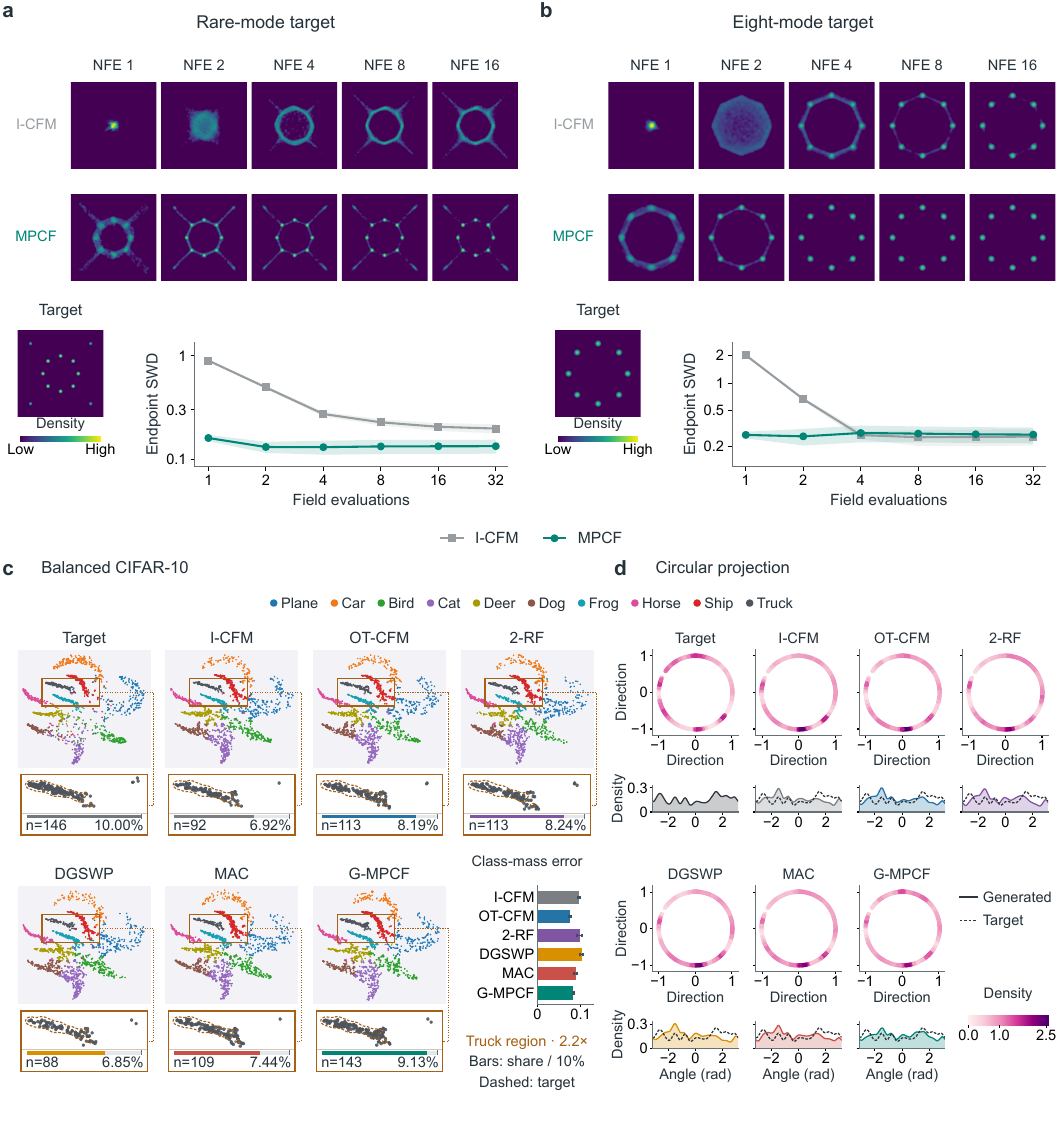}
\caption{\textbf{Target recovery and feature-distribution structure.} \textbf{a,b}, Endpoint densities and sliced-Wasserstein distance for rare12 and gaussian8. Maps pool 9,000 outputs from three runs with common smoothing and logarithmic colour scales within each target; curves summarize 32 paired runs. Both SWD axes are logarithmic. \textbf{c}, Balanced CIFAR-10 t-distributed stochastic neighbour embedding (t-SNE) maps and corrected class-mass total variation (MassTV). Maps contain 1,536 images; colours denote reference labels or predicted classes. Truck enlargements share target-defined bounds and the target 80\% density contour. Counts refer to displayed trucks; inset bars and percentages give full-evaluation truck frequencies against the 10\% target. \textbf{d}, Unit-circle projections and angular densities of the same maps; dashed curves show the target. Image evaluations use three training runs and 50,000 images per model at 16 Euler evaluations. SWD curves and metric bars show means; bands and error bars show standard deviations.}
\label{fig:images}
\end{figure*}

\subsection{Regional probability matching improves image generation}

We compared six coupling methods on CIFAR-10 using a shared 20.1-million-parameter U-Net and $5.425\times10^{17}$ floating-point operations (FLOPs) per training run, including pairing and auxiliary computation. Importance-weighted sampling of 45,000 training images produced a 100:1 class-probability ratio. Eight paired runs supplied 50,000 images each at 16 Euler evaluations. A validation confusion matrix corrected classifier-based class frequencies; minority recall measured feature-manifold coverage within the three least probable classes.

The balanced feature maps in Fig.~\ref{fig:images}c resolve class allocation across all six methods. The matched truck-region enlargements contain 146 target observations, 92 under I-CFM and 143 under G-MPCF. Across the complete evaluations, the predicted truck frequency increases from 6.92\% under I-CFM to 9.13\% under G-MPCF, towards the 10\% target. The absolute frequency error decreases by 71.8\%, and G-MPCF is closest to the truck target among the six methods in each of the three runs. Figure~\ref{fig:images}d shows the angular distributions of the same feature maps about the reference centre. Long-tail enlargements and complete class-frequency comparisons appear in \suppref{Figs.~S3 and S4}.

\begin{figure*}[!tp]
\centering\includegraphics[width=\textwidth]{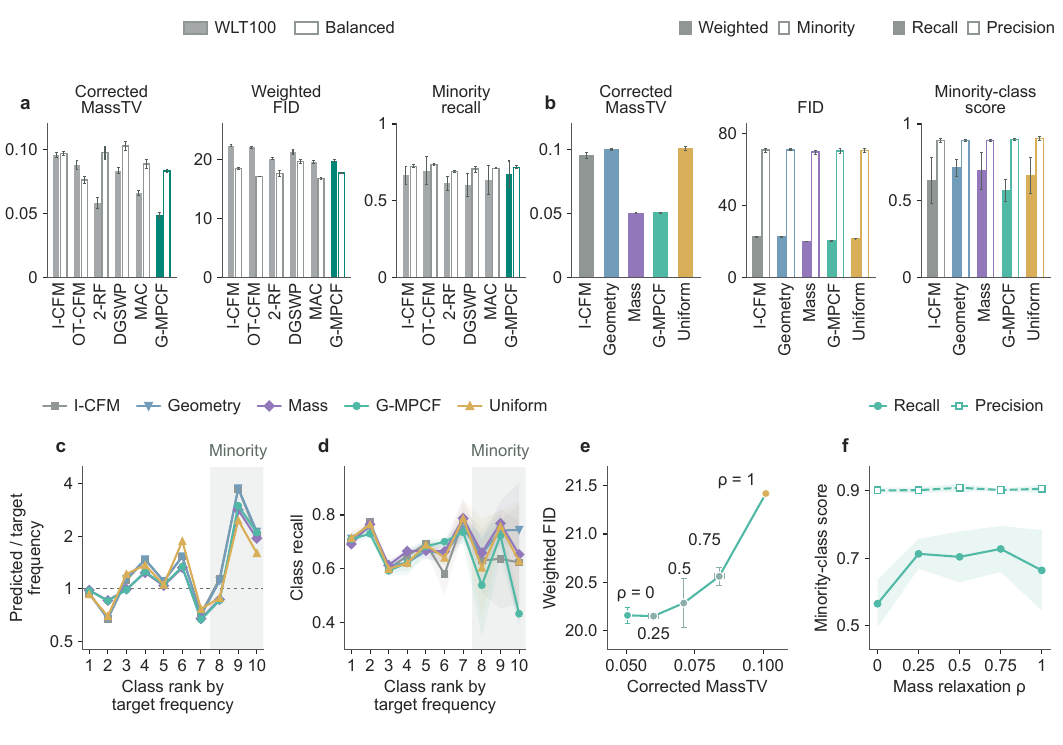}
\caption{\textbf{Image-generation performance and its regional basis.} \textbf{a}, Corrected class-mass total variation, weighted Fr\'echet Inception distance (FID) and minority recall across six coupling methods. Filled and open bars denote WLT100 and balanced targets, with eight and three training runs, respectively. \textbf{b}, Allocation, fidelity and coverage under five coupling operations. Filled and open bars denote weighted and minority-class FID in the middle chart, and recall and precision in the right chart. \textbf{c,d}, Predicted-to-target class-frequency ratios and class recall, ordered by decreasing target probability. Shading marks minority classes; the dashed line marks matched frequencies. \textbf{e,f}, Allocation, fidelity and coverage as regional probabilities vary from the target law at $\rho=0$ to uniform at $\rho=1$. Component and relaxation comparisons use three training runs. All image evaluations use 50,000 images per model at 16 Euler evaluations. Bars and curves show means; error bars and bands show standard deviations.}
\label{fig:ablation}
\end{figure*}

G-MPCF reduced corrected MassTV to 0.04863, compared with 0.09593 for I-CFM and 0.08777 for OT-CFM. These values correspond to reductions of 49.3\% and 44.6\%, respectively. G-MPCF had the lowest mean allocation error among all six methods in Fig.~\ref{fig:ablation}a; two-stage rectified flow followed at 0.05858. The paired reductions were 0.04730 relative to I-CFM and 0.03915 relative to OT-CFM. The respective 95\% intervals ranged from $0.04534$ to $0.04908$ and from $0.03672$ to $0.04136$. Both superiority comparisons had Holm-adjusted $P=0.016$.

Weighted FID\cite{heusel2017gans} was 19.80 for G-MPCF, 22.35 for I-CFM and 22.06 for OT-CFM. Both prespecified FID non-inferiority tests passed with adjusted $P=0.008$. Model-aligned coupling (MAC) reached the lowest mean FID of 19.60. Minority recall was 0.674 for G-MPCF, 0.667 for I-CFM and 0.697 for OT-CFM, with paired difference intervals spanning zero. The improvement in aggregate class allocation was accompanied by better mean fidelity than the two primary controls, while within-minority coverage varied with the coupling.

We repeated the comparison with equal class probabilities to assess the effect of removing the imposed class imbalance. On balanced CIFAR-10, G-MPCF yielded FID 17.71 and MassTV 0.08317, compared with 18.43 and 0.09686 for I-CFM over three independent training runs. OT-CFM reached 17.11 and 0.07629, and MAC reached FID 16.77. G-MPCF improved both aggregate metrics over independent pairing in this setting, with the larger allocation advantage appearing in the long-tail task. We next isolated regional probabilities and rank ordering to identify which part of the coupling produced the allocation gain.

\subsection{Regional probability assignment explains the image gain}
We evaluated regional calibration and rank ordering through five coupling operations, with three independent training runs each.
Mass-only coupling used quantile assignment and conditional target sampling; geometry-only coupling used global rank matching; and uniform-region pairing combined equal source-region probabilities, regional sorting and the weighted conditional target law.
All operations shared the model, training budget and inference setting.
Because mass-only and independent pairing preserve the same target marginal, their comparison isolates the effect of regional source--target correspondence.
Comparing G-MPCF with the mass-only variant isolates the contribution of rank matching within each region.

Mass-only coupling reproduced the allocation and fidelity gains of G-MPCF in Fig.~\ref{fig:ablation}b.
Mean MassTV was 0.05023 for mass-only coupling, 0.05044 for G-MPCF and 0.09561 for independent pairing; their weighted FIDs were 19.98, 20.16 and 22.48, respectively.
Geometry-only matching yielded MassTV 0.10020 and FID 22.53, compared with 0.10097 and 21.42 for uniform-region pairing.
Notably, mass-only coupling achieved slightly lower mean MassTV and FID than G-MPCF, indicating no additional aggregate benefit from within-region rank matching in this setting.
Thus, calibrated regional sampling accounts for the principal allocation and fidelity improvements, although differences between the two calibrated variants remain descriptive given the three independent runs.

For the second most probable class, the predicted-to-target frequency ratio increased from 0.68 under independent pairing to 0.86 under mass-only coupling and 0.85 under G-MPCF.
Despite similar frequency profiles, the two calibrated variants exhibited differences in class-specific recall, as shown in Fig.~\ref{fig:ablation}c,d.
Minority recall was 0.694 for mass-only coupling and 0.565 for G-MPCF, with corresponding minority precision of approximately 0.894 and 0.900, respectively, as detailed in \suppref{Table~S9}.
These results suggest that rank ordering can affect local feature-manifold coverage without necessarily improving aggregate calibration or fidelity.
Therefore, regional probability matching drives the main calibration gains, whereas geometric ordering provides a distinct, task-dependent refinement.

We further examined the effect of relaxing regional probability assignments by setting
$q_{\rho}=(1-\rho)p+\rho u$, where $u$ is uniform over regions and $0\leq\rho\leq1$.
Retaining conditional target probabilities $w_i/p_j$ yields
$P_{I,q_{\rho}}(i)=q_{\rho,j}w_i/p_j$ for $i\in C_j$, whose deviation from the intended row law $P_I(i)=w_i$ is
\begin{equation}
\begin{aligned}
\!\!\!\!\mathrm{TV}(P_{I,q_{\rho}},P_I)
&=\frac{1}{2}\sum_j\sum_{i\in C_j}
\left|\frac{q_{\rho,j}}{p_j}-1\right|w_i\\
&=\frac{1}{2}\sum_j|q_{\rho,j}-p_j|
=\rho\,\mathrm{TV}(u,p).
\end{aligned}
\end{equation}
Across five relaxation levels, generated MassTV increased from 0.05044 through 0.05998, 0.07099 and 0.08397 to 0.10097, while weighted FID rose from 20.16 to 21.42 and minority recall varied non-monotonically, as shown in Fig.~\ref{fig:ablation}e,f.
Because relaxation also changes the induced target-index marginal, these trends reflect sensitivity to regional mass misspecification without independently isolating geometric effects.
Together with the mass-only intervention, these findings identify calibrated regional correspondence as the principal source of improvement, with rank matching providing additional geometric organization.

\subsection{Rollout diagnostics connect field and endpoint accuracy}

The effect of coupling on a fitted field propagates through both the learned dynamics and their numerical integration. Let $A_t$ and $B_t$ be the second and fourth moments of the field error under the interpolation law, and let $\chi_t^2$ measure the chi-squared divergence of the learned-rollout law from that law. For $k$ Euler steps of size $h=T/k$, the regularity conditions in Methods give
\begin{equation}
\begin{aligned}
&W_2(\Law(\widehat X_T^{(k)}),P)\\
&\quad\leq\int_0^T e^{L(T-t)}
       \left(A_t+\sqrt{\chi_t^2B_t}\right)^{1/2}dt\\
&\qquad+\frac{hM_a}{2}\frac{e^{\widehat L T}-1}{\widehat L}\\
&\qquad+(1-T)\left(\E_\Pi\|X-Z\|^2\right)^{1/2}.
\end{aligned}
\label{eq:endpoint}
\end{equation}
Here $L$ and $\widehat L$ are the spatial Lipschitz constants of the exact and learned fields, and $M_a$ bounds the learned flow's material acceleration. The three terms quantify field error under rollout, Euler integration error and terminal truncation. The first term evaluates field error along the learned trajectory, and its density-ratio factor links those visited states to the interpolation distribution used in training. The bound therefore motivates measuring regression error on both distributions together with the numerical endpoint discrepancy.

\begin{figure*}[!tp]
\centering\includegraphics[width=\textwidth]{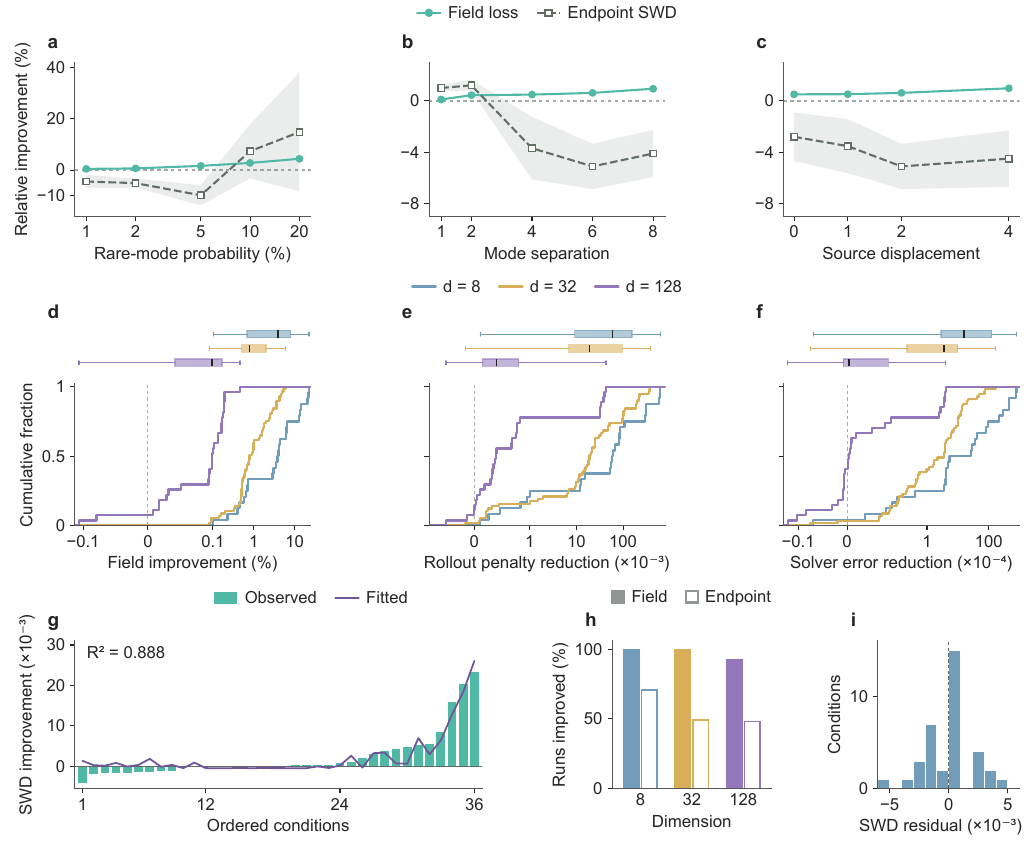}
\caption{\textbf{From field approximation to endpoint accuracy.} \textbf{a--c}, Relative field-loss and endpoint-SWD reductions as rare probability, mode separation or source displacement varies at dimension 32. Curves show means and bands show standard deviations. \textbf{d--f}, Cumulative distributions of relative field improvement and of rollout-penalty and solver-error differences. The latter are I-CFM minus G-MPCF. Boxes show medians and interquartile ranges; whiskers span all observations. Horizontal axes are symmetric logarithmic. \textbf{g}, Observed endpoint improvement and the three-diagnostic regression, ordered by observed improvement. \textbf{h}, Fractions of runs with improved field loss or endpoint SWD. \textbf{i}, Observed minus fitted endpoint improvement. The 36 conditions use three paired training runs each, giving 24, 57 and 27 runs in dimensions 8, 32 and 128. Evaluation uses 5,000 points, 16 Euler evaluations and a 64-step Heun reference.}
\label{fig:boundary}
\end{figure*}

We examined these stages in 36 mixture conditions spanning dimensions 8, 32 and 128, rare probabilities from 0.01 to 0.2, and changes in mode separation and source displacement. Three paired training runs at each condition used common source and target laws. The single-factor slices in Fig.~\ref{fig:boundary}a--c isolate each input change at dimension 32. Increasing rare-mode probability strengthened the field improvement and eventually the endpoint improvement, while changing separation or displacement produced distinct endpoint responses. The diagnostics in Fig.~\ref{fig:boundary}d--f measure displacement-regression improvement on interpolation states, the change in regression loss during rollout and Euler-to-Heun endpoint MSE. Their distributions compare the effects of coupling across the three dimensions.

A regression using all three diagnostics explained 88.8\% of the variation in the 36 condition-mean endpoint differences, with fitted values and residuals in Fig.~\ref{fig:boundary}g,i. Across 108 training pairs, 106 improved field loss, 58 improved endpoint SWD and 56 improved both in Fig.~\ref{fig:boundary}h. Together, these diagnostics link field approximation to endpoint accuracy through the learned dynamics and numerical integration. The image-field comparison in \suppref{Fig.~S8} adds path length and path-length ratio to the regression, rollout and integration measurements.

\subsection{Image-generation gains extend across computation budgets}

\begin{figure*}[!tp]
\centering\includegraphics[width=\textwidth]{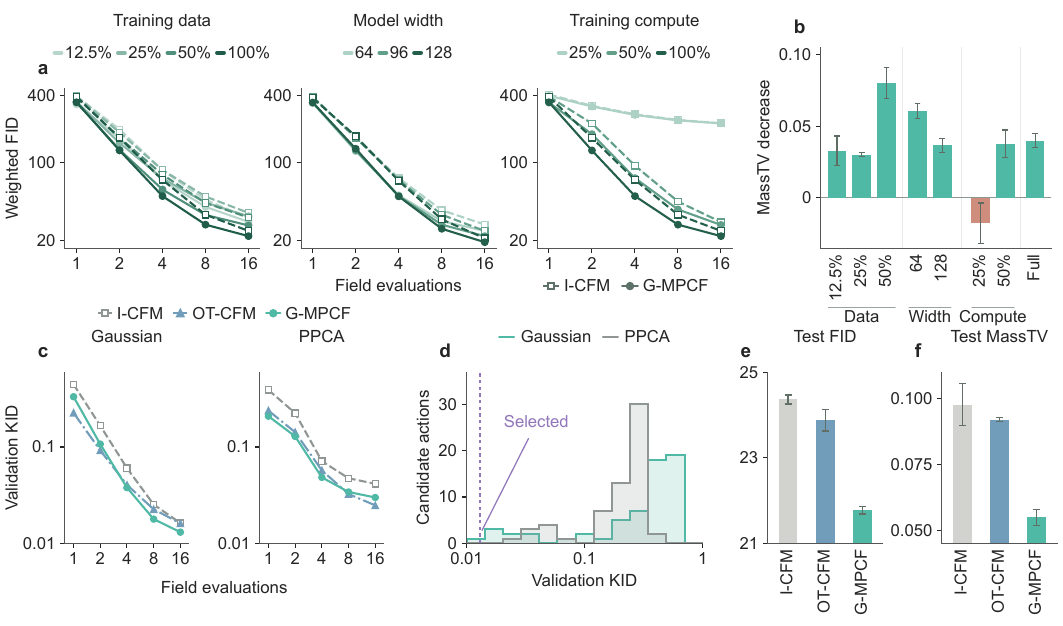}
\caption{\textbf{Resource budgets and validation-guided image generation.} \textbf{a}, Weighted FID across inference budgets and resource levels, relative to full data, width 96 and full computation. \textbf{b}, Paired MassTV decrease at 16 evaluations, I-CFM minus G-MPCF. \textbf{c}, Validation kernel Inception distance (KID) for Gaussian and probabilistic principal-component analysis (PPCA) mixture sources at full displacement. Both axes in \textbf{a,c} are logarithmic. \textbf{d}, Mean validation KID across 122 candidates in logarithmic bins; the dashed line marks selected Gaussian-source G-MPCF at 16 evaluations. \textbf{e,f}, Test FID and MassTV under the selected source and budget; vertical axes start at 21 and 0.045. \textbf{a--c,e,f} show means across three paired training runs, with 10,000 images per run. Bands and error bars show standard deviations, except for standard errors of the mean in \textbf{c}.}
\label{fig:budget}
\end{figure*}

Eight image configurations varied training-data fraction, U-Net width or computation around the full-data, width-96 reference, with three paired training runs per configuration. At 16 evaluations, G-MPCF improved FID in 23 of 24 pairs and corrected MassTV in 21. Figure~\ref{fig:budget}a,b shows fidelity across five inference budgets and paired MassTV reductions at 16 evaluations. The gains extended across both data and model scales. With half of the training data, FID fell from 32.01 to 27.14 and MassTV from 0.17529 to 0.09534. At base width 64, FID fell from 27.62 to 24.25; at width 128, it fell from 20.79 to 19.18. At half the reference computation, mean FID improved from 29.09 to 27.39 and MassTV from 0.10314 to 0.06550. With one-quarter computation, both FIDs were near 224 and MassTV was 0.71993 for I-CFM and 0.73777 for G-MPCF.

Reducing the inference budget of the reference model also preserved a fidelity advantage. At four evaluations, G-MPCF reached FID 49.75 compared with 69.51 for I-CFM, a 28.4\% reduction. At eight evaluations, the values were 27.57 and 33.87; at 16, they were 21.74 and 24.22. This comparison changes the cost of sampling from already trained fields. Together with the training-resource interventions, it shows how the observed image gains vary when computation is changed before or after fitting the model.

We selected a source and sampler from 122 validation candidates spanning two sources, three couplings, four displacement fractions and five inference budgets, together with source-retention controls. The rule screened class-mass error, compared validation KID and selected the smallest admissible NFE. At full displacement and 16 evaluations, G-MPCF gave the lowest validation KID for the Gaussian source, and OT-CFM performed best within the fitted mixture source in Fig.~\ref{fig:budget}c. Averaging over three training runs selected Gaussian-source G-MPCF with full displacement and 16 evaluations in Fig.~\ref{fig:budget}d. Its test FID was 21.78 and MassTV 0.05489, compared with 24.36 and 0.09765 for I-CFM and 23.87 and 0.09197 for OT-CFM under the same source and budget in Fig.~\ref{fig:budget}e,f. \suppref{Fig.~S7} shows image sequences from shared Gaussian inputs across seven budgets from 1 to 64 Euler evaluations under these three couplings. The validation procedure therefore retained the full Gaussian-source flow for this image task, and the independent test comparison reproduced its allocation and fidelity advantages over the two controls. As the evaluated coupling candidates included independent pairing, minibatch optimal transport and G-MPCF, this selection identifies the preferred action within the considered search space, while the incremental contribution of rank matching relative to mass-only coupling is examined separately in the component interventions.

\section{Discussion}
We developed G-MPCF, a coupling approach that incorporates regional probability calibration into flow matching. The method matches Gaussian source regions to weighted data regions according to their probability masses, preserving the prescribed target distribution while reorganizing the velocity supervision used for training. On the rare-mode benchmark, the analytic coupling reduces relative score error by 95.0\% and numerical integration error by 89.9\% compared with independent flow matching. On CIFAR-10 with a 100:1 class-probability ratio, G-MPCF reduces corrected class-mass error by 44.6\% relative to minibatch optimal transport and improves mean image fidelity under matched computational budgets. G-MPCF uses the standard flow-matching objective, allowing regional calibration to be incorporated into existing training pipelines without changing the generative architecture.

Extending G-MPCF to scientific data will require representations that distinguish the states whose populations are relevant to the application. In the image experiments, pretr ained features define the regions and observation weights determine their probabilities. For molecular and materials systems, conformational coordinates or structural order parameters could provide a physical basis for defining these regions. Relative state populations obtained from reference simulations or experiments could then inform the target probabilities. Learning representations that resolve transitions between these states, and adapting the partition as additional observations become available, would be useful for applying the method to more complex systems. Independent estimates of generated state populations would allow regional calibration to be evaluated together with structural fidelity and coverage.

With regional probabilities specified, the remaining freedom in the coupling can be used to organize the geometry of training pairs. The image ablations support this separation, with calibrated sampling accounting for the principal allocation and fidelity gains and local pairing affecting coverage. Matching rules that incorporate physical symmetries or appropriate transport costs could further adapt the construction to scientific data. Conditional extensions could represent changes in state populations with temperature, pressure or external forcing, providing a route to generating ensembles across physical conditions. We anticipate that regional probability matching will be useful for scientific sampling tasks in which structural accuracy and relative populations jointly determine the quantities of interest, including molecular ensembles, competing material states and rare configurations in complex dynamical systems.

\section{Methods}
\subsection{Mass-calibrated coupling}

Training images resized to 224 pixels are represented by 512-dimensional penultimate features of an ImageNet-pretrained ResNet-18\cite{he2016deep,deng2009imagenet}. Features are averaged over each image and its horizontal reflection, then standardized with target-weighted training means and standard deviations. Weighted deterministic $k$-means\cite{lloyd1982least} produces 20 non-empty regions, ordered along a sign-anchored principal axis of their centroids. The representation and regions remain fixed during flow training.

For a unit source direction $a_0$ and standard-normal cumulative distribution function $\Phi$, define
\begin{equation}
\begin{aligned}
b_j&=\Phi^{-1}\!\left(\sum_{\ell=1}^j p_\ell\right),\\
J(z)=j&\quad\Longleftrightarrow\quad
       b_{j-1}<a_0^\top z\leq b_j,
\end{aligned}
\label{eq:quantiles}
\end{equation}
with $b_0=-\infty$ and $b_m=\infty$. Then $\Pr(J=j)=\Phi(b_j)-\Phi(b_{j-1})=p_j$. Targets are sampled with replacement within region $j$ with probabilities $w_i/p_j$. Eight weighted principal-component directions define target scores through within-region empirical Gaussian quantiles. A fixed orthonormal source frame supplies $a_0$ and eight geometry directions. The geometry direction cycles across updates while $a_0$ remains fixed.

Stable sorting pairs the source and target scores within each region.
The prescribed target marginal is already established by calibrated conditional sampling before this permutation; rank matching subsequently reorganizes the source--target assignments while preserving all sampled target-index multiplicities.
If $I_b$ and $I_b'$ are target indices before and after sorting in a batch of size $B$, and $K$ is an independent uniform batch position, then
\begin{equation}
\begin{aligned}
\sum_{b=1}^B\mathbf1\{I'_b=i\}
&=\sum_{b=1}^B\mathbf1\{I_b=i\},\\
\Pr(I'_K=i)
&=\E\!\left[\frac1B\sum_{b=1}^B\mathbf1\{I'_b=i\}\right]
=w_i.
\end{aligned}
\label{eq:batch-marginal}
\end{equation}
The unchanged source at $K$ retains law $Q$. Independently flipped images supply the velocity targets. \suppref{Sections~1 and 4.4} give the marginal and augmentation proofs and the numerical-precision and cost analysis.

\subsection{Flow dynamics and endpoint error}

At fixed time, the residual $U-u_\Pi(X_t,t)$ has zero conditional mean given $X_t$, yielding the loss decomposition in equation~\ref{eq:risk}. Write $\mu_t=\Law(X_t)$ and $u_t=u_\Pi(\cdot,t)$. Finite second moments give, for a smooth compactly supported test function $\varphi$,
\begin{equation}
\begin{aligned}
\frac{d}{dt}\E[\varphi(X_t)]
&=\E[\nabla\varphi(X_t)^\top U]\\
&=\int\nabla\varphi(x)^\top u_t(x)\,d\mu_t(x).
\end{aligned}
\label{eq:continuity}
\end{equation}
Under uniqueness of the flow and continuity equation, $\dot Y_t=u_t(Y_t)$ with $Y_0\sim Q$ has law $\mu_t$.

For $0<T\leq1$, assume spatial Lipschitz constants $L$ and $\widehat L$ for $u_t$ and the learned field $\widehat u_t=v_{\widehat\theta}(\cdot,t)$. Couple their trajectories through a common initial draw. Set $e_t=\widehat u_t-u_t$, $\widehat\mu_t=\Law(\widehat X_t)$ and $d(t)=\|\widehat X_t-Y_t\|_{L^2}$. Subtracting the flow equations and applying Gr\"onwall's inequality gives
\begin{equation}
\begin{aligned}
d(t)&\leq\int_0^t\|e_s\|_{L^2(\widehat\mu_s)}\,ds
         +L\int_0^t d(s)\,ds,\\
d(T)&\leq\int_0^T e^{L(T-t)}
                  \|e_t\|_{L^2(\widehat\mu_t)}\,dt.
\end{aligned}
\label{eq:propagation}
\end{equation}
Assume $\widehat\mu_t\ll\mu_t$, with density ratio $r_t=d\widehat\mu_t/d\mu_t$. Define $A_t=\E_{\mu_t}\|e_t\|^2$, $B_t=\E_{\mu_t}\|e_t\|^4$ and $\chi_t^2=\E_{\mu_t}(r_t-1)^2$, with finite moments and integrable bounds on $[0,T]$. Changing measure and applying Cauchy--Schwarz yields
\begin{equation}
\begin{aligned}
\|e_t\|_{L^2(\widehat\mu_t)}^2
&=\int r_t\|e_t\|^2\,d\mu_t\\
&=A_t+\int(r_t-1)\|e_t\|^2\,d\mu_t\\
&\leq A_t+\sqrt{\chi_t^2B_t}.
\end{aligned}
\label{eq:rollout-error}
\end{equation}
For a learned field continuously differentiable along its flow, let $a_t=\partial_t\widehat u_t+D\widehat u_t\,\widehat u_t$ and $M_a=\sup_{t\leq T}\|a_t(\widehat X_t)\|_{L^2}<\infty$. The Euler local remainder is bounded by $h^2M_a/2$ for $h=T/k$\cite{hairer1993solving}. With Euler states $V_j$ at $t_j=jh$ and $\delta_j=\|\widehat X_{t_j}-V_j\|_{L^2}$,
\begin{equation}
\begin{aligned}
\delta_{j+1}&\leq(1+h\widehat L)\delta_j+\tfrac12h^2M_a,
\quad \delta_0=0,\\
\delta_k&\leq\frac{h^2M_a}{2}\sum_{j=0}^{k-1}(1+h\widehat L)^j\\
&\leq\frac{hM_a}{2}\frac{e^{\widehat L T}-1}{\widehat L}.
\end{aligned}
\label{eq:euler-error}
\end{equation}
The ratio is interpreted as $T$ when $\widehat L=0$. The interpolation gives $W_2(\mu_T,P)\leq(1-T)(\E_\Pi\|X-Z\|^2)^{1/2}$. Combining these bounds by the triangle inequality yields equation~\ref{eq:endpoint}. \suppref{Theorem~S1} gives the complete derivation and terminal-limit conditions.

\subsection{Low-dimensional experiments}

The two-dimensional benchmarks use the rare12 target and its equally weighted eight-mode counterpart, gaussian8. Both are centred and whitened, and are paired with a standard Gaussian source through independent sampling or the analytic polar construction. Each method uses the same 9,218-parameter network, trained for 3,500 updates with batch size 512 across 32 paired runs per target. Relative score and velocity MSE use the exact conditional fields. Endpoints are evaluated at Euler NFE 1, 2, 4, 8, 16 and 32. Exact two-sided paired tests and 50,000 paired bootstrap resamples supply the probabilities and intervals reported for this comparison. The constructions and reference fields are derived in \suppref{Section~3}. The six-method endpoint comparison uses five independent runs per method and target, with 3,500 optimizer updates per method. OFM is a direct map using one effective evaluation. \suppref{Section~8.4} gives the model-specific settings and complete budget comparison.

The transfer study uses source $\mathcal N(\delta e_2,I_d)$ and target $p\mathcal N(-(1-p)se_1,0.35^2I_d)+(1-p)\mathcal N(pse_1,0.35^2I_d)$. A maximin design selects 36 combinations of dimension, rare probability, separation and displacement, with three paired runs per condition. Methods share target-component counts, source draws and conditional target noise. Each width-64 residual network is trained for 5,000 updates and generates 5,000 points at Euler NFE 1, 2, 4, 8 and 16. The diagnostic reference uses 64 Heun steps, or 128 field evaluations. Endpoint SWD uses root-mean-square projected discrepancies for the two-dimensional targets and mean absolute projected discrepancies for this transfer study.

Field gain is the relative reduction in displacement-regression loss on each method's interpolation law. Rollout penalty is the excess loss on learned trajectories with the same displacement labels; solver error is Euler-to-Heun endpoint MSE. Ordinary least squares relates condition-mean endpoint improvement to field gain and absolute reductions in rollout and solver errors. \suppref{Sections~6.1 and 9} give the full design, diagnostic definitions and prediction procedure.

\subsection{Image training and evaluation}

CIFAR-10\cite{krizhevsky2009learning} is divided into 45,000 training and 5,000 validation images with class stratification; its official 10,000-image test set supplies final evaluation. Balanced targets use class probability 0.1. Long-tail targets use $p_c=100^{-c/9}/\sum_{\ell=0}^{9}100^{-\ell/9}$ and row weights $w_i=p_{y_i}/n_{y_i}$. Pixels are scaled to $[-1,1]$, with independent horizontal flips of probability 0.5 during training.

All image methods share a residual U-Net\cite{ronneberger2015unet} with base width 96, GroupNorm\cite{wu2018group}, attention and 20,118,915 parameters. AdamW\cite{loshchilov2019decoupled} uses learning rate $2\times10^{-4}$ and batch size 256. Equation~\ref{eq:objective} is averaged over image coordinates and uniformly stratified times; evaluation uses exponential-moving-average weights with decay 0.9999. I-CFM, minibatch OT-CFM, two-stage rectified flow, DGSWP-CFM, MAC and G-MPCF receive a common envelope of $5.425\times10^{17}$ FLOPs within 1\%. Charges include method-specific pairing, auxiliary models and teacher computation. The shared dataset and evaluator preparation are excluded. \suppref{Tables~S6--S7} give the architecture and training schedules.

Generation integrates $\dot x_t=v_{\widehat\theta}(x_t,t)$ from a fresh source draw at $t=0$ to $t=1$ with uniform-step Euler. Primary endpoints use 50,000 images per model and 16 evaluations, with common initial noise within paired runs. Weighted FID uses class-weighted real Inception moments\cite{heusel2017gans}; KID uses a cubic polynomial kernel\cite{gretton2012kernel,binkowski2018demystifying}. Preprocessing and feature extraction are identical across methods\cite{parmar2022aliased}. Minority precision and recall average three-neighbour feature-manifold measures\cite{kynkaanniemi2019improved} for classes 7--9, the three least probable classes in WLT100. The balanced comparison uses the same class subset. Overall support measures use balanced real samples.

A native-resolution ResNet-18, trained on the training split and selected on validation, estimates generated class frequencies. Let $C_{ab}$ be its validation probability of predicting class $a$ given true class $b$, and let $q_{\rm pred}$ contain predicted frequencies in generated images. Simplex-constrained least squares gives
\begin{equation}
\begin{aligned}
\widehat p&=\arg\min_{\substack{p\geq0\\\mathbf1^\top p=1}}
                       \|Cp-q_{\rm pred}\|^2,\\
\operatorname{MassTV}&=\tfrac12\|\widehat p-p_{\rm target}\|_1.
\end{aligned}
\label{eq:masstv}
\end{equation}
This observation model follows class-probability estimation under label shift\cite{lipton2018detecting}. Class-frequency plots use raw predicted probabilities or their ratios to the target law; MassTV uses the corrected probabilities. Feature maps use 50 principal components of the evaluator's 512-dimensional representations and t-SNE\cite{vandermaaten2008visualizing}, fitted jointly across both target laws within each comparison. Enlargements share target-defined bounds for the displayed class. Unit-circle projections summarize directions about the common reference-map centre. \suppref{Sections~4.1 and 4.5} give evaluator calibration and feature-map settings.

The primary comparison uses eight independent pairs of training runs with shared random-number streams across methods. Benefits are comparator minus G-MPCF for errors and G-MPCF minus comparator for recall. One-sided paired tests enumerate all $2^8$ sign patterns, with separate Holm corrections for rare-event superiority, FID non-inferiority and rare-event non-inferiority\cite{holm1979simple}. Percentile intervals resample complete pairs 100,000 times\cite{efron1994introduction}. A component passes when its adjusted $P<0.05$ and its two-sided 95\% interval lies above zero. FID non-inferiority uses margin $\max(1,0.05\overline{\mathrm{FID}}_B)$ against comparator $B$; MassTV and recall margins are 0.01 and 0.02. Non-inferiority tests add the corresponding margin to each paired benefit. \suppref{Table~S4} reports the component tests and joint criteria. Balanced transfer uses three paired runs and descriptive summaries; the component, scaling and selection experiments use separate sets of three paired runs.

\subsection{Ablations and model selection}

Five coupling operations isolate regional calibration and rank ordering. Mass relaxation adds $\rho=0.25,0.5,0.75$ between calibrated and uniform regional probabilities, sharing the endpoint models with the component experiment. Three runs at each of the eight distinct settings give 24 trained models, with initializations independent of the primary comparison.

Scaling varies training-data fraction, model width or computation around the full-data, width-96 reference. Eight distinct configurations each use three paired runs of I-CFM and G-MPCF, with 10,000 images per run at Euler NFE 1, 2, 4, 8 and 16. The settings and resource measurements are specified in \suppref{Tables~S12--S13}.

Validation compares a standard Gaussian source with a fitted 20-component, rank-64 probabilistic principal-component mixture\cite{tipping1999mixtures}. For the numerical endpoint $T_{\theta,k}(z)$ at NFE $k$, the action $A_{\lambda,k}(z)=z+\lambda[T_{\theta,k}(z)-z]$ retains fraction $\lambda$ of the full displacement. Two sources, independent, OT-CFM and G-MPCF pairing, $\lambda\in\{0.25,0.5,0.75,1\}$ and $k\in\{1,2,4,8,16\}$, plus two source-retention controls, give 122 candidates. Using the three-run means, the rule admits actions within 0.01 of the best source-retention MassTV, retains those within one standard error of the minimum validation KID and selects the smallest NFE with fixed tie breaks. Each candidate uses 10,000 images per run over three runs. The selected action is evaluated on test data, followed by the full action grid. \suppref{Section~7} specifies the candidates and selection rule. Mean-based summaries use sample standard deviations across training runs; validation KID uses the standard error of the mean.

\section*{Data availability}
CIFAR-10 is available from its original distribution at \url{https://www.cs.toronto.edu/~kriz/cifar.html}\cite{krizhevsky2009learning}. Kuzushiji-MNIST (KMNIST) and UCI Human Activity Recognition (UCI HAR), used in the feature-space experiments, are available from \url{https://github.com/rois-codh/kmnist}\cite{clanuwat2018classical} and \url{https://doi.org/10.24432/C54S4K}\cite{reyes2013har}, respectively. The source data supporting the figures and tables, together with trained models, are available on Zenodo at \url{https://doi.org/10.5281/zenodo.23277850}\cite{huang2026gmpcfdata}.
\section*{Code availability}
The G-MPCF codebase is available as an open-source repository at \url{https://github.com/ybfo/g-mpcf} for ongoing development.
\section*{Acknowledgements}
This research was supported by the National Key R\&D Program of China (No.~2024YFB2606101), the National Natural Science Foundation of China (No.~52478418), and the Sichuan Science and Technology Program (No.~2025ZNSFSC1294).
\section*{Competing interests}
The authors declare no competing interests.
\setlength{\bibsep}{0.9em}

\clearpage
\onecolumn
\newgeometry{textwidth=166mm,textheight=255mm,centering,bindingoffset=0mm}

\allowdisplaybreaks[1]
\makeatletter
\setlength{\@fptop}{0pt}
\def\ps@supplementlandscape{\let\@oddhead\@empty\let\@evenhead\@empty\def\@oddfoot{\hfil\thepage\hfil}\let\@evenfoot\@oddfoot}
\setlength{\@fpsep}{20pt}
\setlength{\@fpbot}{0pt plus 1fil}
\makeatother
\raggedbottom
\renewcommand{\topfraction}{.95}
\renewcommand{\bottomfraction}{.95}
\renewcommand{\textfraction}{.04}
\renewcommand{\floatpagefraction}{.78}
\setcounter{topnumber}{3}
\setcounter{bottomnumber}{3}
\setcounter{totalnumber}{5}
\setlength{\textfloatsep}{10pt plus 2pt minus 2pt}
\setlength{\floatsep}{10pt plus 2pt minus 2pt}
\setlength{\intextsep}{10pt plus 2pt minus 2pt}
\renewcommand{\figurecaptionfont}{\normalfont\fontsize{9}{11}\selectfont}
\renewcommand{\tablecaptionfont}{\normalfont\fontsize{9}{11}\selectfont}
\setlength{\LTcapwidth}{\textwidth}
\setlength{\LTpre}{8pt}
\setlength{\LTpost}{10pt}
\setlength{\bibsep}{2pt}

\setcounter{section}{0}
\setcounter{subsection}{0}
\setcounter{figure}{0}
\setcounter{table}{0}
\setcounter{equation}{0}
\setcounter{secnumdepth}{2}
\setcounter{tocdepth}{1}
\renewcommand{\thefigure}{S\arabic{figure}}
\renewcommand{\thetable}{S\arabic{table}}
\renewcommand{\theequation}{S\arabic{equation}}
\renewcommand{\theHfigure}{supp.\arabic{figure}}
\renewcommand{\theHtable}{supp.\arabic{table}}
\renewcommand{\theHequation}{supp.\arabic{equation}}
\renewcommand{\theHsection}{supp.\arabic{section}}
\renewcommand{\theHsubsection}{supp.\arabic{section}.\arabic{subsection}}

\newcommand{\N}{\mathcal N}
\newtheoremstyle{suppstatement}{8pt}{6pt}{\normalfont}{0pt}{\bfseries}{.}{.5em}
 {\thmname{#1}\thmnumber{ #2}\thmnote{. #3}}
\theoremstyle{suppstatement}
\newtheorem{suppproposition}{Proposition}
\newtheorem{suppcorollary}{Corollary}
\newtheorem{supplemma}{Lemma}
\newtheorem{supptheorem}{Theorem}
\renewcommand{\thesuppproposition}{S\arabic{suppproposition}}
\renewcommand{\thesuppcorollary}{S\arabic{suppcorollary}}
\renewcommand{\thesupplemma}{S\arabic{supplemma}}
\renewcommand{\thesupptheorem}{S\arabic{supptheorem}}
\newenvironment{suppproof}{\begin{proof}[Proof.]\normalsize}{\end{proof}}
\BeforeBeginEnvironment{suppproposition}{\Needspace{7\baselineskip}}
\BeforeBeginEnvironment{suppcorollary}{\Needspace{7\baselineskip}}
\BeforeBeginEnvironment{supplemma}{\Needspace{7\baselineskip}}
\BeforeBeginEnvironment{supptheorem}{\Needspace{7\baselineskip}}

\vspace*{26mm}
\begin{center}
{\fontsize{13}{16}\selectfont\textit{Supplementary Information} for Calibrating Generative Flows\\[3pt]
through Regional Probability Matching\par}
\vspace{14pt}
{\normalsize Yubo Huang$^{1}$, Xiongyu Hu$^{1,*}$, Xin Lai$^{1}$, Zixi Wang$^{2}$,\\
Liao Jian$^{3}$, Jingzehua Xu$^{4,*}$, Chuan He$^{1}$\par}
\vspace{9pt}
{\small $^{1}$Key Laboratory of Transportation Tunnel Engineering, Ministry of Education, Southwest Jiaotong University, Chengdu, China.\par}
{\small $^{2}$School of Information and Software Engineering, University of Electronic Science and Technology of China, Chengdu, China.\par}
{\small $^{3}$School of Civil Engineering and Geomatics, Southwest Petroleum University, China.\par}
{\small $^{4}$Department of Mechanical Engineering, The University of Hong Kong, Hong Kong, China.\par}
\vspace{8pt}
{\small $^*$Correspondence to Xiongyu Hu or Jingzehua Xu.\\E-mails: \texttt{\color{blue}huxiongyu@swjtu.edu.cn}; \texttt{\color{blue}u3014074@connect.hku.hk}.\par}
\end{center}
\vspace{16pt}

\section*{Contents}
\makeatletter\@starttoc{sitoc}\makeatother
\clearpage

\let\supplementOriginalSection\section
\renewcommand{\section}[1]{%
 \Needspace{6\baselineskip}\supplementOriginalSection{#1}%
 \addcontentsline{sitoc}{section}{\protect\numberline{\thesection}#1}%
}
\let\supplementOriginalSubsection\subsection
\renewcommand{\subsection}[1]{\Needspace{5\baselineskip}\supplementOriginalSubsection{#1}}

\section{Mass-calibrated coupling}
\label{sec:coupling-proofs}
\subsection{Calibrated sampling and rank matching}

Fix observations $x_1,\ldots,x_N\in\R^D$, positive normalized weights $w_i$ and a partition $C_1,\ldots,C_m$ of their indices. Define the index law, target law and regional probabilities by
\begin{equation}
 P_I(i)=w_i,\qquad P_X=\sum_{i=1}^Nw_i\delta_{x_i},\qquad
 p_j=\sum_{i\in C_j}w_i>0,\qquad \sum_{j=1}^mp_j=1.
\end{equation}
Let $Q=\N(0,I_D)$, let $a_0$ be a fixed unit vector and let $\Phi$ be the standard-normal cumulative distribution function. Gaussian quantiles partition the source into regions with the prescribed probabilities,
\begin{equation}
\begin{aligned}
 b_0&=-\infty,\quad b_m=\infty,\quad
 b_j=\Phi^{-1}\!\left(\sum_{\ell=1}^jp_\ell\right),\quad j=1,\ldots,m-1,\\
 J(z)=j&\quad\Longleftrightarrow\quad b_{j-1}<a_0^\top z\leq b_j.
\end{aligned}
\end{equation}
For $Z\sim Q$, sample a target index from the corresponding region with
\begin{equation}
 \Pr(I=i\mid Z=z)=\frac{w_i}{p_{J(z)}}\mathbf1\{i\in C_{J(z)}\}.
 \label{eq:conditional-index}
\end{equation}

For a batch of independent source draws $Z_1,\ldots,Z_B$, write $B_j=\{b:J(Z_b)=j\}$ and $n_j=|B_j|$. Draw $I_b$ from this conditional law at each position. Within region $j$, sort the source and target scores and match equal ranks. The resulting permutation $\sigma_j:B_j\to B_j$ gives $I'_b=I_{\sigma_j(b)}$. Stable sorting resolves ties and repeated target indices; empty and singleton regions require no reassignment. Write $\Gamma(Q,P_X)$ for the joint laws with marginals $Q$ and $P_X$.

\begin{suppproposition}[Marginals of the calibrated batch coupling]
\label{prop:population-marginals}\label{prop:batch-marginals}
The population sampler satisfies $\Pr(J(Z)=j)=p_j$ and $\Pr(I=i)=w_i$. Rank matching preserves every realized target-index multiplicity. If $K$ is uniform on $\{1,\ldots,B\}$ and independent of the batch, then $\Pi_B=\Law(Z_K,x_{I'_K})\in\Gamma(Q,P_X)$.
\end{suppproposition}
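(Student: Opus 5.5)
The plan is to prove the three assertions in the order stated: first the population identities, then a pathwise (deterministic) invariance of multiplicities under rank matching, and finally the batch-coupling marginals by conditioning on the batch and averaging over the independent position $K$.

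\textbf{Population identities.} Since $Q=\N(0,I_D)$ and $\|a_0\|=1$, the projection $a_0^\top Z$ is standard normal. Because $\Phi$ is continuous and strictly increasing, with $\Phi(b_0)=0$ and $\Phi(b_m)=1$, we get
\[
\Pr(J(Z)=j)=\Phi(b_j)-\Phi(b_{j-1})=\sum_{\ell\leq j}p_\ell-\sum_{\ell\leq j-1}p_\ell=p_j .
\]
For the index law, condition on $J$ using equation~\eqref{eq:conditional-index}. Each $i$ lies in exactly one $C_j$, so
\[
\Pr(I=i)=\sum_j\Pr(J=j)\,\frac{w_i}{p_j}\mathbf1\{i\in C_j\}=w_i .
\]

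\textbf{Multiplicities.} Fix any realization of the batch $(Z_b,I_b)_{b\leq B}$. The sets $B_j$ partition $\{1,\ldots,B\}$, and each $\sigma_j$ is a bijection of $B_j$; this holds trivially for empty or singleton regions. Stable sorting makes $\sigma_j$ a well-defined measurable function of the scores even with ties or repeated indices. Reindexing by $b\mapsto\sigma_j(b)$ then gives
\[
\sum_{b}\mathbf1\{I'_b=i\}=\sum_j\sum_{b\in B_j}\mathbf1\{I_{\sigma_j(b)}=i\}=\sum_j\sum_{b\in B_j}\mathbf1\{I_b=i\},
\]
which is the first identity in equation~\eqref{eq:batch-marginal}. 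The argument is purely combinatorial and holds pathwise. This matters because the permutation depends on the random scores, so no distributional argument about $\sigma_j$ is needed.

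\textbf{Batch marginals.} The source coordinate is never permuted. Since $K$ is independent of the i.i.d.\ draws $Z_b\sim Q$,
\[
\Pr(Z_K\in A)=\frac1B\sum_b\Pr(Z_b\in A)=Q(A).
\]
For the target coordinate, condition on the batch and use the independence of $K$, then apply the pathwise identity:
\[
\Pr(I'_K=i)=\E\Big[\frac1B\sum_b\mathbf1\{I'_b=i\}\Big]=\frac1B\sum_b\Pr(I_b=i).
\]
Each pair $(Z_b,I_b)$ has the population law, so the population step gives $\Pr(I_b=i)=w_i$. Hence $x_{I'_K}$ has law $\sum_i w_i\delta_{x_i}=P_X$, and $\Pi_B\in\Gamma(Q,P_X)$.

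\textbf{Main obstacle.} The delicate point is that the joint law of $(Z_K,I'_K)$ is genuinely altered by sorting, and $I'_K$ is not a simple function of $Z_K$ alone. The argument must therefore avoid claiming anything about conditional laws after permutation. It should pass only through the exchange ``expectation of the empirical count over $K$'', together with the pathwise invariance of counts. It should also note that the identity $\Pr(I'_K=i)=\E[\text{count}/B]$ uses only the independence of $K$ from the batch.
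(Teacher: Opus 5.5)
Your proposal is correct and follows the paper's proof essentially step by step: the quantile computation for $\Pr(J(Z)=j)$, the pathwise reindexing showing that each within-region permutation $\sigma_j$ leaves every target-index count unchanged, and the uniform-position identity $\Pr(I'_K=i)=\E\bigl[\tfrac1B\sum_b\mathbf1\{I'_b=i\}\bigr]$ together with $\Pr(Z_K\in A)=Q(A)$. The only cosmetic difference is how you evaluate the expected count. You use the per-position marginals $\Pr(I_b=i)=w_i$ directly, whereas the paper conditions on the source batch to obtain $\E[N'_i\mid Z_1,\ldots,Z_B]=n_jw_i/p_j$ and then uses $\E[n_j]=Bp_j$.
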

\begin{suppproof}
Since $a_0^\top Z\sim\N(0,1)$, the quantile construction gives
\begin{equation}
\begin{aligned}
 \Pr(J(Z)=j)
 &=\Phi(b_j)-\Phi(b_{j-1})\\
 &=\sum_{\ell=1}^jp_\ell-\sum_{\ell=1}^{j-1}p_\ell=p_j,\\
 \Pr(I=i)
 &=\sum_{\ell=1}^m\Pr(J(Z)=\ell)\Pr(I=i\mid J(Z)=\ell)
 =p_j\frac{w_i}{p_j}=w_i,\qquad i\in C_j.
\end{aligned}
\end{equation}

Let $N_i$ and $N'_i$ count index $i$ before and after matching. Because $\sigma_j$ is a permutation of the positions in $B_j$,
\begin{equation}
\begin{aligned}
 N'_i
 &=\sum_{b\in B_j}\mathbf1\{I_{\sigma_j(b)}=i\}
 =\sum_{b\in B_j}\mathbf1\{I_b=i\}=N_i,\qquad i\in C_j,\\
 \sum_{i\in C_j}N'_i&=\sum_{i\in C_j}N_i=n_j.
\end{aligned}
\label{eq:multiset}
\end{equation}
Conditioning first on the full source batch yields
\begin{equation}
\begin{aligned}
 \E[N'_i\mid Z_1,\ldots,Z_B]
 &=\sum_{b\in B_j}\Pr(I_b=i\mid Z_1,\ldots,Z_B)
 =n_j\frac{w_i}{p_j},\\
 \E[N'_i]&=\frac{w_i}{p_j}\E[n_j]
 =\frac{w_i}{p_j}\sum_{b=1}^B\Pr(J(Z_b)=j)=Bw_i.
\end{aligned}
\end{equation}
The independent uniform position therefore has the required target law. For every measurable source set $A$,
\begin{equation}
\begin{aligned}
 \Pr(I'_K=i)&=\E\!\left[\frac1B\sum_{b=1}^B\mathbf1\{I'_b=i\}\right]
 =\frac{\E[N'_i]}{B}=w_i,\\
 \Pr(Z_K\in A)&=\frac1B\sum_{b=1}^B\Pr(Z_b\in A)=Q(A).
\end{aligned}
\end{equation}
Consequently, $\Law(x_{I'_K})=P_X$ and $\Pi_B\in\Gamma(Q,P_X)$. For an integrable pair loss $\ell$, the same uniform-position identity gives
\begin{equation}
 \E\!\left[\frac1B\sum_{b=1}^B\ell(Z_b,x_{I'_b})\right]
 =\E[\ell(Z_K,x_{I'_K})]
 =\int\ell(z,x)\,d\Pi_B(z,x).
\end{equation}
Thus the expected minibatch objective is the flow-matching risk for the calibrated coupling.
\end{suppproof}

\subsection{Mass perturbations and augmentation}

\begin{suppcorollary}[Regional probability perturbations]
\label{cor:mass-relaxation}
If the source regions have probabilities $q_j$ and retain the conditional target weights $w_i/p_j$, the induced index law is $P_{I,q}(i)=q_jw_i/p_j$ for $i\in C_j$, and $\TV(P_{I,q},P_I)=\TV(q,p)$. In particular, for $q_\rho=(1-\rho)p+\rho u$, $u_j=1/m$ and $0\leq\rho\leq1$,
\begin{equation}
 \TV(P_{I,q_\rho},P_I)=\rho\TV(u,p).
 \label{eq:relax}
\end{equation}
\end{suppcorollary}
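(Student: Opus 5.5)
The plan is to repeat the computation in Proposition~\ref{prop:population-marginals}, replacing the source-region probabilities $p_j$ by $q_j$. First, the induced index law. Take a source draw whose region index satisfies $\Pr(J=j)=q_j$, with $q_j\geq0$ and $\sum_jq_j=1$; for the Gaussian construction this means the quantile thresholds are built from $q$ rather than $p$. The conditional law~\eqref{eq:conditional-index} is kept, so $\Pr(I=i\mid J=j)=w_i/p_j$ for $i\in C_j$ and zero otherwise. Because the $C_j$ partition the indices, the law of total probability has a single nonzero term for each $i$. This gives $P_{I,q}(i)=q_jw_i/p_j$ for $i\in C_j$. Summing over $i\in C_j$ gives $q_j$, so $P_{I,q}$ is a probability law. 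If the batch version is wanted, the multiplicity argument of \eqref{eq:multiset} carries over unchanged with $q_j$ in place of $p_j$, so rank matching does not affect the law.

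Second, the total variation identity. Write
\[
\TV(P_{I,q},P_I)=\tfrac12\sum_j\sum_{i\in C_j}\left|\frac{q_j}{p_j}-1\right|w_i .
\]
For fixed $j$ the factor $|q_j/p_j-1|$ does not depend on $i$, and $\sum_{i\in C_j}w_i=p_j>0$. The inner sum therefore collapses to $|q_j-p_j|$, which yields $\TV(P_{I,q},P_I)=\tfrac12\sum_j|q_j-p_j|=\TV(q,p)$. The assumption $p_j>0$ is what allows the division, and it holds because every region is non-empty and all weights are positive.

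Third, the relaxation. Substituting $q_\rho=(1-\rho)p+\rho u$ gives $q_{\rho,j}-p_j=\rho(u_j-p_j)$. Since $\rho\geq0$, we get $\tfrac12\sum_j|q_{\rho,j}-p_j|=\rho\,\tfrac12\sum_j|u_j-p_j|=\rho\TV(u,p)$. It remains to note that $q_\rho$ is a valid probability vector, being a convex combination of two such vectors, so the first part applies to it. There is no real obstacle here. The only delicate point is bookkeeping: the conditional weights are normalised by $p_j$ and not by $q_j$, which is exactly why the per-region factor can be pulled out of the inner sum.
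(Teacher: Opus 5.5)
Your proposal is correct and follows the paper's proof essentially step for step. Total probability gives $P_{I,q}(i)=q_jw_i/p_j$, the factor $|q_j/p_j-1|$ comes out of the inner sum via $\sum_{i\in C_j}w_i=p_j$, and substituting $q_{\rho,j}-p_j=\rho(u_j-p_j)$ gives the relaxation identity; the paper only adds the side remark that $\TV(P_{X,q},P_X)\leq\TV(P_{I,q},P_I)$, with equality for distinct $x_i$, which the stated claim does not need.
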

\begin{suppproof}
The total-probability calculation above gives $P_{I,q}(i)=q_jw_i/p_j$. Summing within each region gives
\begin{equation}
\begin{aligned}
 \TV(P_{I,q},P_I)
 &=\frac12\sum_{j=1}^m\sum_{i\in C_j}
       \left|\frac{q_jw_i}{p_j}-w_i\right|\\
 &=\frac12\sum_{j=1}^m\frac{|q_j-p_j|}{p_j}\sum_{i\in C_j}w_i
 =\frac12\sum_{j=1}^m|q_j-p_j|=\TV(q,p),\\
 \TV(P_{I,q_\rho},P_I)
 &=\frac12\sum_j|\rho(u_j-p_j)|=\rho\TV(u,p).
\end{aligned}
\end{equation}
For the induced observation laws, summing any coincident observations gives $\TV(P_{X,q},P_X)\leq\TV(P_{I,q},P_I)$, with equality for distinct $x_i$.
\end{suppproof}

Independent augmentation preserves the corresponding weighted mixture. For a measurable transformation $H(x,\xi)$, draw $\xi$ independently of the batch and $K$. Proposition~\ref{prop:batch-marginals} gives, for every measurable set $A$,
\begin{equation}
\begin{aligned}
 \Pr(H(x_{I'_K},\xi)\in A)
 &=\sum_i\Pr(I'_K=i)\Pr(H(x_i,\xi)\in A\mid I'_K=i)\\
 &=\sum_iw_i\Pr(H(x_i,\xi)\in A).
\end{aligned}
\label{eq:augmentation}
\end{equation}

\section{From a regression field to a numerical endpoint}
\label{sec:endpoint-proof}
\subsection{Continuity and regression identities}

Fix $\Pi\in\Gamma(Q,P)$ with finite second moments and $(Z,X)\sim\Pi$. Write $X_t=(1-t)Z+tX$, $U=X-Z$, $\mu_t=\Law(X_t)$ and $u_t(x)=\E[U\mid X_t=x]$. We use Euclidean vector norms and the continuity and regression identities underlying flow matching\cite{lipman2023flow,tong2024improving,albergo2023stochastic}.

For $\varphi\in C_c^1(\R^D)$, the chain-rule integrand is bounded by the integrable random variable $\|\nabla\varphi\|_\infty\|U\|$. Dominated differentiation and conditional expectation give
\begin{equation}
\begin{aligned}
 \frac{d}{dt}\int\varphi(x)\,d\mu_t(x)
 &=\frac{d}{dt}\E[\varphi(X_t)]
 =\E[\nabla\varphi(X_t)^\top U]\\
 &=\E\!\left[\nabla\varphi(X_t)^\top\E(U\mid X_t)\right]
 =\int\nabla\varphi(x)^\top u_t(x)\,d\mu_t(x).
\end{aligned}
\label{eq:supp-continuity}
\end{equation}
Hence $\mu_t$ satisfies $\partial_t\mu_t+\nabla\!\cdot(\mu_tu_t)=0$ weakly, with $\mu_0=Q$ and $\mu_1=P$.

For a measurable field $f(\cdot,t)\in L^2(\mu_t)$, define the residual $R_t=U-u_t(X_t)$ and the field difference $D_t=u_t(X_t)-f(X_t,t)$. Their conditional orthogonality gives
\begin{equation}
\begin{aligned}
 \E[R_t\mid X_t]&=\E[U\mid X_t]-u_t(X_t)=0,\\
 \E[R_t^\top D_t]&=\E\!\left[\E[R_t\mid X_t]^\top D_t\right]=0,\\
 \E\|R_t\|^2&=\E\!\left[\operatorname{tr}\E[R_tR_t^\top\mid X_t]\right]
 =\E[\operatorname{tr}\Var(U\mid X_t)].
\end{aligned}
\end{equation}
Expanding the regression loss therefore yields
\begin{equation}
\begin{aligned}
 \E\|U-f(X_t,t)\|^2
 &=\E\|R_t+D_t\|^2\\
 &=\E\|R_t\|^2+2\E[R_t^\top D_t]+\E\|D_t\|^2\\
 &=\E[\operatorname{tr}\Var(U\mid X_t)]
  +\int\|u_t(x)-f(x,t)\|^2\,d\mu_t(x).
\end{aligned}
\label{eq:decomp}
\end{equation}
Integrating over the training-time distribution gives the training-risk decomposition.

\subsection{Endpoint error theorem and proof}

Fix $0<T\leq1$ and assume that $u_t$ and $\widehat u_t$ are uniformly spatially Lipschitz on $[0,T]$, with constants $L$ and $\widehat L$. Their ODEs have unique square-integrable solutions $Y_t$ and $\widehat X_t$ from a common draw $Y_0=\widehat X_0\sim Q$. Equation~\ref{eq:supp-continuity} and uniqueness give $\Law(Y_t)=\mu_t$.

Assume $\widehat\mu_t=\Law(\widehat X_t)\ll\mu_t$ for almost every $t$. Define the field error, density ratio and moments by
\begin{equation}
\begin{aligned}
 e_t&=\widehat u_t-u_t,\qquad r_t=\frac{d\widehat\mu_t}{d\mu_t},\\
 A_t&=\E_{\mu_t}\|e_t\|^2,\qquad
 B_t=\E_{\mu_t}\|e_t\|^4,\qquad
 \chi_t^2=\E_{\mu_t}(r_t-1)^2.
\end{aligned}
\end{equation}
These moments are finite, and $t\mapsto(A_t+\sqrt{\chi_t^2B_t})^{1/2}$ is integrable on $[0,T]$. Assume also that $\widehat u_t$ is continuously differentiable in space and time along its flow, with material acceleration satisfying
\begin{equation}
 a_t(x)=\partial_t\widehat u_t(x)+D\widehat u_t(x)\widehat u_t(x),\qquad
 M_a=\sup_{0\leq t\leq T}\|a_t(\widehat X_t)\|_{L^2}<\infty.
\end{equation}
The $L^2$ norm averages over the common initial draw.

\begin{supptheorem}[Error of the numerical endpoint]
\label{thm:endpoint}
Under the preceding assumptions, take $k\geq1$ Euler steps with $h=T/k$ from the common initial draw. For the resulting endpoint $\widehat X_T^{(k)}$,
\begin{equation}
\begin{aligned}
 W_2(\Law(\widehat X_T^{(k)}),P)
 &\leq\int_0^T e^{L(T-t)}
          \left(A_t+\sqrt{\chi_t^2B_t}\right)^{1/2}dt\\
 &\quad+\frac{hM_a}{2}\frac{e^{\widehat L T}-1}{\widehat L}
       +(1-T)\left(\E_\Pi\|X-Z\|^2\right)^{1/2}.
\end{aligned}
\label{eq:bound}
\end{equation}
At $\widehat L=0$, the ratio in the Euler term is defined by continuity as $T$.
\end{supptheorem}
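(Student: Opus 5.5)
The plan is to split the endpoint error into three pieces with the triangle inequality for $W_2$, passing through two intermediate laws:
\[
W_2(\Law(\widehat X_T^{(k)}),P)\le W_2(\Law(\widehat X_T^{(k)}),\widehat\mu_T)+W_2(\widehat\mu_T,\mu_T)+W_2(\mu_T,P).
\]
Each $W_2$ distance will be bounded by the $L^2$ distance of a specific coupling. For the first two pieces that coupling is the one given by the common initial draw $Y_0=\widehat X_0\sim Q$. The Euler iterates are started from that same draw.

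\emph{Terminal term.} I use the interpolant coupling $(X_T,X)$ with $X_T=(1-T)Z+TX$. Then $X-X_T=(1-T)(X-Z)$, so $W_2(\mu_T,P)\le(1-T)(\E_\Pi\|X-Z\|^2)^{1/2}$. This is the last term of the bound.

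\emph{Field-error term.} Set $d(t)=\|\widehat X_t-Y_t\|_{L^2}$. Subtracting the integral forms of the two ODEs gives
\[
\widehat X_t-Y_t=\int_0^t\bigl[e_s(\widehat X_s)+u_s(\widehat X_s)-u_s(Y_s)\bigr]\,ds .
\]
Minkowski's integral inequality and the Lipschitz bound on $u_s$ then yield $d(t)\le\int_0^t\|e_s\|_{L^2(\widehat\mu_s)}\,ds+L\int_0^t d(s)\,ds$. Gr\"onwall's inequality in integral form gives $d(T)\le\int_0^Te^{L(T-t)}\|e_t\|_{L^2(\widehat\mu_t)}\,dt$. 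This step needs measurability and integrability of $t\mapsto\|e_t\|_{L^2(\widehat\mu_t)}$, which follow from the integrability assumption once the next bound is in hand.

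For the change of measure, use $\widehat\mu_t\ll\mu_t$ to write $\|e_t\|^2_{L^2(\widehat\mu_t)}=\int r_t\|e_t\|^2\,d\mu_t$. Splitting $r_t=1+(r_t-1)$ and applying Cauchy--Schwarz to $\int(r_t-1)\|e_t\|^2\,d\mu_t$ gives the bound $A_t+\sqrt{\chi_t^2B_t}$. Taking square roots produces the first term, since $W_2(\widehat\mu_T,\mu_T)\le d(T)$.

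\emph{Euler term.} Let $V_j$ denote the Euler states and set $\delta_j=\|\widehat X_{t_j}-V_j\|_{L^2}$. I write the one-step error as
\[
\widehat X_{t_{j+1}}-V_{j+1}=\bigl(\widehat X_{t_j}-V_j\bigr)+h\bigl[\widehat u_{t_j}(\widehat X_{t_j})-\widehat u_{t_j}(V_j)\bigr]+\rho_j .
\]
The local remainder is $\rho_j=\int_{t_j}^{t_{j+1}}[\widehat u_s(\widehat X_s)-\widehat u_{t_j}(\widehat X_{t_j})]\,ds$. Because $\widehat u$ is $C^1$ along its flow, $\frac{d}{ds}\widehat u_s(\widehat X_s)=a_s(\widehat X_s)$. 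Hence $\rho_j=\int_{t_j}^{t_{j+1}}\int_{t_j}^s a_r(\widehat X_r)\,dr\,ds$, and Minkowski's inequality gives $\|\rho_j\|_{L^2}\le h^2M_a/2$. The Lipschitz property of $\widehat u$ then gives $\delta_{j+1}\le(1+h\widehat L)\delta_j+h^2M_a/2$ with $\delta_0=0$.

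Unrolling the recursion gives a geometric sum. Using $(1+h\widehat L)^k\le e^{\widehat LT}$ and $kh=T$, that sum is at most $\frac{hM_a}{2}\frac{e^{\widehat LT}-1}{\widehat L}$. When $\widehat L=0$ the sum is exactly $khM_a h/2=hM_aT/2$, matching the continuity convention. Since $W_2(\Law(\widehat X_T^{(k)}),\widehat\mu_T)\le\delta_k$, this is the middle term.

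The \textbf{main obstacle} is the rigor in the field-error step. The Gr\"onwall argument must be carried out on $L^2$ norms of random trajectories rather than pointwise. That requires justifying Minkowski's inequality for the time integral and the existence of the density ratio for almost every $t$. The rest is bookkeeping, and summing the three bounds completes the proof.
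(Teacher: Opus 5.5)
Your proposal is correct and follows the paper's proof essentially step for step. It uses the same three-way triangle inequality through $\widehat\mu_T$ and $\mu_T$ with the common-initial-draw coupling. The field term gets the same Gr\"onwall bound followed by a change of measure and Cauchy--Schwarz. The Euler term uses the same material-acceleration remainder and geometric-sum recursion, and terminal truncation uses the same interpolant coupling. The one step you leave implicit is $\Law(Y_T)=\mu_T$, which the paper takes from the continuity equation and uniqueness in its standing assumptions.
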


\begin{suppproof}
\emph{Flow approximation.} Subtract the two ODEs and add and subtract $u_t(\widehat X_t)$. Their common initial value gives
\begin{equation}
\begin{aligned}
 \widehat X_t-Y_t
 &=\int_0^t\left[\widehat u_s(\widehat X_s)-u_s(Y_s)\right]ds\\
 &=\int_0^t\left[e_s(\widehat X_s)+u_s(\widehat X_s)-u_s(Y_s)\right]ds.
\end{aligned}
\end{equation}
Minkowski's inequality and the Lipschitz bound for $u_t$ imply
\begin{equation}
 d(t):=\|\widehat X_t-Y_t\|_{L^2}
 \leq\int_0^t\|e_s(\widehat X_s)\|_{L^2}ds+L\int_0^td(s)ds.
\end{equation}
Set $g(t)=\|e_t\|_{L^2(\widehat\mu_t)}$ and $D(t)=\int_0^tg(s)\,ds+L\int_0^td(s)\,ds$. Then $d(t)\leq D(t)$ and, almost everywhere,
\begin{equation}
\begin{aligned}
 D'(t)&=g(t)+Ld(t)\leq g(t)+LD(t),\qquad D(0)=0,\\
 \frac{d}{dt}\bigl(e^{-Lt}D(t)\bigr)&\leq e^{-Lt}g(t),\\
 D(T)&\leq e^{LT}\int_0^Te^{-Lt}g(t)\,dt.
\end{aligned}
\end{equation}
Consequently,
\begin{equation}
 d(T)\leq\int_0^Te^{L(T-t)}\|e_t\|_{L^2(\widehat\mu_t)}dt.
 \label{eq:gronwall}
\end{equation}
Changing measure to $\mu_t$ and applying Cauchy--Schwarz gives
\begin{equation}
\begin{aligned}
 \|e_t\|_{L^2(\widehat\mu_t)}^2
 &=\int r_t\|e_t\|^2\,d\mu_t\\
 &=A_t+\int(r_t-1)\|e_t\|^2\,d\mu_t\\
 &\leq A_t+
       \left(\int(r_t-1)^2d\mu_t\right)^{1/2}
       \left(\int\|e_t\|^4d\mu_t\right)^{1/2}\\
 &=A_t+\sqrt{\chi_t^2B_t}.
\end{aligned}
\label{eq:change-measure}
\end{equation}
Substituting the square root into equation~\ref{eq:gronwall} gives the flow-approximation bound.

\emph{Euler integration.} Set $t_j=jh$, $V_0=\widehat X_0$ and $V_{j+1}=V_j+h\widehat u_{t_j}(V_j)$. Along the exact learned flow,
\begin{equation}
\begin{aligned}
 \frac{d}{dt}\widehat u_t(\widehat X_t)
 &=\partial_t\widehat u_t(\widehat X_t)
   +D\widehat u_t(\widehat X_t)\frac{d\widehat X_t}{dt}\\
 &=\partial_t\widehat u_t(\widehat X_t)
   +D\widehat u_t(\widehat X_t)\widehat u_t(\widehat X_t)
 =a_t(\widehat X_t).
\end{aligned}
\end{equation}
Integrating first the velocity and then the position gives
\begin{equation}
\begin{aligned}
 \widehat u_s(\widehat X_s)
 &=\widehat u_{t_j}(\widehat X_{t_j})+\int_{t_j}^{s}a_r(\widehat X_r)\,dr,\\
 \widehat X_{t_{j+1}}-\widehat X_{t_j}
 &=\int_{t_j}^{t_{j+1}}\widehat u_s(\widehat X_s)\,ds\\
 &=h\widehat u_{t_j}(\widehat X_{t_j})+R_j,\qquad
 R_j=\int_{t_j}^{t_{j+1}}\int_{t_j}^{s}a_r(\widehat X_r)\,dr\,ds,\\
 \|R_j\|_{L^2}
 &\leq\int_{t_j}^{t_{j+1}}\int_{t_j}^{s}M_a\,dr\,ds
 =\tfrac12h^2M_a.
\end{aligned}
\end{equation}
With $E_j=\widehat X_{t_j}-V_j$ and $\delta_j=\|E_j\|_{L^2}$, subtracting the Euler update yields
\begin{equation}
\begin{aligned}
 E_{j+1}&=E_j+h\bigl[\widehat u_{t_j}(\widehat X_{t_j})-\widehat u_{t_j}(V_j)\bigr]+R_j,\\
 \delta_{j+1}
 &\leq\delta_j+h\|\widehat u_{t_j}(\widehat X_{t_j})-\widehat u_{t_j}(V_j)\|_{L^2}+\|R_j\|_{L^2}\\
 &\leq(1+h\widehat L)\delta_j+\tfrac12h^2M_a,\qquad\delta_0=0.
\end{aligned}
\end{equation}
For $\widehat L>0$, iterating the recurrence and summing the geometric series gives
\begin{equation}
\begin{aligned}
 \delta_k
 &\leq\frac{h^2M_a}{2}\sum_{j=0}^{k-1}(1+h\widehat L)^j\\
 &=\frac{hM_a}{2\widehat L}\left((1+h\widehat L)^k-1\right)\\
 &\leq\frac{hM_a}{2}\frac{e^{\widehat L T}-1}{\widehat L}.
\end{aligned}
\end{equation}
The last inequality uses $1+h\widehat L\leq e^{h\widehat L}$ and $kh=T$. For $\widehat L=0$, the original sum equals $kh^2M_a/2=hTM_a/2$, giving the stated continuous extension.

\emph{Terminal truncation and combination.} The original coupling pairs $X_T=(1-T)Z+TX$ with $X$. Hence
\begin{equation}
\begin{aligned}
 W_2(\mu_T,P)^2
 &\leq\E\|X_T-X\|^2\\
 &=(1-T)^2\E_\Pi\|Z-X\|^2.
\end{aligned}
\end{equation}
\Needspace{5\baselineskip}
Applying the triangle inequality with $\Law(Y_T)=\mu_T$ gives
\begin{equation}
\begin{aligned}
 W_2(\Law(V_k),P)
 &\leq W_2(\Law(V_k),\widehat\mu_T)
       +W_2(\widehat\mu_T,\mu_T)+W_2(\mu_T,P)\\
 &\leq\|V_k-\widehat X_T\|_{L^2}
       +\|\widehat X_T-Y_T\|_{L^2}+W_2(\mu_T,P).
\end{aligned}
\end{equation}
Using the bounds above and $V_k=\widehat X_T^{(k)}$ gives equation~\ref{eq:bound}.
\end{suppproof}

The estimate holds at $T=1$ whenever the hypotheses extend to the endpoint. For a singular target it holds on each admissible interval with $T<1$; an endpoint limit requires control of the $T$-dependent constants and density-ratio moments. Related endpoint estimates for flow matching are given in refs.~\cite{benton2024errorbounds,zhou2025erroranalysis}.

\section{Exact low-dimensional constructions and replication}
\subsection{Rare12 and its polar mass partition}

The eight common centres lie on the circle of radius 2.5 and the four rare centres on the circle of radius 6 before whitening,
\begin{align}
 \widetilde c_j&=2.5(\cos(2\pi j/8),\sin(2\pi j/8)),&&j=0,\ldots,7,\\
 \widetilde c_{8+j}&=6(\cos(\pi/4+2\pi j/4),\sin(\pi/4+2\pi j/4)),&&j=0,\ldots,3.
\end{align}
The weights are $w_j=0.115$ for common modes and $0.02$ for rare modes. Opposite centres have equal weights, so the weighted mean is zero. Rotational symmetry gives weighted covariance
\begin{equation}
 \widetilde\Sigma
 =\left(0.92\frac{2.5^2}{2}+0.08\frac{6^2}{2}\right)I_2
 =4.315I_2.
\end{equation}
The implemented whitening is $c_j=\widetilde\Sigma^{-1/2}\widetilde c_j$, and $P=\sum_jw_j\delta_{c_j}$ has mean zero and covariance $I_2$.

For $Z\sim\N(0,I_2)$, polar coordinates have joint density $f_{R,\Theta}(r,\theta)=r\exp(-r^2/2)/(2\pi)$ for $r\geq0$. Factorization gives independence of $R$ and $\Theta$, with $\Theta$ uniform and
\begin{equation}
 \Pr(R>r)=\int_r^\infty s e^{-s^2/2}\,ds
 =\left[-e^{-s^2/2}\right]_r^\infty=e^{-r^2/2}.
\end{equation}
Starting at angle $\theta_0=-0.115\pi$, form eight consecutive sectors with probability sequence $(0.115,0.135,0.115,0.135,0.115,0.135,0.115,0.135)$. Each sector's angular width is $2\pi$ times its probability. In each larger sector, set
\begin{equation}
 r_\star=\sqrt{-2\log(0.02/0.135)}\approx1.9542.
\end{equation}
Independence implies that its outer portion has mass $0.135e^{-r_\star^2/2}=0.02$ and the inner portion has mass $0.135-0.02=0.115$. Each unsplit sector already has mass 0.115. Mapping the eight common cells and four outer cells to the corresponding centres gives a measurable map $T$ with
\begin{equation}
 Q(T^{-1}(\{c_j\}))=w_j\quad\hbox{for every }j,
 \qquad T_\#Q=P.
\end{equation}
The gaussian8 target retains eight equally weighted inner centres, standardized to identity covariance, and uses eight equal angular source sectors.

\subsection{Conditional fields and the score readout}

For endpoint $c_j$ at $t<1$, write $z_j=(x-tc_j)/(1-t)$ and let $\phi$ be the standard Gaussian density in two dimensions. The conditional assignment probabilities are
\begin{equation}
 \gamma_j(z)=
 \begin{cases}
   w_j,&\text{independent coupling},\\
   \mathbf1\{z\in A_j\},&\text{polar coupling},
 \end{cases}
\end{equation}
where $A_j$ is the polar cell assigned to $c_j$. For a bounded measurable test function $f$, the joint law of the interpolation state and endpoint obeys
\begin{equation}
\begin{aligned}
 \E[f(X_t)\mathbf1\{X=c_j\}]
 &=\int f((1-t)z+tc_j)\gamma_j(z)\phi(z)\,dz\\
 &=\int f(x)\frac{\gamma_j((x-tc_j)/(1-t))
                         \phi((x-tc_j)/(1-t))}{(1-t)^2}\,dx.
\end{aligned}
\end{equation}
The second line uses the Jacobian determinant $(1-t)^2$. It identifies the joint density contribution for endpoint $c_j$; summing these contributions and applying Bayes' rule gives
\begin{equation}
\begin{aligned}
 \rho_{j,t}(x)&=\frac{\gamma_j(z_j)\phi(z_j)}{(1-t)^2},\qquad
 p_t(x)=\sum_j\rho_{j,t}(x),\\
 \alpha_j(x,t)&=\Pr(X=c_j\mid X_t=x)=\frac{\rho_{j,t}(x)}{p_t(x)}.
\end{aligned}
\label{eq:exact-density}
\end{equation}
These contributions integrate to $\int\rho_{j,t}(x)\,dx=\int\gamma_j(z)\phi(z)\,dz=w_j$. For $p_t(x)>0$, conditioning on the endpoint and using $c_j-z_j=(c_j-x)/(1-t)$ gives
\begin{equation}
 u_t(x)=\sum_j\alpha_j(x,t)\frac{c_j-x}{1-t},\qquad
 \E[Z\mid X_t=x]=\sum_j\alpha_j(x,t)z_j.
\end{equation}
On each interior where the active-cell indicators are constant, $\nabla_x z_j=I_2/(1-t)$ and $\nabla_z\phi(z)=-z\phi(z)$. Hence
\begin{equation}
\begin{aligned}
 \nabla_x\rho_{j,t}(x)&=-\frac{z_j}{1-t}\rho_{j,t}(x),\\
 (1-t)\nabla_x\log p_t(x)
 &=\frac{1-t}{p_t(x)}\sum_j\nabla_x\rho_{j,t}(x)\\
 &=-\sum_j\alpha_j(x,t)z_j
 =-\E[Z\mid X_t=x]\\
 &=-x+t\E[U\mid X_t=x]=-x+tu_t(x).
\end{aligned}
\label{eq:score-chain}
\end{equation}
The last line uses $Z=X_t-tU$. The velocity network supplies the scaled-score readout $\widehat s_t(x)=-x+t\widehat u_t(x)$, whose pointwise error is
\begin{equation}
 \widehat s_t(x)-s_t(x)=t\bigl(\widehat u_t(x)-u_t(x)\bigr).
\end{equation}
Relative score and velocity errors are normalized by their respective reference fields. For $f=u$ or $s$, with $(X_{t_r},t_r)$ sampled from the method's interpolation law, the reported estimator is
\begin{equation}
 E_f=\frac{\sum_{r=1}^{n}\|\widehat f_{t_r}(X_{t_r})-f_{t_r}(X_{t_r})\|^2}
              {\sum_{r=1}^{n}\|f_{t_r}(X_{t_r})\|^2}.
\end{equation}

The common- and rare-mode densities in main \textcolor{refblue}{Fig.~2c,d} sum equation~\ref{eq:exact-density} over $j=0,\ldots,7$ and $j=8,\ldots,11$, respectively, with a shared intensity scale at each time.

\subsection{Training units, inference budgets and statistics}
\label{sec:controlled-training}

Each target uses 32 paired training runs of the two methods, with 9,218 parameters, three hidden layers of width 64 and SiLU activations. AdamW uses 3,500 updates, batch size 512, initial learning rate 0.002 and a cosine schedule ending at 0.0001. Training and field evaluation stratify time over $[0.001,0.999]$. Relative field errors use 12,000 interpolation states; endpoint metrics use 3,000 generated and 3,000 target samples with 128 common projection directions. Euler integration starts from a Gaussian draw at time 0.001 and ends at 0.999, using NFE 1, 2, 4, 8, 16 and 32. Integration MSE uses a 256-evaluation Euler reference from the same initial draw. Mode-resolved interpolation uses the same 2,048 source draws across methods and times; endpoint frequencies use 40,000 pairs.

The endpoint maps in main \textcolor{refblue}{Fig.~3a,b} pool 3,000 generated points from each of three paired training runs at each displayed budget. A Gaussian kernel with bandwidth equal to 4.5\% of the minimum target-mode separation smooths both generated points and the weighted target point masses. Each target uses a common display window and logarithmic colour scale across methods and budgets. The accompanying SWD curves summarize all 32 paired runs.

For equal-size point sets $A$ and $B$, let $a_{(i),\ell}$ and $b_{(i),\ell}$ be their sorted projections onto the $\ell$th fixed random unit direction. The low-dimensional benchmark uses the root mean squared sliced discrepancy\cite{bonneel2015sliced}
\begin{equation}
 \mathrm{SWD}_2(A,B)=\left[\frac1{Ln}\sum_{\ell=1}^L\sum_{i=1}^n
       (a_{(i),\ell}-b_{(i),\ell})^2\right]^{1/2}.
\end{equation}
Each comparison uses the same sampled projection directions for both methods. For generated points $y_r$, assign $\widehat j_r=\arg\min_j\|y_r-c_j\|$ and estimate mode probabilities by $\widehat w_j=n^{-1}\sum_r\mathbf1\{\widehat j_r=j\}$. Mode-mass error is
\begin{equation}
 \mathrm{ModeTV}=\tfrac12\sum_j|\widehat w_j-w_j|.
\end{equation}

Let $d_i$ be candidate minus comparator for a metric. Under the paired exchangeability null, the exact two-sided sign-flip probability is
\begin{equation}
 P_{\rm exact}=2^{-n}\sum_{s\in\{-1,1\}^n}
 \mathbf1\left\{\left|n^{-1}\sum_i s_i d_i\right|
             \geq\left|n^{-1}\sum_i d_i\right|\right\}.
\end{equation}
The exact test includes all $2^{32}$ sign assignments. Intervals use 50,000 bootstrap resamples of complete training-run pairs\cite{efron1994introduction}.

For a discrete path with increments $v_j=x_j-x_{j-1}$, accumulated turning angle is
\begin{equation}
 \kappa=\sum_{j=1}^{k-1}\arccos\!\left(
 \frac{v_j^\top v_{j+1}}{\|v_j\|\|v_{j+1}\|}\right),
\end{equation}
omitting negligible increments. Solver MSE compares Euler and reference trajectories from common initial coordinates.

\begingroup\small\setlength{\tabcolsep}{4pt}
\begin{table}[!htbp]\centering
\caption{Paired replication. Means are over 32 independent pairs of training runs; intervals are 95\% bootstrap intervals obtained by resampling complete pairs for MPCF minus I-CFM. $P$ is the two-sided exact sign-flip probability. Endpoint and path metrics use 16 field evaluations.}
\begin{tabular}{lrrrrrr}\toprule

Target & Outcome & I-CFM & MPCF & $\Delta$ & 95\% interval & $P$\\\midrule

rare12 & Score error & 0.1648 & 0.0082 & -0.1566 & [-0.1664, -0.1477] & 4.7e-10\\
rare12 & Velocity error & 0.2034 & 0.0676 & -0.1357 & [-0.1485, -0.1238] & 4.7e-10\\
rare12 & SWD & 0.2061 & 0.1335 & -0.0725 & [-0.0808, -0.0647] & 4.7e-10\\
rare12 & Mode TV & 0.0522 & 0.0294 & -0.0228 & [-0.0314, -0.0162] & 5.1e-09\\
rare12 & Solver MSE & 0.0047 & 0.0005 & -0.0042 & [-0.0044, -0.0040] & 4.7e-10\\
rare12 & Turning angle & 3.5971 & 1.3220 & -2.2750 & [-2.3212, -2.2282] & 4.7e-10\\
gaussian8 & Score error & 0.0548 & 0.0113 & -0.0435 & [-0.0443, -0.0428] & 4.7e-10\\
gaussian8 & Velocity error & 0.0173 & 0.0093 & -0.0080 & [-0.0083, -0.0077] & 4.7e-10\\
gaussian8 & SWD & 0.2543 & 0.2722 & 0.0179 & [0.0021, 0.0339] & 0.038\\
gaussian8 & Mode TV & 0.0241 & 0.0231 & -0.0010 & [-0.0031, 0.0012] & 0.36\\
gaussian8 & Solver MSE & 0.0049 & 0.0102 & 0.0053 & [0.0035, 0.0071] & 4.5e-06\\
gaussian8 & Turning angle & 3.9028 & 0.5404 & -3.3624 & [-3.3782, -3.3465] & 4.7e-10\\
\bottomrule\end{tabular}
\end{table}\endgroup

\begingroup\fontsize{8.5}{10.2}\selectfont\setlength{\tabcolsep}{0.9pt}\renewcommand{\arraystretch}{1.02}
\begin{table}[!htbp]\centering
\caption{Complete inference-budget results for the 32-pair replication. Entries are mean $\pm$ sample standard deviation.}
\begin{tabular}{@{}lrrrr@{\hspace{4mm}}lrrrr@{}}\toprule

\multicolumn{5}{c}{rare12} & \multicolumn{5}{c}{gaussian8}\\
\cmidrule(lr){1-5}\cmidrule(lr){6-10}
Method & NFE & SWD & Mode TV & Solver MSE & Method & NFE & SWD & Mode TV & Solver MSE\\\midrule

I-CFM & 1 & $0.903\pm0.017$ & $0.252\pm0.067$ & $1.009\pm0.056$ & I-CFM & 1 & $2.030\pm0.016$ & $0.317\pm0.120$ & $4.465\pm0.119$\\
I-CFM & 2 & $0.496\pm0.016$ & $0.099\pm0.011$ & $0.241\pm0.016$ & I-CFM & 2 & $0.665\pm0.026$ & $0.046\pm0.013$ & $0.674\pm0.045$\\
I-CFM & 4 & $0.274\pm0.014$ & $0.077\pm0.019$ & $0.056\pm0.006$ & I-CFM & 4 & $0.267\pm0.027$ & $0.027\pm0.008$ & $0.075\pm0.007$\\
I-CFM & 8 & $0.228\pm0.015$ & $0.061\pm0.020$ & $0.017\pm0.002$ & I-CFM & 8 & $0.253\pm0.050$ & $0.025\pm0.007$ & $0.019\pm0.004$\\
I-CFM & 16 & $0.206\pm0.018$ & $0.052\pm0.021$ & $0.005\pm0.00056$ & I-CFM & 16 & $0.254\pm0.052$ & $0.024\pm0.007$ & $0.005\pm0.001$\\
I-CFM & 32 & $0.198\pm0.020$ & $0.048\pm0.022$ & $0.001\pm0.00014$ & I-CFM & 32 & $0.257\pm0.052$ & $0.024\pm0.007$ & $0.001\pm0.00042$\\
MPCF & 1 & $0.161\pm0.011$ & $0.025\pm0.005$ & $0.031\pm0.002$ & MPCF & 1 & $0.268\pm0.023$ & $0.020\pm0.005$ & $0.113\pm0.006$\\
MPCF & 2 & $0.131\pm0.016$ & $0.025\pm0.006$ & $0.009\pm0.00089$ & MPCF & 2 & $0.258\pm0.053$ & $0.025\pm0.008$ & $0.038\pm0.008$\\
MPCF & 4 & $0.131\pm0.020$ & $0.028\pm0.007$ & $0.004\pm0.00066$ & MPCF & 4 & $0.281\pm0.059$ & $0.026\pm0.007$ & $0.035\pm0.015$\\
MPCF & 8 & $0.133\pm0.021$ & $0.029\pm0.006$ & $0.002\pm0.0005$ & MPCF & 8 & $0.277\pm0.055$ & $0.024\pm0.007$ & $0.021\pm0.010$\\
MPCF & 16 & $0.134\pm0.021$ & $0.029\pm0.006$ & $0.00047\pm0.00021$ & MPCF & 16 & $0.272\pm0.054$ & $0.023\pm0.007$ & $0.010\pm0.005$\\
MPCF & 32 & $0.134\pm0.021$ & $0.030\pm0.007$ & $0.00012\pm6.8e-05$ & MPCF & 32 & $0.270\pm0.053$ & $0.023\pm0.007$ & $0.005\pm0.003$\\
\bottomrule\end{tabular}
\end{table}\endgroup

\begin{figure}[!htbp]\centering\includegraphics[width=166mm,height=.89\textheight,keepaspectratio]{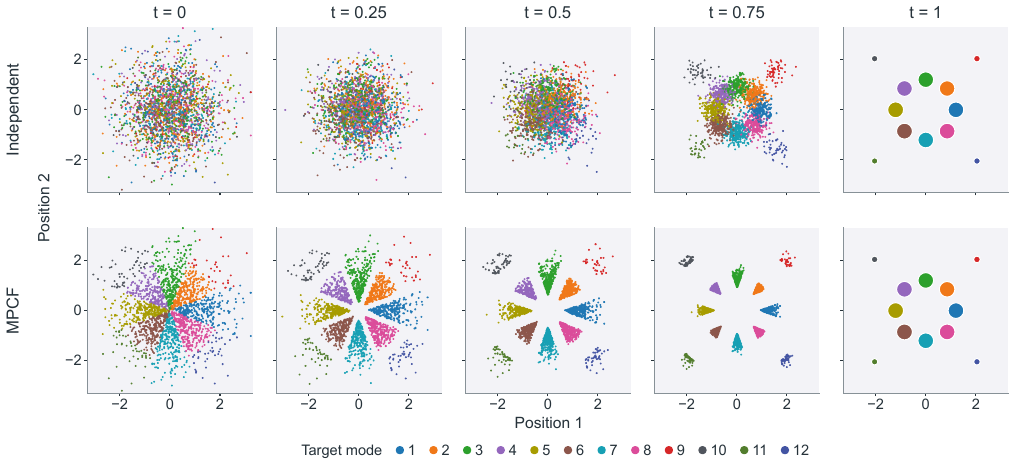}\caption{\textbf{Mode-resolved interpolation geometry.} Rows compare independent pairing and MPCF for 2,048 shared Gaussian source samples at five interpolation times. Colours identify the twelve rare12 target modes. At $t=1$, circle areas show mode frequencies over 40,000 pairs; target locations are identical across rows.}\label{sfig:paired_fields}\end{figure}

\begin{figure}[!htbp]\centering\includegraphics[width=166mm,height=.89\textheight,keepaspectratio]{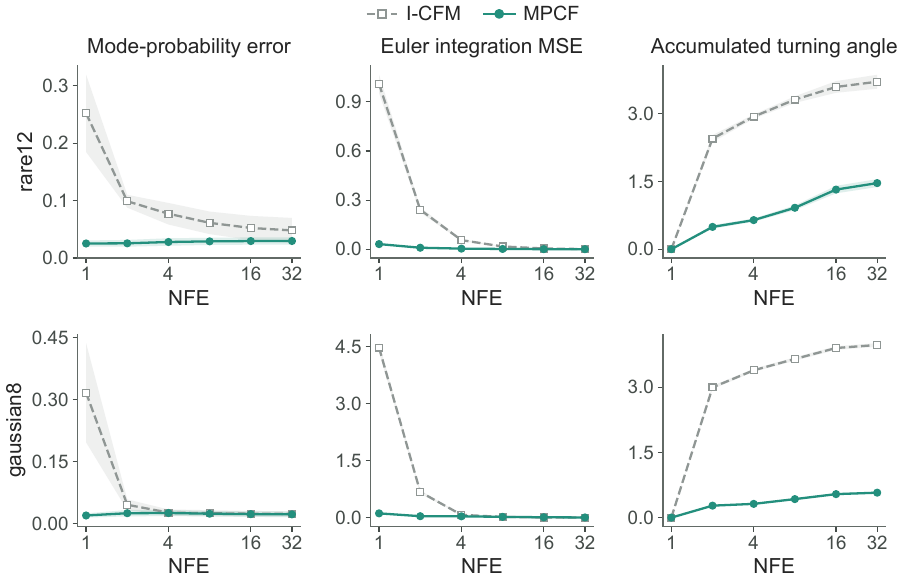}\caption{\textbf{Inference-budget dependence in the two-dimensional benchmarks.} Rows show rare12 and gaussian8. Columns show mode-probability error, Euler integration MSE and accumulated turning angle. Curves show means and bands show standard deviations across 32 paired training runs.}\label{sfig:toy_mass}\end{figure}

\FloatBarrier

\section{Image experiments and primary statistics}
\subsection{Data laws and evaluator calibration}

The image experiments share a stratified CIFAR-10 train--validation split with 45,000 and 5,000 images, respectively. The official 10,000-image test set is used for final endpoints. For the WLT100 task, $p_c=100^{-c/9}/\sum_{\ell=0}^9 100^{-\ell/9}$ and each training row of class $c$ has weight $p_c/n_c$. Every training image remains in the support. The ratio between the most and least probable classes is 100:1; classes 7, 8 and 9 define the minority set.

The native-resolution ResNet-18 classifier\cite{he2016deep} has validation accuracy 0.9382, minimum class recall 0.874 and a confusion matrix with condition number approximately 1.203. For columns indexed by true class, $C_{ab}$ is the fraction of validation images of true class $b$ assigned to class $a$ by the maximum-logit prediction. The same prediction convention is used for the generated marginal $q$. With $p$ denoting generated true-class probabilities, the observation model is $q=Cp$. The corrected estimate solves
\begin{equation}
 \widehat p=\arg\min_{p\in\R^{10}}\|Cp-q\|_2^2
 \quad\text{subject to}\quad p_c\geq0,\ \sum_cp_c=1,
 \qquad \mathrm{MassTV}=\tfrac12\|\widehat p-p_{\rm target}\|_1.
\end{equation}
The observation model follows class-probability estimation under label shift\cite{lipton2018detecting}. Across 10,000 synthetic multinomial replicates with 50,000 draws each, the maximum absolute estimated class-mass bias over the tested imbalance factors is $4.07\times10^{-5}$. At WLT100, calibration TV has mean 0.00469 and 95th percentile 0.00712.

Weighted FID uses Inception-v3 features from \texttt{torch-fidelity} 0.4.0, float64 moments and exact class weights for real samples. KID estimates squared maximum mean discrepancy with a cubic polynomial kernel\cite{gretton2012kernel,binkowski2018demystifying}, using 20 subsets of 1,000 observations. Overall precision and recall\cite{kynkaanniemi2019improved} use 10,000 samples, three-neighbour radii and balanced real support. Class-specific measures use the evaluator\textquotesingle s features, with up to 1,000 real and 1,000 generated observations per class. The classifier assigns generated classes; minority recall averages classes 7--9 equally for both WLT100 and balanced targets. Class-frequency plots use raw predicted-to-target ratios; MassTV uses confusion-corrected estimates.

\subsection{Paired image comparisons}

We compare six methods on both image tasks using eight paired training runs for WLT100 and three for balanced transfer. Each run contributes 50,000 generated samples at Euler NFE 16. Inference on training variability treats the trained pair as the independent unit; samples within each run are used to estimate its metrics.

For lower-is-better outcome $M$, orient the benefit as $M_B-M_G$; for recall, use $R_G-R_B$. A non-inferiority test adds its margin to this oriented difference. The FID margin is $\max(1,0.05\overline{\mathrm{FID}}_B)$, giving 1.11738 against I-CFM and 1.10296 against OT-CFM. The MassTV and recall margins are 0.01 and 0.02. All $2^8$ signs are enumerated for each one-sided test. Holm adjustment is performed within the four rare-event superiority components, the two FID non-inferiority components and the two rare-event non-inferiority components as three separate families. The analysis uses 100,000 paired bootstrap resamples.

A component passes when its Holm-adjusted $P$ value is below 0.05 and its two-sided 95\% interval lies above zero. G-MPCF met the MassTV-superiority and FID-non-inferiority criteria against both controls. Recall effects spanned zero, so the joint allocation--recall criteria were not met.

\begingroup\small\setlength{\tabcolsep}{4pt}
\begin{table}[!htbp]\centering
\caption{Image endpoints at NFE 16 with 50,000 generated images per run. Entries are means $\pm$ sample standard deviations across independent training runs.}
\begin{tabular}{lrrrrrr}\toprule

Task & Method & $n$ & FID & MassTV & Minority recall & KID $\times10^3$\\\midrule

WLT100 & I-CFM & 8 & $22.35\pm0.20$ & $0.0959\pm0.0020$ & $0.667\pm0.058$ & $16.24\pm0.41$\\
WLT100 & OT-CFM & 8 & $22.06\pm0.15$ & $0.0878\pm0.0034$ & $0.697\pm0.090$ & $16.26\pm0.34$\\
WLT100 & 2-RF & 8 & $20.16\pm0.20$ & $0.0586\pm0.0043$ & $0.614\pm0.047$ & $8.51\pm0.20$\\
WLT100 & DGSWP & 8 & $21.28\pm0.39$ & $0.0835\pm0.0023$ & $0.603\pm0.077$ & $14.42\pm0.57$\\
WLT100 & MAC & 8 & $19.60\pm0.24$ & $0.0664\pm0.0019$ & $0.634\pm0.095$ & $11.45\pm0.57$\\
WLT100 & G-MPCF & 8 & $19.80\pm0.20$ & $0.0486\pm0.0023$ & $0.674\pm0.085$ & $13.23\pm0.36$\\
Balanced & I-CFM & 3 & $18.43\pm0.13$ & $0.0969\pm0.0017$ & $0.725\pm0.010$ & $17.29\pm0.36$\\
Balanced & OT-CFM & 3 & $17.11\pm0.07$ & $0.0763\pm0.0024$ & $0.735\pm0.005$ & $16.08\pm0.30$\\
Balanced & 2-RF & 3 & $17.64\pm0.56$ & $0.0976\pm0.0052$ & $0.689\pm0.008$ & $9.99\pm0.35$\\
Balanced & DGSWP & 3 & $19.64\pm0.30$ & $0.1028\pm0.0034$ & $0.706\pm0.020$ & $17.92\pm0.58$\\
Balanced & MAC & 3 & $16.77\pm0.11$ & $0.0886\pm0.0036$ & $0.712\pm0.003$ & $13.43\pm0.25$\\
Balanced & G-MPCF & 3 & $17.71\pm0.15$ & $0.0832\pm0.0013$ & $0.718\pm0.010$ & $15.95\pm0.12$\\
\bottomrule\end{tabular}
\end{table}\endgroup

\begingroup\small\setlength{\tabcolsep}{4pt}
\begin{table}[!htbp]\centering
\caption{WLT100 component tests, eight paired training runs. Beneficial effects are positive, and non-inferiority effects include the stated margin. NI and SUP denote non-inferiority and superiority, respectively. Intervals are unadjusted two-sided 95\% paired-bootstrap intervals. Rejection uses the Holm-adjusted one-sided probability.}
\begin{tabular}{lrrrrrr}\toprule

Outcome & Comparator & Margin & Effect & 95\% interval & $P_{\rm Holm}$ & Reject\\\midrule

MassTV SUP & I-CFM & 0.0000 & 0.0473 & [0.0453, 0.0491] & 0.016 & Yes\\
Recall SUP & I-CFM & 0.0000 & 0.0069 & [-0.0595, 0.0771] & 0.859 & No\\
MassTV SUP & OT-CFM & 0.0000 & 0.0391 & [0.0367, 0.0414] & 0.016 & Yes\\
Recall SUP & OT-CFM & 0.0000 & -0.0232 & [-0.0850, 0.0465] & 0.859 & No\\
FID NI & I-CFM & 1.1174 & 3.6675 & [3.5536, 3.7879] & 0.008 & Yes\\
FID NI & OT-CFM & 1.1030 & 3.3647 & [3.2614, 3.4778] & 0.008 & Yes\\
MassTV NI & OT-CFM & 0.0100 & 0.0491 & [0.0467, 0.0513] & 0.008 & Yes\\
Recall NI & OT-CFM & 0.0200 & -0.0032 & [-0.0645, 0.0661] & 0.539 & No\\
\bottomrule\end{tabular}
\end{table}\endgroup

\begingroup\fontsize{8.5}{10.2}\selectfont\setlength{\tabcolsep}{1.2pt}\renewcommand{\arraystretch}{1.02}
\begin{table}[!htbp]\centering
\caption{Individual image-training runs. Precision and recall measure overall feature support; minority recall uses classes 7--9. Run numbers identify matched repetitions.}
\begin{tabular}{@{}lrrrrrrr@{\hspace{4mm}}lrrrrrrr@{}}\toprule

Task & Method & Run & FID & MassTV & Min. R & Prec. & Recall & Task & Method & Run & FID & MassTV & Min. R & Prec. & Recall\\\midrule

WLT & DGSWP & 1 & 21.051 & 0.0808 & 0.6150 & 0.5685 & 0.5413 & WLT & OT-CFM & 1 & 21.998 & 0.0863 & 0.6440 & 0.5753 & 0.5477\\
WLT & DGSWP & 2 & 20.964 & 0.0848 & 0.6973 & 0.5646 & 0.5422 & WLT & OT-CFM & 2 & 21.882 & 0.0849 & 0.5853 & 0.5768 & 0.5401\\
WLT & DGSWP & 3 & 21.788 & 0.0872 & 0.5993 & 0.5504 & 0.5430 & WLT & OT-CFM & 3 & 22.018 & 0.0874 & 0.8180 & 0.5733 & 0.5462\\
WLT & DGSWP & 4 & 21.264 & 0.0832 & 0.6583 & 0.5597 & 0.5513 & WLT & OT-CFM & 4 & 21.909 & 0.0888 & 0.6937 & 0.5804 & 0.5258\\
WLT & DGSWP & 5 & 20.621 & 0.0833 & 0.6613 & 0.5616 & 0.5377 & WLT & OT-CFM & 5 & 22.254 & 0.0915 & 0.8173 & 0.5738 & 0.5453\\
WLT & DGSWP & 6 & 21.702 & 0.0838 & 0.6127 & 0.5559 & 0.5446 & WLT & OT-CFM & 6 & 22.133 & 0.0916 & 0.6303 & 0.5800 & 0.5437\\
WLT & DGSWP & 7 & 21.394 & 0.0847 & 0.5100 & 0.5603 & 0.5364 & WLT & OT-CFM & 7 & 22.286 & 0.0900 & 0.6300 & 0.5717 & 0.5249\\
WLT & DGSWP & 8 & 21.423 & 0.0801 & 0.4707 & 0.5529 & 0.5333 & WLT & OT-CFM & 8 & 21.994 & 0.0817 & 0.7567 & 0.5775 & 0.5351\\
WLT & G-MPCF & 1 & 19.875 & 0.0466 & 0.5620 & 0.5899 & 0.5161 & WLT & 2-RF & 1 & 20.101 & 0.0630 & 0.6313 & 0.5677 & 0.5198\\
WLT & G-MPCF & 2 & 19.666 & 0.0522 & 0.7557 & 0.5879 & 0.5299 & WLT & 2-RF & 2 & 20.391 & 0.0509 & 0.6370 & 0.5514 & 0.5072\\
WLT & G-MPCF & 3 & 19.932 & 0.0508 & 0.6903 & 0.5868 & 0.5005 & WLT & 2-RF & 3 & 19.948 & 0.0588 & 0.6363 & 0.5619 & 0.5093\\
WLT & G-MPCF & 4 & 19.723 & 0.0467 & 0.5490 & 0.5835 & 0.5083 & WLT & 2-RF & 4 & 20.374 & 0.0542 & 0.6287 & 0.5709 & 0.5107\\
WLT & G-MPCF & 5 & 19.821 & 0.0488 & 0.7817 & 0.5890 & 0.5254 & WLT & 2-RF & 5 & 19.913 & 0.0624 & 0.5753 & 0.5685 & 0.5252\\
WLT & G-MPCF & 6 & 19.714 & 0.0488 & 0.6427 & 0.5909 & 0.5293 & WLT & 2-RF & 6 & 19.973 & 0.0600 & 0.6817 & 0.5723 & 0.5119\\
WLT & G-MPCF & 7 & 20.166 & 0.0498 & 0.6757 & 0.5951 & 0.5246 & WLT & 2-RF & 7 & 20.382 & 0.0572 & 0.5947 & 0.5748 & 0.5005\\
WLT & G-MPCF & 8 & 19.482 & 0.0453 & 0.7327 & 0.5987 & 0.5206 & WLT & 2-RF & 8 & 20.188 & 0.0621 & 0.5270 & 0.5600 & 0.5036\\
WLT & I-CFM & 1 & 22.448 & 0.0948 & 0.6373 & 0.5621 & 0.5260 & Bal. & DGSWP & 1 & 19.412 & 0.0997 & 0.7023 & 0.5717 & 0.5750\\
WLT & I-CFM & 2 & 22.211 & 0.0950 & 0.6460 & 0.5674 & 0.5148 & Bal. & DGSWP & 2 & 19.524 & 0.1025 & 0.6877 & 0.5664 & 0.5729\\
WLT & I-CFM & 3 & 22.618 & 0.0944 & 0.5733 & 0.5578 & 0.5269 & Bal. & DGSWP & 3 & 19.975 & 0.1064 & 0.7280 & 0.5730 & 0.5720\\
WLT & I-CFM & 4 & 22.181 & 0.0945 & 0.6780 & 0.5658 & 0.5239 & Bal. & G-MPCF & 1 & 17.565 & 0.0817 & 0.7297 & 0.5974 & 0.5540\\
WLT & I-CFM & 5 & 22.079 & 0.0990 & 0.6260 & 0.5690 & 0.5272 & Bal. & G-MPCF & 2 & 17.689 & 0.0838 & 0.7127 & 0.5856 & 0.5480\\
WLT & I-CFM & 6 & 22.255 & 0.0990 & 0.6897 & 0.5646 & 0.5304 & Bal. & G-MPCF & 3 & 17.869 & 0.0840 & 0.7117 & 0.5962 & 0.5452\\
WLT & I-CFM & 7 & 22.627 & 0.0958 & 0.7433 & 0.5567 & 0.5247 & Bal. & I-CFM & 1 & 18.552 & 0.0962 & 0.7130 & 0.5839 & 0.5652\\
WLT & I-CFM & 8 & 22.363 & 0.0948 & 0.7410 & 0.5626 & 0.5153 & Bal. & I-CFM & 2 & 18.443 & 0.0957 & 0.7330 & 0.5874 & 0.5712\\
WLT & MAC & 1 & 19.426 & 0.0640 & 0.6610 & 0.5853 & 0.5375 & Bal. & I-CFM & 3 & 18.293 & 0.0987 & 0.7280 & 0.5831 & 0.5552\\
WLT & MAC & 2 & 19.228 & 0.0661 & 0.6083 & 0.5816 & 0.5292 & Bal. & MAC & 1 & 16.663 & 0.0846 & 0.7103 & 0.5914 & 0.5692\\
WLT & MAC & 3 & 19.927 & 0.0657 & 0.6957 & 0.5801 & 0.5347 & Bal. & MAC & 2 & 16.879 & 0.0899 & 0.7153 & 0.5990 & 0.5718\\
WLT & MAC & 4 & 19.454 & 0.0637 & 0.4933 & 0.5908 & 0.5254 & Bal. & MAC & 3 & 16.759 & 0.0914 & 0.7103 & 0.5981 & 0.5745\\
WLT & MAC & 5 & 19.589 & 0.0690 & 0.5243 & 0.5747 & 0.5394 & Bal. & OT-CFM & 1 & 17.050 & 0.0735 & 0.7393 & 0.5955 & 0.5683\\
WLT & MAC & 6 & 19.510 & 0.0681 & 0.5963 & 0.5824 & 0.5293 & Bal. & OT-CFM & 2 & 17.184 & 0.0775 & 0.7363 & 0.5920 & 0.5772\\
WLT & MAC & 7 & 19.847 & 0.0666 & 0.7390 & 0.5921 & 0.5088 & Bal. & OT-CFM & 3 & 17.096 & 0.0779 & 0.7297 & 0.5993 & 0.5696\\
WLT & MAC & 8 & 19.797 & 0.0683 & 0.7503 & 0.5838 & 0.5322 & Bal. & 2-RF & 1 & 17.045 & 0.0945 & 0.6880 & 0.5833 & 0.5502\\
 &  &  &  &  &  &  &  & Bal. & 2-RF & 2 & 18.146 & 0.0946 & 0.6973 & 0.5731 & 0.5529\\
 &  &  &  &  &  &  &  & Bal. & 2-RF & 3 & 17.723 & 0.1036 & 0.6817 & 0.5872 & 0.5474\\
\bottomrule\end{tabular}
\end{table}\endgroup

\begingroup\small\setlength{\tabcolsep}{4pt}
\begin{table}[!htbp]\centering
\caption{Image training computation. Charged computation includes method-specific pairing and teacher costs. Time is mean $\pm$ standard deviation; memory is the maximum recorded across training runs.}
\begin{tabular}{lrrrrrr}\toprule

Task & Method & Updates & FLOPs $/10^{17}$ & Max. dev., \% & Hours & GiB\\\midrule

WLT & G-MPCF & 99999 & 5.42494 & 0.00074 & $3.68\pm0.10$ & 9.22\\
WLT & I-CFM & 100000 & 5.42498 & 0.00000 & $3.71\pm0.15$ & 9.22\\
WLT & OT-CFM & 99992 & 5.42497 & 0.00027 & $3.66\pm0.14$ & 9.22\\
WLT & DGSWP & 99772 & 5.42498 & 0.00005 & $4.30\pm0.47$ & 9.28\\
WLT & MAC & 59989 & 5.42492 & 0.00116 & $20.29\pm7.90$ & 9.22\\
WLT & 2-RF & 57893 & 5.42485 & 0.00249 & $23.65\pm4.58$ & 9.59\\
Bal. & I-CFM & 100000 & 5.42498 & 0.00000 & $3.37\pm0.06$ & 9.22\\
Bal. & OT-CFM & 99992 & 5.42497 & 0.00027 & $3.66\pm0.09$ & 9.22\\
Bal. & 2-RF & 57893 & 5.42485 & 0.00249 & $3.91\pm0.02$ & 9.59\\
Bal. & DGSWP & 99772 & 5.42498 & 0.00005 & $3.64\pm0.03$ & 9.28\\
Bal. & MAC & 59989 & 5.42492 & 0.00116 & $4.16\pm0.16$ & 9.22\\
Bal. & G-MPCF & 99999 & 5.42494 & 0.00074 & $3.35\pm0.01$ & 9.22\\
\bottomrule\end{tabular}
\end{table}\endgroup

\subsection{Shared model and comparator implementations}

The image methods share an unconditional residual U-Net and optimizer; \suppref{Tables~S6 and S7} give the training schedules and model specification. The flow loss averages squared velocity errors over batch samples, stratified times and image coordinates. Evaluation uses exponential-moving-average weights. Numerical trajectories use uniform time steps from $t=0$ to $t=1$ and evolve without clipping; metric images are clipped to $[-1,1]$ and encoded by rounding $127.5(x+1)$ to unsigned integers. Method-specific update counts match the computation envelope within 1\%; rectified flow uses 50,000 first-stage and 7,893 second-stage updates.

OT-CFM uses the exact squared-Euclidean minibatch assignment, corresponding to the BatchOT instance of multisample flow matching\cite{tong2024improving,pooladian2023multisample}. Two-stage rectified flow charges the teacher-generated pairs and both training stages. DGSWP uses the official learned-plan module, reusing each learned pool for ten flow batches\cite{chapel2025differentiable}. MAC adapts the official full endpoint-prediction cost and solves its assignment exactly\cite{lin2026modelaligned}. This adaptation uses the full cost in place of the source implementation's Sinkhorn warm-up and plan extraction. The six methods share the model, weighted data stream, optimizer and evaluator, with method-specific additional evaluations included in charged computation.

\begin{table}[htbp]\centering\small
\caption{Shared image training specification. Parameter counts refer to the trainable velocity field.}
\begin{tabular}{ll}\toprule
Quantity&Setting\\\midrule
Resolution and channels&$32\times32\times3$\\
Base width; multipliers&96; 1, 2, 2, 2\\
Encoder / decoder blocks&2 / 3 per resolution\\
Attention blocks; heads&6; $3\times64$\\
Trainable parameters&20,118,915\\
Normalization; activation&GroupNorm; SiLU\\
Dropout&0.1\\
Batch size&256\\
Optimizer&AdamW\\
Learning rate; coefficients&$2\times10^{-4}$; $\beta_1=0.9$, $\beta_2=0.999$\\
Weight decay; gradient clipping&0; 1.0\\
Warm-up; schedule&5,000 updates; cosine\\
Moving-average decay&0.9999\\
Time sampling&Uniform stratification, one jitter per bin\\
Arithmetic&bfloat16 autocast, float32 loss\\
Augmentation&Horizontal flip, probability 0.5\\
Training FLOP envelope&$\approx5.425\times10^{17}$\\
Primary sampler&Euler, NFE 16\\
Primary images per run&50,000\\\bottomrule
\end{tabular}\end{table}

\FloatBarrier

\subsection{Coupling implementation, numerical precision and cost}
\label{sec:coupling-implementation}

The fixed feature extractor is an ImageNet-pretrained ResNet-18\cite{deng2009imagenet,he2016deep}. Images are resized to 224 pixels using antialiased bilinear interpolation and normalized with the extractor's ImageNet channel statistics. The 512-dimensional penultimate features are averaged over each image and its horizontal reflection, then standardized using target-weighted training means and standard deviations. Weighted $k$-means\cite{lloyd1982least} with ten initializations fits 20 non-empty regions. A sign-anchored principal axis of their centroids determines the region order.

The eight leading weighted principal-component directions of the standardized features define the target coordinates. Within each region and direction, stable sorting gives conditional weights $\pi_{(i)}=w_{(i)}/p_j$ and midpoint ranks $q_{(i)}=\sum_{\ell<i}\pi_{(\ell)}+\pi_{(i)}/2$. Clipping these ranks to $[10^{-6},1-10^{-6}]$ and applying $\Phi^{-1}$ gives empirical normal scores. A fixed nine-column orthonormal source frame provides one regional coordinate and eight geometry coordinates. The regional coordinate remains fixed, while the active geometry coordinate cycles through the eight directions. Conditional sampling uses replacement, and stable sorting pairs the sampled target scores with source projections. After matching, independent augmentation selects the original or horizontally flipped image with equal probability.

For an exactly Gaussian source, a stored projection vector with norm $\sigma>0$ gives projection law $\N(0,\sigma^2)$. The regional probability accounting for this norm is therefore
\begin{equation}
 p_j^{\rm impl}=\Phi(b_j/\sigma)-\Phi(b_{j-1}/\sigma).
\end{equation}
Numerical checks evaluate the ideal quantile residual, the finite-precision residual and the reconstructed target-row residual separately. Equation~\ref{eq:multiset} supplies the corresponding finite-batch count identity.

For feature dimension $F$, geometry rank $r$ and $T_k$ clustering iterations, fitting costs $O(NF)+O(T_kNmF)+O(NFr)+O(rN\log N)$. Online projection costs $O(BD(r+1))$, and regional sorting costs $O(B\log B)$ since
\begin{equation}
 \sum_{j:n_j>0}n_j\log n_j
 \leq\sum_{j:n_j>0}n_j\log B=B\log B.
\end{equation}
The coupling state stores $O(Nr+N+D(r+1)+(m+r+2)F)$ scalars in addition to the feature extractor.

\clearpage
\subsection{Class frequencies, image samples and feature distributions}

\begin{figure}[!htbp]\centering
\includegraphics[width=166mm]{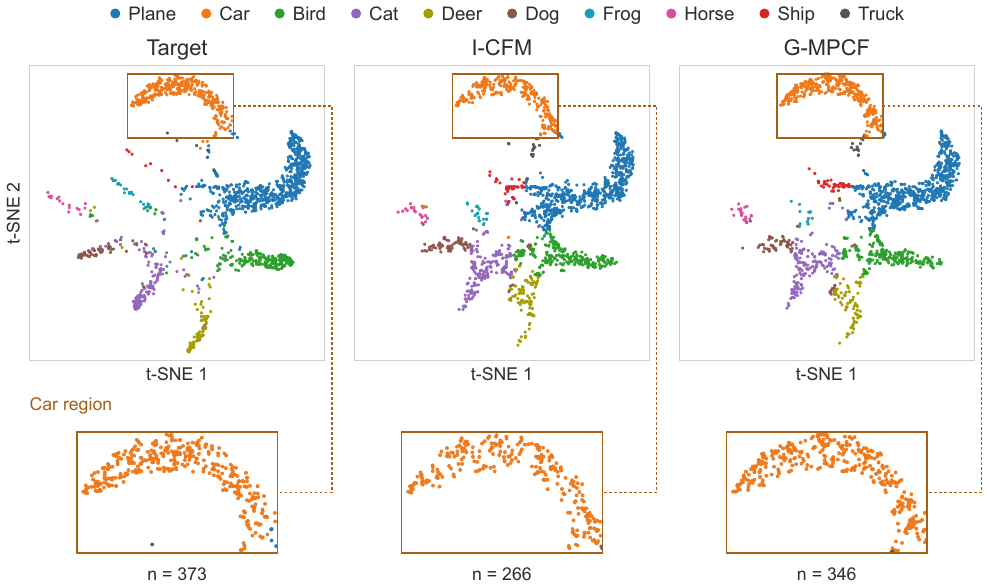}
\caption{\textbf{Long-tail feature distributions and car-region enlargements.} Target, I-CFM and G-MPCF maps contain 1,536 images each in a shared t-distributed stochastic neighbour embedding (t-SNE), pooled across eight long-tail training runs. Colours denote reference labels or predicted classes. Enlargements share target-defined car-region bounds and retain all classes inside; $n$ counts observations in each box.}\label{sfig:sample_distributions}
\vspace{8pt}
\includegraphics[width=166mm]{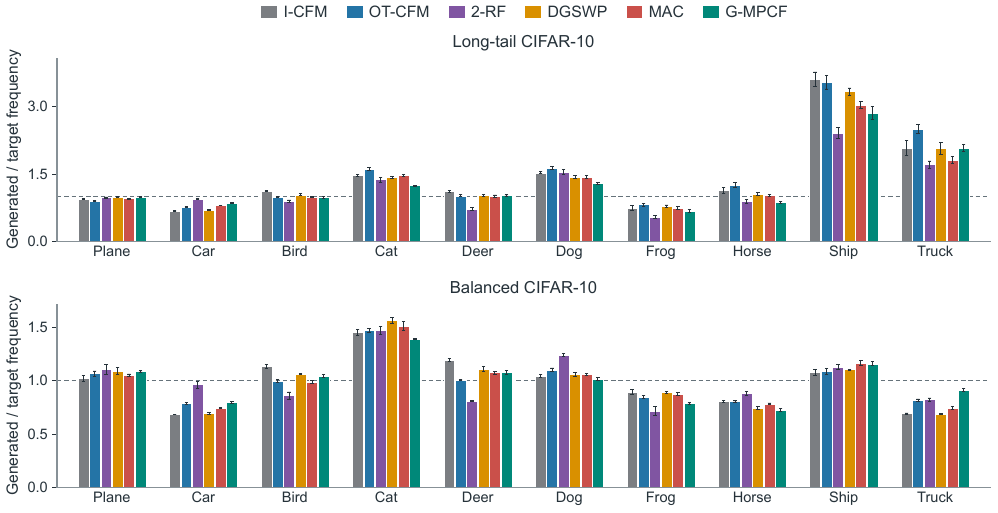}
\caption{\textbf{Class-frequency ratios under long-tail and balanced targets.} Bars show classifier-predicted frequencies divided by target probabilities; the dashed line marks 1. Bars show means and error bars show standard deviations across eight WLT100 and three balanced training runs, with 50,000 images per run at 16 Euler evaluations.}\label{sfig:class_frequencies}
\end{figure}

\begin{figure}[!htbp]\centering
\includegraphics[width=166mm]{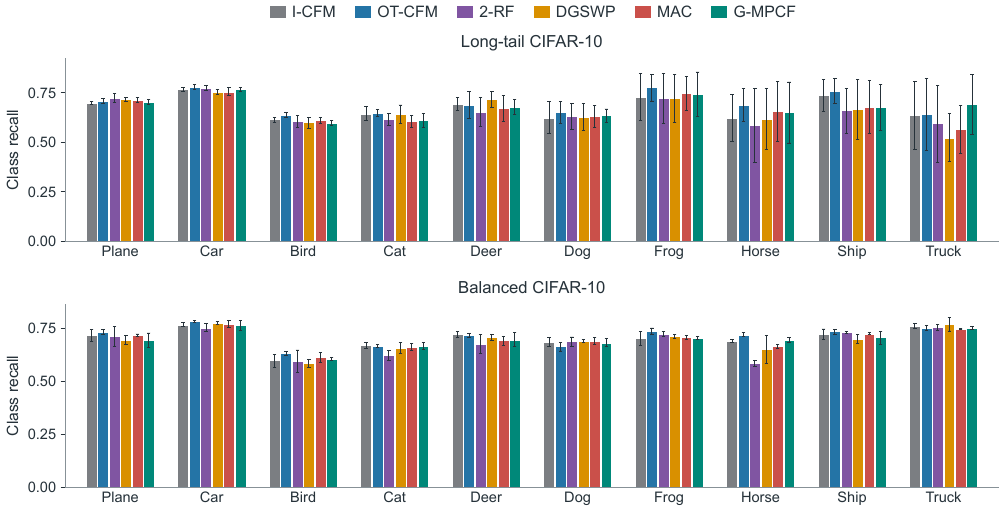}
\caption{\textbf{Class recall under long-tail and balanced targets.} Class-specific feature-manifold recall across six methods. Bars show means and error bars show standard deviations across eight WLT100 and three balanced training runs, with 50,000 images per run at 16 Euler evaluations.}\label{sfig:class_recall}
\vspace{8pt}
\includegraphics[width=166mm]{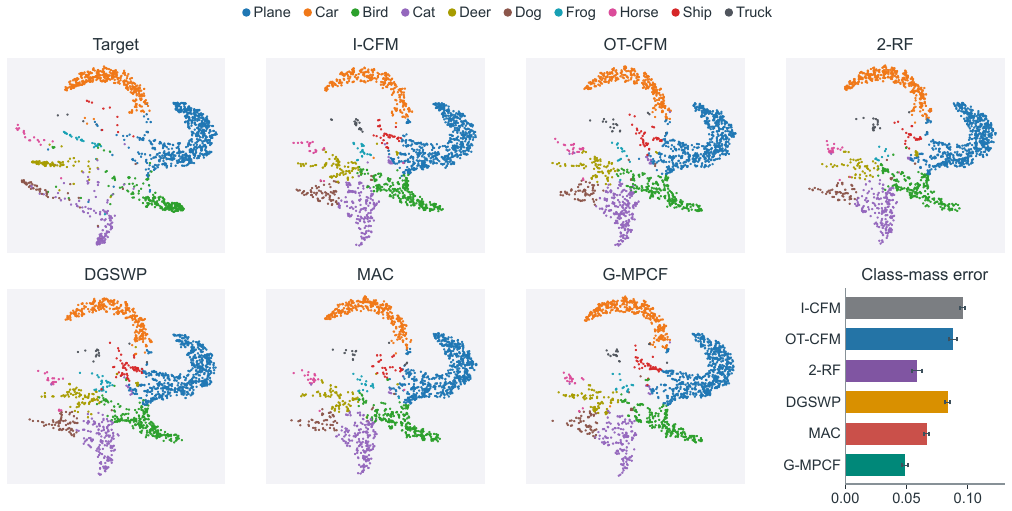}
\caption{\textbf{Sample distributions across six methods on long-tailed CIFAR-10.} Each map contains 1,536 images in a common t-SNE of classifier features, coloured by reference labels or predicted classes. Corrected MassTV is shown as mean $\pm$ standard deviation across eight training runs, with 50,000 images per run at 16 Euler evaluations.}\label{sfig:all_methods_long_tail}
\end{figure}

\begin{figure}[!htbp]\centering\includegraphics[width=166mm,height=.89\textheight,keepaspectratio]{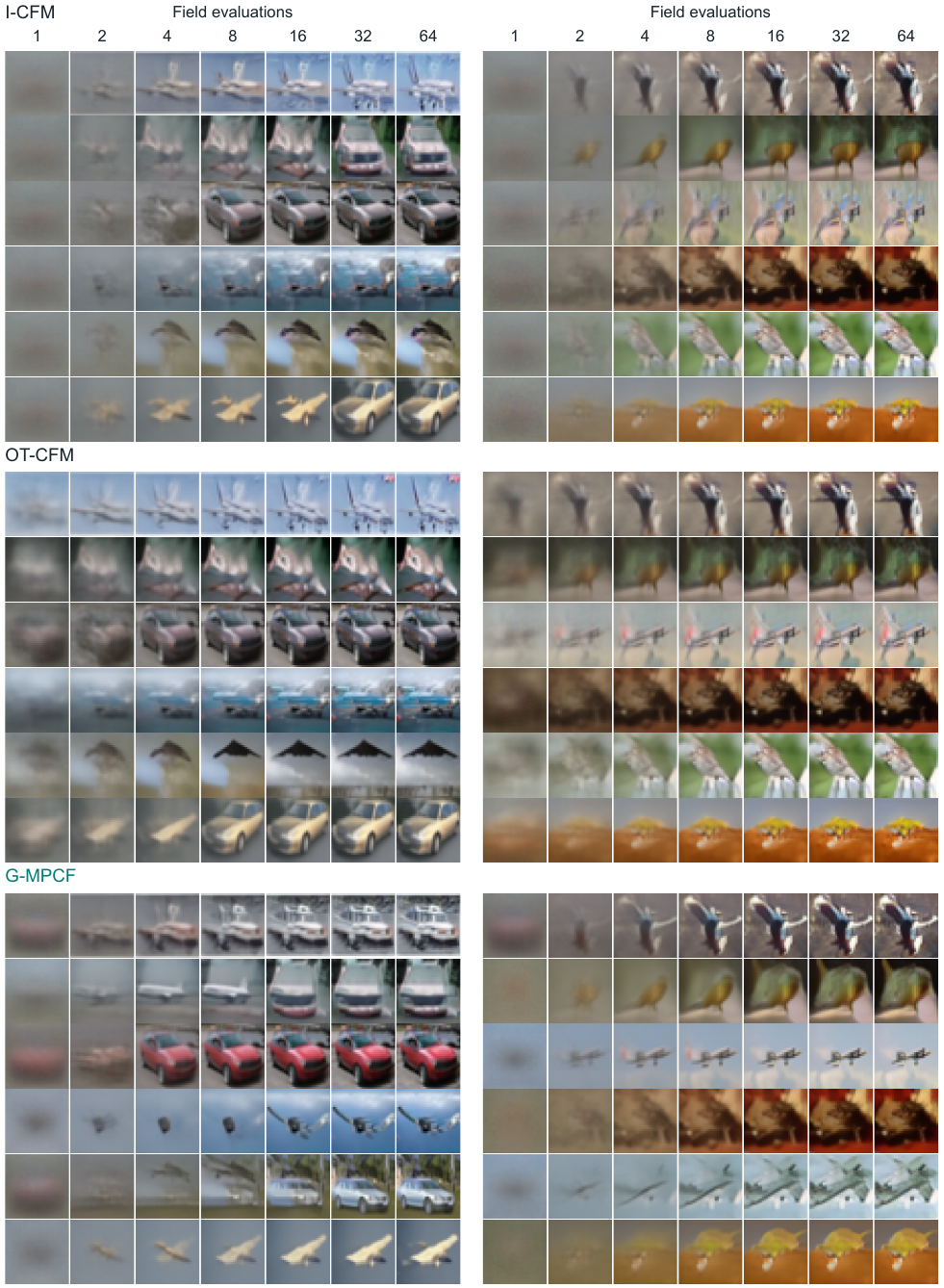}\caption{\textbf{Generated samples across inference budgets on long-tailed CIFAR-10.} Corresponding positions compare outputs from the same Gaussian input across methods at 1, 2, 4, 8, 16, 32 and 64 Euler evaluations on WLT100. The twelve inputs comprise four from each of three training runs.}\label{sfig:budget_samples}\end{figure}

\FloatBarrier

Each feature map contains 1,536 images, sampled uniformly with equal contributions from each training run. Reference images are sampled from the test set in proportion to the target class probabilities.

Principal-component analysis of the classifier's 512-dimensional features, fitted to all 10,000 reference images, retains 50 components and 99.41\% of their variance. The target, I-CFM and G-MPCF comparison in Supplementary Fig.~\ref{sfig:sample_distributions} uses one joint t-SNE\cite{vandermaaten2008visualizing}. The six-method maps in main \textcolor{refblue}{Fig.~3c} and Supplementary Fig.~\ref{sfig:all_methods_long_tail} use a separate joint fit. Each fit combines both target laws and shares coordinate limits across its maps. Both use Euclidean distances, perplexity 40, principal-component initialization and 1,500 Barnes--Hut iterations.

Enlargement bounds use the target class's marginal 2nd and 98th percentiles with an 8\% margin. The car enlargements retain all classes inside these bounds, with 373 target observations, 266 under I-CFM and 346 under G-MPCF. Across the full long-tail evaluations, the predicted car frequency increases from 16.45\% under I-CFM to 20.81\% under G-MPCF, towards the 24.16\% target.

The balanced truck enlargements use $2.2\times$ magnification; all classes remain in the full maps. Their dashed contours enclose 80\% of the target-truck density, estimated with an isotropic Gaussian kernel whose bandwidth is the median distance to the sixth nearest target-truck neighbour. Inset bars report predicted truck frequencies averaged over the three complete evaluations, with a common scale from zero to the 10\% target.

For main \textcolor{refblue}{Fig.~3d}, the common centre is the mean of the balanced reference map. Subtracting this centre and normalizing each displacement gives the unit-circle coordinates. The ring colours use a two-dimensional Gaussian kernel with bandwidth 0.075. Angular profiles use a wrapped Gaussian kernel with bandwidth 0.20 radians. All methods share the colour scale, density limits and dashed reference profile.

The image sequences in Supplementary Fig.~\ref{sfig:budget_samples} use the three paired training runs of the image source-selection experiment, evaluated with a standard Gaussian source and full displacement.

\Needspace{8\baselineskip}
\subsection{Image field and trajectory diagnostics}

Field-loss and shift measurements use the definitions in Section~\ref{sec:diagnostics}. For increments $v_j$ of a numerical image trajectory, dimensionless roughness is $\sum_j\|v_{j+1}-v_j\|/\sum_j\|v_j\|$, averaged over samples. Path length is $\sum_j\|v_j\|$; dividing by endpoint displacement gives the path-length ratio. Relative Euler-to-reference MSE divides the mean squared coordinate discrepancy by the mean squared reference-endpoint coordinate.

\begin{figure}[!htbp]\centering\includegraphics[width=166mm,height=.89\textheight,keepaspectratio]{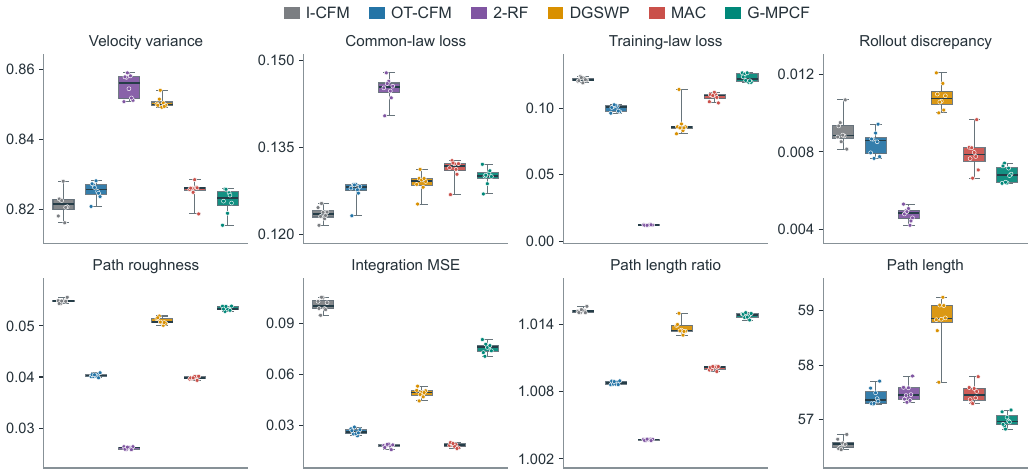}\caption{\textbf{Field and trajectory diagnostics on WLT100.} Field regression, rollout discrepancy, path geometry and integration across six methods. Boxes show medians and interquartile ranges; whiskers span the minimum and maximum, and dots show all eight paired training runs. Euler and reference Heun trajectories use 16 and 128 evaluations, respectively.}\label{sfig:image_fields}\end{figure}

\FloatBarrier

\section{Component interventions}

The component experiment compares five coupling operations and three interior mass-relaxation levels. Independent pairing samples source and target independently; geometry-only pairing uses global rank matching. Mass-only pairing uses calibrated regional assignment and conditional target sampling, and G-MPCF adds the rank permutation. Uncalibrated geometry assigns equal source probability to the 20 fitted regions while retaining their weighted conditional target laws. Its population row-law discrepancy follows from equation~\ref{eq:relax} with $\rho=1$.

The relaxation series adds $\rho=0.25,0.5,0.75$ to the G-MPCF and uncalibrated endpoints. Three independent training runs for each of the eight distinct settings give 24 fitted fields; the five-point relaxation series reuses the six endpoint runs. Across increasing $\rho$, minority recall was 0.565, 0.713, 0.704, 0.727 and 0.664. Geometry-only coupling gave recall 0.714 and MassTV 0.10020. The component runs used independent initializations from the primary experiment.

\Needspace{135pt}
\begingroup\small\setlength{\tabcolsep}{4pt}
\begin{table}[!htbp]\centering
\caption{Component interventions and mass relaxation, three independent training runs per action. Entries are mean $\pm$ sample standard deviation. Population TV concerns the sampled target rows.}
\begin{tabular}{lrrrr}\toprule

Action & FID & MassTV & Minority recall & Population TV\\\midrule

Geometry & $22.53\pm0.09$ & $0.100\pm0.00049$ & $0.714\pm0.057$ & 0.0000\\
G-MPCF & $20.16\pm0.08$ & $0.050\pm0.00051$ & $0.565\pm0.070$ & 1.2e-17\\
Independent & $22.48\pm0.25$ & $0.096\pm0.002$ & $0.631\pm0.150$ & 0.0000\\
Mass only & $19.98\pm0.08$ & $0.050\pm0.00043$ & $0.694\pm0.120$ & 1.2e-17\\
$\rho=0.25$ & $20.15\pm0.03$ & $0.060\pm0.002$ & $0.713\pm0.044$ & 0.0311\\
$\rho=0.50$ & $20.28\pm0.25$ & $0.071\pm0.00041$ & $0.704\pm0.075$ & 0.0621\\
$\rho=0.75$ & $20.56\pm0.09$ & $0.084\pm0.002$ & $0.727\pm0.068$ & 0.0932\\
Uncalibrated & $21.42\pm0.01$ & $0.101\pm0.001$ & $0.664\pm0.119$ & 0.1242\\
\bottomrule\end{tabular}
\end{table}\endgroup

\begingroup\fontsize{8.5}{10.2}\selectfont\setlength{\tabcolsep}{1.7pt}\renewcommand{\arraystretch}{1.02}
\begin{table}[!htbp]\centering
\caption{The 24 component-intervention runs, with matched repetitions identified by run number.}
\begin{tabular}{@{}lrrrrr@{\hspace{4mm}}lrrrrr@{}}\toprule

Action & Run & FID & MassTV & Min. R & Min. P & Action & Run & FID & MassTV & Min. R & Min. P\\\midrule

Geometry & 1 & 22.575 & 0.1006 & 0.6953 & 0.8922 & $\rho=0.25$ & 1 & 20.171 & 0.0611 & 0.7093 & 0.8968\\
Geometry & 2 & 22.425 & 0.1004 & 0.7787 & 0.8997 & $\rho=0.25$ & 2 & 20.115 & 0.0578 & 0.7577 & 0.9090\\
Geometry & 3 & 22.591 & 0.0996 & 0.6690 & 0.8840 & $\rho=0.25$ & 3 & 20.156 & 0.0610 & 0.6707 & 0.8948\\
G-MPCF & 1 & 20.163 & 0.0510 & 0.5463 & 0.9063 & $\rho=0.50$ & 1 & 20.358 & 0.0706 & 0.6243 & 0.9185\\
G-MPCF & 2 & 20.234 & 0.0503 & 0.6433 & 0.9017 & $\rho=0.50$ & 2 & 20.003 & 0.0709 & 0.7740 & 0.9096\\
G-MPCF & 3 & 20.071 & 0.0500 & 0.5063 & 0.8909 & $\rho=0.50$ & 3 & 20.488 & 0.0714 & 0.7123 & 0.8939\\
Independent & 1 & 22.629 & 0.0950 & 0.6990 & 0.8814 & $\rho=0.75$ & 1 & 20.666 & 0.0848 & 0.6730 & 0.8980\\
Independent & 2 & 22.181 & 0.0935 & 0.7337 & 0.8921 & $\rho=0.75$ & 2 & 20.527 & 0.0852 & 0.8030 & 0.9002\\
Independent & 3 & 22.616 & 0.0982 & 0.4590 & 0.9076 & $\rho=0.75$ & 3 & 20.488 & 0.0819 & 0.7050 & 0.9041\\
Mass only & 1 & 20.070 & 0.0502 & 0.7550 & 0.8975 & Uncalibrated & 1 & 21.424 & 0.1012 & 0.5893 & 0.9173\\
Mass only & 2 & 19.932 & 0.0498 & 0.7717 & 0.8885 & Uncalibrated & 2 & 21.404 & 0.1022 & 0.6010 & 0.9028\\
Mass only & 3 & 19.942 & 0.0507 & 0.5557 & 0.8949 & Uncalibrated & 3 & 21.431 & 0.0995 & 0.8007 & 0.8926\\
\bottomrule\end{tabular}
\end{table}\endgroup

\begin{figure}[!htbp]\centering\includegraphics[width=166mm,height=.89\textheight,keepaspectratio]{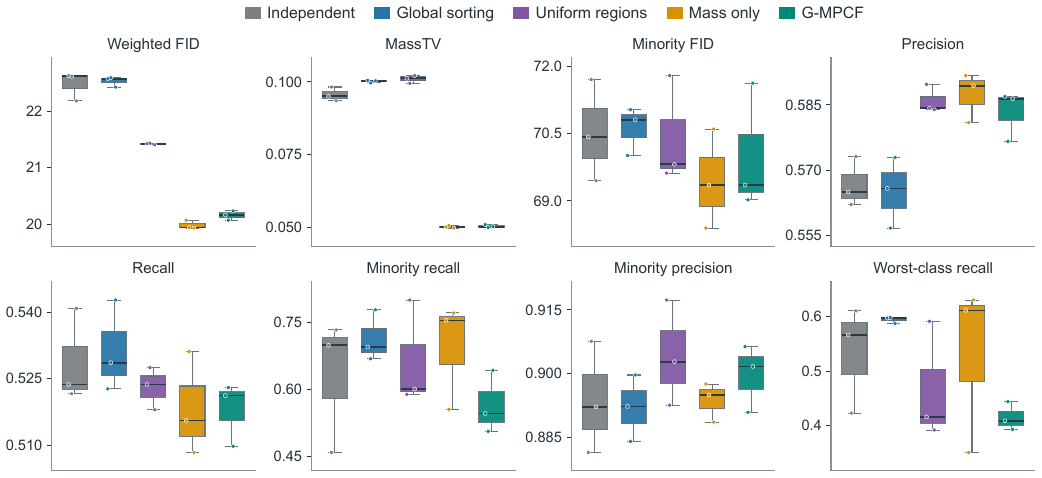}\caption{\textbf{Allocation, fidelity and coverage under coupling interventions.} Eight image metrics compare five coupling operations. Boxes show medians and interquartile ranges; whiskers span the minimum and maximum, and dots show all three training runs. Evaluation uses 50,000 images per run and 16 Euler evaluations.}\label{sfig:component_outcomes}\end{figure}

\FloatBarrier

\section{Transfer conditions and image scaling}
\subsection{Experimental design and paired diagnostics}
\label{sec:diagnostics}

The boundary experiment uses source $Q=\N(\delta e_2,I_d)$ and target
\begin{equation}
 P=p\N(-(1-p)se_1,0.35^2I_d)+(1-p)\N(pse_1,0.35^2I_d).
\end{equation}
Its candidate levels are $d\in\{8,32,128\}$, $p\in\{0.01,0.02,0.05,0.1,0.2\}$, $s\in\{1,2,4,6,8\}$ and $\delta\in\{0,1,2,4\}$. A deterministic maximin procedure, including the anchor conditions, selects 36 of the 300 candidates. The target mean is zero because $p[-(1-p)s]+(1-p)[ps]=0$. The source displacement is orthogonal to the direction separating the target modes. The single-factor slices in main \textcolor{refblue}{Fig.~5a--c} fix $d=32$ and vary one coordinate around $p=0.02$, $s=6$ and $\delta=2$.

Each method uses the same realized component capacities at batch size 200 and shares source and conditional target-noise streams. Independent assignment randomizes the source-to-component relation; the structured assignment uses the first source coordinate to sort this relation. The network is a width-64 residual MLP with $129d+9,856$ parameters, trained for 5,000 updates at learning rate 0.001. There are three independent training runs per condition. Each trained field supplies 5,000 samples at Euler NFE 1, 2, 4, 8 and 16. The reference sampler uses 128 field evaluations, or 64 Heun steps. The boundary endpoint and image feature-shift diagnostics use mean absolute sliced discrepancy,
\begin{equation}
 \mathrm{SWD}_1(A,B)=\frac1{Ln}\sum_{\ell=1}^L\sum_{i=1}^n
       |a_{(i),\ell}-b_{(i),\ell}|.
\end{equation}
The boundary calculation uses 128 sampled unit projection directions, shared across methods. The low-dimensional replication uses $\mathrm{SWD}_2$ as defined in Section~\ref{sec:controlled-training}.

Own-law and common-law losses compare the predicted field with sampled pair displacements, normalized by mean squared displacement. The common-law evaluation uses the same interpolation pairs across methods; own-law evaluation uses each method's training coupling. Interpolation and rollout losses retain the pair-displacement labels and evaluate the field along the prescribed interpolation or the learned trajectory, respectively. Their difference is the rollout penalty. An eight-neighbour velocity-variance estimate uses up to 1,024 states and at most 16 normalized random projections. State SWD measures the discrepancy between interpolation and rollout distributions.

For structured coupling G and independent coupling I, define
\begin{align}
 g_{\rm field}&=1-\frac{E_{\rm own,G}}{E_{\rm own,I}},\\
 g_{\rm roll}&=(E_{\rm roll,I}-E_{\rm interp,I})
                  -(E_{\rm roll,G}-E_{\rm interp,G}),\\
 g_{\rm solver}&=S_I-S_G,\\
 g_{\rm end}&=\mathrm{SWD}_I-\mathrm{SWD}_G.
 \label{eq:diagnostic-gains}
\end{align}
All four coordinates are oriented so that positive values indicate improvement under structured coupling. Field gain is a relative reduction, expressed as a fraction in the tables and as a percentage in the main figure. Rollout, solver and endpoint reductions are absolute differences in their respective metrics. The single-factor endpoint curves use $100(1-\mathrm{SWD}_G/\mathrm{SWD}_I)$ for each paired run before averaging. The symmetric logarithmic axes in main \textcolor{refblue}{Fig.~5d--f} have central linear intervals of $\pm0.02$ percentage points, $\pm5\times10^{-4}$ and $\pm2\times10^{-6}$, respectively.

The descriptive regression uses the 36 condition means, with each predictor centered and divided by its sample standard deviation across conditions. Coefficients for field gain, rollout reduction and solver reduction are approximately $6.34\times10^{-4}$, $3.87\times10^{-3}$ and $1.42\times10^{-3}$, with intercept $2.36\times10^{-3}$ and fitted $R^2=0.8884$.

\begingroup\fontsize{8.5}{10.2}\selectfont\setlength{\tabcolsep}{2.2pt}
\begin{table}[!htbp]\centering
\caption{All 36 experimental conditions and diagnostic gains, averaged over three paired training runs. Design parameters are dimension $d$, rare probability $p$, separation $s$ and source displacement $\delta$. Field gain is a relative reduction; the other gains are absolute I-CFM minus G-MPCF differences.}
\begin{tabular}{@{}lrrrrr@{\hspace{4mm}}lrrrrr@{}}\toprule
Cond. & $d,p,s,\delta$ & Field & Rollout & Solver & SWD & Cond. & $d,p,s,\delta$ & Field & Rollout & Solver & SWD\\\midrule
1 & $8,0.01,1,0$ & 0.0024 & 0.0006 & 2.9e-06 & 4.4e-05 & 19 & $32,0.05,6,2$ & 0.0161 & 0.0488 & 0.0009 & -0.0040\\
2 & $8,0.01,1,4$ & 0.0055 & 0.0005 & 1.8e-05 & 1.2e-05 & 20 & $32,0.10,2,4$ & 0.0168 & 0.0066 & 3.0e-05 & 0.0004\\
3 & $8,0.01,8,0$ & 0.0144 & 0.0922 & 0.0068 & 0.0037 & 21 & $32,0.10,6,2$ & 0.0277 & 0.1034 & 0.0014 & 0.0032\\
4 & $8,0.02,6,2$ & 0.0362 & 0.0693 & 0.0035 & 0.0010 & 22 & $32,0.20,1,0$ & 0.0077 & 0.0015 & 9.9e-06 & 0.0020\\
5 & $8,0.10,2,1$ & 0.0401 & 0.0554 & 0.0005 & -0.0013 & 23 & $32,0.20,2,0$ & 0.0096 & 0.0180 & 2.9e-05 & 0.0009\\
6 & $8,0.10,8,1$ & 0.1345 & 0.6085 & 0.0481 & 0.0234 & 24 & $32,0.20,4,0$ & 0.0227 & 0.0998 & 0.0006 & 0.0055\\
7 & $8,0.10,8,4$ & 0.2151 & 0.3086 & 0.0497 & 0.0205 & 25 & $32,0.20,6,0$ & 0.0359 & 0.2178 & 0.0025 & 0.0052\\
8 & $8,0.20,1,4$ & 0.0628 & 0.0132 & 0.0004 & -4.8e-05 & 26 & $32,0.20,6,2$ & 0.0439 & 0.1951 & 0.0022 & 0.0086\\
9 & $32,0.01,2,0$ & 0.0014 & 8.3e-05 & 1.9e-06 & 0.0003 & 27 & $32,0.20,8,0$ & 0.0543 & 0.3667 & 0.0104 & 0.0159\\
10 & $32,0.01,6,2$ & 0.0045 & 0.0114 & 0.0001 & -0.0015 & 28 & $128,0.01,1,0$ & 6.4e-05 & 1.8e-05 & 1.7e-07 & -3.6e-05\\
11 & $32,0.02,1,2$ & 0.0011 & 0.0001 & 5.1e-06 & 0.0003 & 29 & $128,0.01,1,4$ & 0.0001 & -2.8e-06 & 1.4e-06 & -4.3e-05\\
12 & $32,0.02,2,2$ & 0.0045 & 0.0002 & 1.3e-05 & 0.0004 & 30 & $128,0.01,8,0$ & 0.0010 & 0.0002 & -2.6e-07 & -7.3e-05\\
13 & $32,0.02,4,2$ & 0.0050 & 0.0097 & 5.1e-05 & -0.0013 & 31 & $128,0.02,6,2$ & 0.0015 & 0.0004 & 3.7e-07 & 6.1e-05\\
14 & $32,0.02,6,0$ & 0.0052 & 0.0236 & 0.0003 & -0.0011 & 32 & $128,0.05,4,0$ & 0.0018 & 0.0005 & -1.6e-07 & -4.5e-05\\
15 & $32,0.02,6,1$ & 0.0053 & 0.0228 & 0.0004 & -0.0013 & 33 & $128,0.10,1,1$ & 0.0009 & 4.9e-05 & -7.2e-06 & 1.8e-05\\
16 & $32,0.02,6,2$ & 0.0063 & 0.0205 & 0.0003 & -0.0019 & 34 & $128,0.10,8,1$ & -0.0003 & 0.0382 & 0.0003 & 0.0043\\
17 & $32,0.02,6,4$ & 0.0099 & 0.0147 & 0.0003 & -0.0017 & 35 & $128,0.10,8,4$ & 0.0025 & 0.0344 & 0.0004 & 0.0049\\
18 & $32,0.02,8,2$ & 0.0095 & 0.0338 & 0.0012 & -0.0016 & 36 & $128,0.20,1,4$ & 0.0017 & 0.0003 & -3.1e-06 & 0.0001\\
\bottomrule\end{tabular}
\end{table}\endgroup

\begingroup\fontsize{8.5}{10.2}\selectfont\setlength{\tabcolsep}{1.8pt}
\begin{table}[!htbp]\centering
\caption{All 108 paired diagnostic measurements at 16 evaluations. Pair identifiers give condition/run. Field gain is a relative loss reduction; rollout, solver and SWD gains are absolute I-CFM minus G-MPCF differences. Positive values indicate improvement.}
\begin{tabular}{@{}lrrrr@{\hspace{4mm}}lrrrr@{\hspace{4mm}}lrrrr@{}}\toprule
Pair & Field & Rollout & Solver & SWD & Pair & Field & Rollout & Solver & SWD & Pair & Field & Rollout & Solver & SWD\\\midrule
1/1 & 0.0023 & 0.0006 & 1.3e-06 & -0.0003 & 13/1 & 0.0051 & 0.0094 & 4.5e-05 & -0.0021 & 25/1 & 0.0362 & 0.2146 & 0.0048 & 0.0090\\
1/2 & 0.0040 & 0.0002 & 1.1e-05 & 0.0001 & 13/2 & 0.0048 & 0.0098 & 5.2e-05 & -0.0013 & 25/2 & 0.0357 & 0.2156 & 0.0021 & 0.0077\\
1/3 & 0.0010 & 0.0009 & -3.3e-06 & 0.0002 & 13/3 & 0.0049 & 0.0099 & 5.8e-05 & -0.0004 & 25/3 & 0.0357 & 0.2233 & 0.0007 & -0.0012\\
2/1 & 0.0058 & 7.3e-05 & 4.7e-06 & -9.3e-05 & 14/1 & 0.0053 & 0.0203 & 0.0006 & -0.0019 & 26/1 & 0.0463 & 0.1917 & 0.0041 & 0.0213\\
2/2 & 0.0063 & 0.0010 & 4.1e-05 & -5.3e-06 & 14/2 & 0.0052 & 0.0254 & 9.6e-05 & -0.0005 & 26/2 & 0.0428 & 0.1920 & 0.0012 & 0.0082\\
2/3 & 0.0044 & 0.0003 & 7.4e-06 & 0.0001 & 14/3 & 0.0050 & 0.0252 & 0.0004 & -0.0008 & 26/3 & 0.0426 & 0.2017 & 0.0014 & -0.0037\\
3/1 & 0.0070 & 0.1079 & 0.0026 & 0.0008 & 15/1 & 0.0060 & 0.0190 & 0.0006 & -0.0021 & 27/1 & 0.0599 & 0.3548 & 0.0165 & 0.0329\\
3/2 & 0.0073 & 0.0838 & 0.0082 & 0.0053 & 15/2 & 0.0047 & 0.0246 & 0.0002 & -0.0013 & 27/2 & 0.0491 & 0.3667 & 0.0075 & 0.0071\\
3/3 & 0.0289 & 0.0848 & 0.0095 & 0.0050 & 15/3 & 0.0051 & 0.0247 & 0.0004 & -0.0005 & 27/3 & 0.0537 & 0.3787 & 0.0073 & 0.0077\\
4/1 & 0.0347 & 0.0808 & 0.0029 & -0.0035 & 16/1 & 0.0067 & 0.0181 & 0.0005 & -0.0026 & 28/1 & 0.0001 & -1.1e-06 & 6.7e-07 & -0.0001\\
4/2 & 0.0303 & 0.0605 & 0.0036 & 0.0021 & 16/2 & 0.0059 & 0.0219 & 0.0002 & -0.0016 & 28/2 & 2.9e-05 & 1.7e-05 & 1.4e-07 & 7.3e-05\\
4/3 & 0.0436 & 0.0667 & 0.0041 & 0.0045 & 16/3 & 0.0062 & 0.0213 & 0.0004 & -0.0014 & 28/3 & 6.1e-05 & 3.7e-05 & -3.0e-07 & -5.5e-05\\
5/1 & 0.0402 & 0.0550 & 0.0005 & -0.0036 & 17/1 & 0.0099 & 0.0136 & 0.0004 & -0.0025 & 29/1 & 0.0002 & -8.6e-05 & -2.5e-07 & -0.0001\\
5/2 & 0.0417 & 0.0526 & 0.0004 & 0.0009 & 17/2 & 0.0097 & 0.0148 & 0.0002 & -0.0016 & 29/2 & 6.2e-05 & 4.6e-06 & 7.0e-08 & 5.7e-05\\
5/3 & 0.0383 & 0.0586 & 0.0005 & -0.0012 & 17/3 & 0.0101 & 0.0157 & 0.0004 & -0.0009 & 29/3 & 0.0001 & 7.3e-05 & 4.3e-06 & -4.3e-05\\
6/1 & 0.1433 & 0.5939 & 0.0211 & 0.0396 & 18/1 & 0.0098 & 0.0299 & 0.0014 & -0.0014 & 30/1 & 0.0010 & 0.0002 & 2.0e-07 & -2.9e-06\\
6/2 & 0.1265 & 0.6084 & 0.0789 & 0.0122 & 18/2 & 0.0090 & 0.0340 & 0.0012 & -0.0011 & 30/2 & 0.0009 & 0.0002 & -6.8e-07 & -0.0002\\
6/3 & 0.1336 & 0.6230 & 0.0444 & 0.0186 & 18/3 & 0.0098 & 0.0375 & 0.0010 & -0.0024 & 30/3 & 0.0010 & 0.0002 & -3.0e-07 & -1.8e-05\\
7/1 & 0.2238 & 0.2989 & 0.0286 & 0.0174 & 19/1 & 0.0181 & 0.0477 & 0.0010 & -0.0050 & 31/1 & 0.0019 & 0.0006 & -2.0e-07 & 0.0006\\
7/2 & 0.2048 & 0.3129 & 0.0748 & 0.0209 & 19/2 & 0.0167 & 0.0509 & 0.0010 & -0.0021 & 31/2 & 0.0013 & 0.0005 & 6.9e-06 & -0.0002\\
7/3 & 0.2168 & 0.3140 & 0.0457 & 0.0232 & 19/3 & 0.0135 & 0.0479 & 0.0007 & -0.0048 & 31/3 & 0.0014 & 0.0002 & -5.6e-06 & -0.0002\\
8/1 & 0.0611 & 0.0126 & 0.0004 & -0.0014 & 20/1 & 0.0181 & 0.0064 & 2.5e-05 & 0.0009 & 32/1 & 0.0019 & 0.0005 & 2.9e-07 & 0.0002\\
8/2 & 0.0646 & 0.0119 & 0.0004 & 0.0003 & 20/2 & 0.0162 & 0.0067 & 4.3e-05 & -4.6e-05 & 32/2 & 0.0017 & 0.0004 & -1.1e-06 & -0.0002\\
8/3 & 0.0628 & 0.0151 & 0.0004 & 0.0010 & 20/3 & 0.0160 & 0.0069 & 2.1e-05 & 0.0002 & 32/3 & 0.0018 & 0.0005 & 3.3e-07 & -0.0001\\
9/1 & 0.0014 & -0.0001 & 2.8e-06 & 0.0004 & 21/1 & 0.0294 & 0.0958 & 0.0020 & -0.0023 & 33/1 & 0.0009 & 0.0003 & 7.4e-08 & 0.0001\\
9/2 & 0.0009 & 0.0001 & 3.6e-06 & 0.0003 & 21/2 & 0.0285 & 0.1026 & 0.0014 & 0.0063 & 33/2 & 0.0008 & -0.0003 & -2.2e-05 & -4.0e-06\\
9/3 & 0.0020 & 0.0002 & -7.0e-07 & 0.0001 & 21/3 & 0.0253 & 0.1118 & 0.0009 & 0.0055 & 33/3 & 0.0008 & 0.0002 & -1.8e-07 & -8.0e-05\\
10/1 & 0.0043 & 0.0110 & 0.0001 & -0.0023 & 22/1 & 0.0078 & 0.0015 & 6.3e-06 & 0.0015 & 34/1 & 0.0008 & 0.0316 & 0.0004 & 0.0024\\
10/2 & 0.0035 & 0.0116 & 2.3e-05 & -0.0017 & 22/2 & 0.0075 & 0.0010 & 1.1e-05 & 0.0020 & 34/2 & -0.0005 & 0.0429 & 0.0002 & 0.0054\\
10/3 & 0.0056 & 0.0115 & 0.0002 & -0.0006 & 22/3 & 0.0078 & 0.0019 & 1.2e-05 & 0.0023 & 34/3 & -0.0013 & 0.0401 & 0.0003 & 0.0051\\
11/1 & 0.0009 & 0.0001 & 2.9e-06 & 0.0003 & 23/1 & 0.0068 & 0.0183 & 4.1e-05 & -0.0002 & 35/1 & 0.0010 & 0.0315 & 0.0004 & 0.0017\\
11/2 & 0.0008 & 0.0002 & 6.5e-06 & 0.0003 & 23/2 & 0.0145 & 0.0169 & -4.0e-06 & 0.0019 & 35/2 & 0.0046 & 0.0378 & 0.0003 & 0.0075\\
11/3 & 0.0016 & 6.7e-05 & 5.8e-06 & 0.0004 & 23/3 & 0.0075 & 0.0190 & 5.0e-05 & 0.0009 & 35/3 & 0.0019 & 0.0339 & 0.0004 & 0.0055\\
12/1 & 0.0051 & 0.0003 & 1.2e-05 & 0.0003 & 24/1 & 0.0200 & 0.1006 & 0.0011 & 0.0068 & 36/1 & 0.0019 & 0.0005 & 1.8e-06 & 0.0003\\
12/2 & 0.0033 & 5.5e-05 & 1.0e-05 & 0.0004 & 24/2 & 0.0248 & 0.0970 & 0.0004 & 0.0068 & 36/2 & 0.0016 & 0.0003 & -2.5e-07 & 0.0001\\
12/3 & 0.0050 & 0.0001 & 1.6e-05 & 0.0006 & 24/3 & 0.0233 & 0.1017 & 0.0004 & 0.0030 & 36/3 & 0.0016 & 0.0001 & -1.1e-05 & -8.8e-05\\
\bottomrule\end{tabular}
\end{table}\endgroup

\clearpage\begin{landscape}\thispagestyle{supplementlandscape}
\noindent\begin{minipage}{\linewidth}\centering
\includegraphics[width=\linewidth,height=142mm,keepaspectratio]{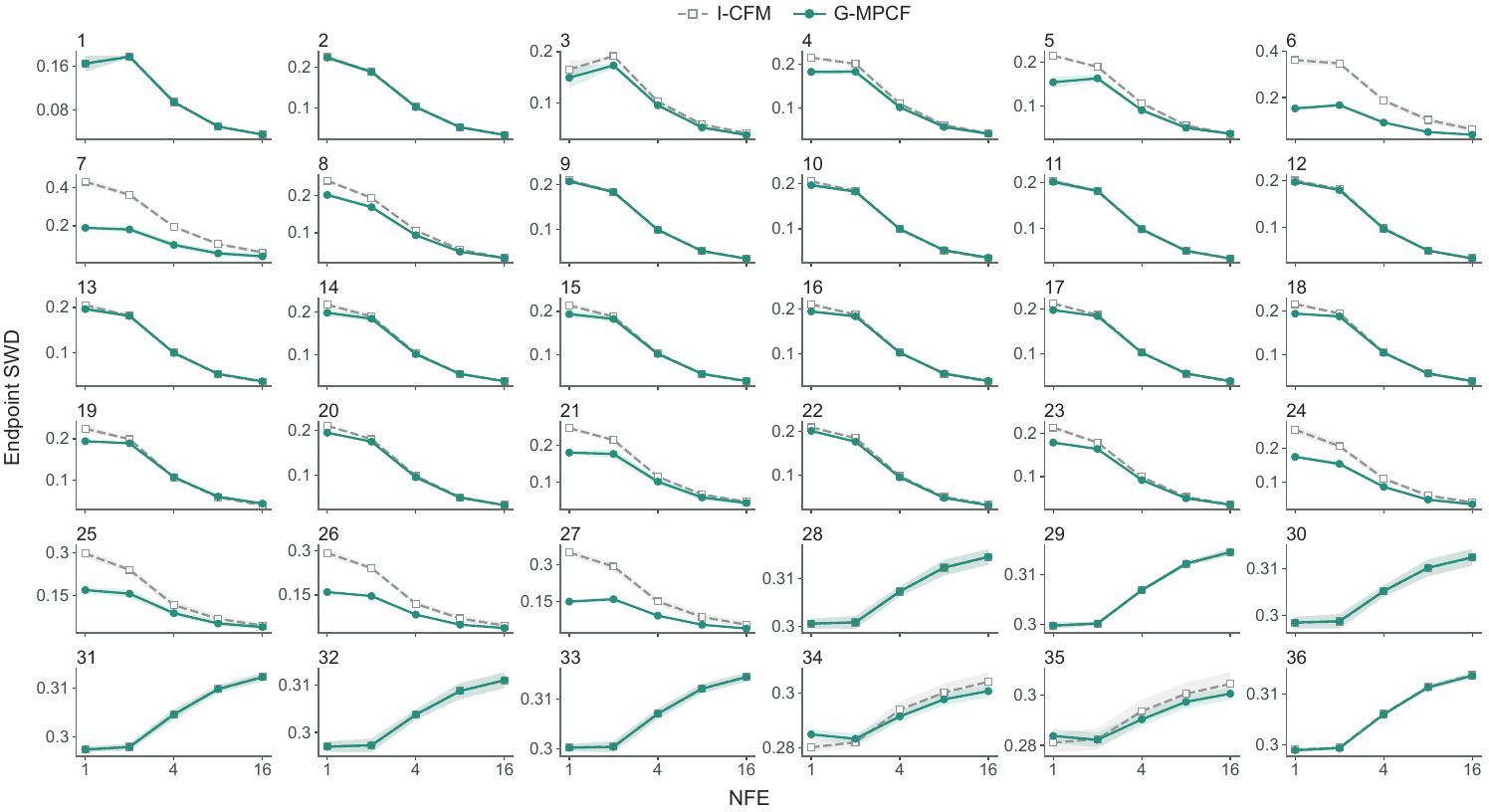}
\captionof{figure}{\textbf{Inference curves across all 36 conditions.} Condition numbers follow \suppref{Table~S10}. Curves show mean endpoint SWD and bands show standard deviations across three paired training runs, with 5,000 generated points per run. NFE is logarithmic; vertical ranges vary by condition.}\label{sfig:conditions_01_06}
\end{minipage}\end{landscape}

\FloatBarrier

\subsection{Data, model size and computation}

The image scaling design varies one factor around the full-data, base-width-96, full-compute reference. Data fractions are 0.125, 0.25, 0.5 and 1, corresponding to 5,625, 11,250, 22,500 and 45,000 training images. Base widths are 64, 96 and 128, with 8,952,067, 20,118,915 and 35,746,307 trainable parameters. Charged-compute multipliers are 0.25, 0.5 and 1. Removing duplicate configurations leaves eight settings. Two methods and three independent training runs per configuration give 48 fields. Five NFE values give 240 evaluations, each with 10,000 generated images. At NFE 16, minority recall improved in 13 of 24 paired training runs.

\begingroup\fontsize{8.5}{10.2}\selectfont\setlength{\tabcolsep}{0.9pt}\renewcommand{\arraystretch}{1.02}
\begin{table}[!htbp]\centering
\caption{Image scaling results across eight settings, three paired training runs and 10,000 images per method and inference budget. I and G denote I-CFM and G-MPCF. FID is mean $\pm$ sample standard deviation; TV denotes mean corrected MassTV. Data fraction $f$ is in per mille and compute $c$ in percent.}
\begin{tabular}{@{}lrrrrrrr@{\hspace{4mm}}lrrrrrrr@{}}\toprule

$f$ & $w$ & $c$ & NFE & FID I & FID G & TV I & TV G & $f$ & $w$ & $c$ & NFE & FID I & FID G & TV I & TV G\\\midrule

125 & 96 & 100 & 1 & $392.33\pm6.66$ & $332.49\pm3.43$ & 0.5574 & 0.5806 & 1000 & 96 & 25 & 1 & $407.46\pm9.61$ & $396.41\pm3.12$ & 0.9007 & 0.9011\\
125 & 96 & 100 & 2 & $197.48\pm9.38$ & $164.66\pm2.45$ & 0.5633 & 0.6385 & 1000 & 96 & 25 & 2 & $321.82\pm6.16$ & $317.52\pm6.37$ & 0.8920 & 0.8936\\
125 & 96 & 100 & 4 & $84.75\pm2.19$ & $68.84\pm1.26$ & 0.5051 & 0.4059 & 1000 & 96 & 25 & 4 & $270.42\pm5.14$ & $266.74\pm7.41$ & 0.8351 & 0.8422\\
125 & 96 & 100 & 8 & $46.19\pm0.72$ & $39.80\pm0.41$ & 0.2964 & 0.2409 & 1000 & 96 & 25 & 8 & $240.22\pm5.06$ & $238.33\pm7.42$ & 0.7693 & 0.7840\\
125 & 96 & 100 & 16 & $31.45\pm0.05$ & $29.20\pm0.36$ & 0.2024 & 0.1698 & 1000 & 96 & 25 & 16 & $224.42\pm5.01$ & $223.95\pm7.34$ & 0.7199 & 0.7378\\
250 & 96 & 100 & 1 & $392.78\pm5.56$ & $350.96\pm4.19$ & 0.5260 & 0.5049 & 1000 & 96 & 50 & 1 & $389.56\pm3.47$ & $344.57\pm4.85$ & 0.5080 & 0.5520\\
250 & 96 & 100 & 2 & $185.45\pm5.41$ & $149.84\pm2.16$ & 0.5657 & 0.5568 & 1000 & 96 & 50 & 2 & $223.39\pm7.61$ & $179.21\pm3.04$ & 0.8108 & 0.6859\\
250 & 96 & 100 & 4 & $85.34\pm0.29$ & $71.64\pm0.88$ & 0.4718 & 0.3987 & 1000 & 96 & 50 & 4 & $93.50\pm3.84$ & $72.19\pm1.35$ & 0.3608 & 0.2374\\
250 & 96 & 100 & 8 & $49.36\pm0.52$ & $43.49\pm0.37$ & 0.3138 & 0.2584 & 1000 & 96 & 50 & 8 & $44.48\pm1.28$ & $37.75\pm0.43$ & 0.1895 & 0.1107\\
250 & 96 & 100 & 16 & $35.17\pm0.50$ & $32.76\pm0.23$ & 0.2100 & 0.1798 & 1000 & 96 & 50 & 16 & $29.09\pm0.59$ & $27.39\pm0.21$ & 0.1031 & 0.0655\\
500 & 96 & 100 & 1 & $384.94\pm6.06$ & $346.22\pm3.31$ & 0.5504 & 0.5084 & 1000 & 96 & 100 & 1 & $387.47\pm3.47$ & $346.79\pm4.51$ & 0.5205 & 0.5035\\
500 & 96 & 100 & 2 & $167.02\pm3.31$ & $129.78\pm2.75$ & 0.5143 & 0.5740 & 1000 & 96 & 100 & 2 & $165.88\pm4.15$ & $127.98\pm1.82$ & 0.5035 & 0.5567\\
500 & 96 & 100 & 4 & $77.53\pm1.74$ & $56.99\pm0.70$ & 0.3797 & 0.2599 & 1000 & 96 & 100 & 4 & $69.51\pm0.60$ & $49.75\pm0.60$ & 0.3193 & 0.2297\\
500 & 96 & 100 & 8 & $43.19\pm0.70$ & $33.70\pm0.26$ & 0.2577 & 0.1316 & 1000 & 96 & 100 & 8 & $33.87\pm0.33$ & $27.57\pm0.10$ & 0.1689 & 0.0957\\
500 & 96 & 100 & 16 & $32.01\pm0.43$ & $27.14\pm0.13$ & 0.1753 & 0.0953 & 1000 & 96 & 100 & 16 & $24.22\pm0.21$ & $21.74\pm0.13$ & 0.0917 & 0.0518\\
1000 & 64 & 100 & 1 & $385.99\pm1.26$ & $339.76\pm6.52$ & 0.5132 & 0.5127 & 1000 & 128 & 100 & 1 & $379.34\pm2.90$ & $344.50\pm4.13$ & 0.5175 & 0.5035\\
1000 & 64 & 100 & 2 & $168.58\pm1.02$ & $126.33\pm1.10$ & 0.5035 & 0.5627 & 1000 & 128 & 100 & 2 & $171.65\pm5.06$ & $132.45\pm2.42$ & 0.6155 & 0.6033\\
1000 & 64 & 100 & 4 & $72.15\pm0.46$ & $50.56\pm0.31$ & 0.3311 & 0.2202 & 1000 & 128 & 100 & 4 & $68.14\pm0.63$ & $49.50\pm0.97$ & 0.3001 & 0.2280\\
1000 & 64 & 100 & 8 & $37.21\pm0.19$ & $29.68\pm0.40$ & 0.1877 & 0.0903 & 1000 & 128 & 100 & 8 & $30.75\pm0.13$ & $25.32\pm0.28$ & 0.1740 & 0.1024\\
1000 & 64 & 100 & 16 & $27.62\pm0.31$ & $24.25\pm0.37$ & 0.1139 & 0.0532 & 1000 & 128 & 100 & 16 & $20.79\pm0.15$ & $19.18\pm0.08$ & 0.0923 & 0.0558\\
\bottomrule\end{tabular}
\end{table}\endgroup

\begingroup\small\setlength{\tabcolsep}{4pt}
\begin{table}[!htbp]\centering
\caption{Scaling architecture and computation. Data fraction $f$ is in per mille and charged computation $c$ in percent.}
\begin{tabular}{lrrrrrrr}\toprule

$f$ & $w$ & $c$ & Method & Updates & Params $/10^6$ & FLOPs $/10^{17}$ & Hours\\\midrule

125 & 96 & 100 & I-CFM & 100000 & 20.119 & 5.42498 & 3.50\\
125 & 96 & 100 & G-MPCF & 99999 & 20.119 & 5.42494 & 3.55\\
250 & 96 & 100 & I-CFM & 100000 & 20.119 & 5.42498 & 3.50\\
250 & 96 & 100 & G-MPCF & 99999 & 20.119 & 5.42494 & 3.80\\
500 & 96 & 100 & I-CFM & 100000 & 20.119 & 5.42498 & 3.80\\
500 & 96 & 100 & G-MPCF & 99999 & 20.119 & 5.42494 & 4.09\\
1000 & 96 & 100 & I-CFM & 100000 & 20.119 & 5.42498 & 3.86\\
1000 & 96 & 100 & G-MPCF & 99999 & 20.119 & 5.42494 & 4.09\\
1000 & 64 & 100 & I-CFM & 220923 & 8.952 & 5.42498 & 6.21\\
1000 & 64 & 100 & G-MPCF & 220921 & 8.952 & 5.42496 & 6.71\\
1000 & 128 & 100 & I-CFM & 56773 & 35.746 & 5.42490 & 2.76\\
1000 & 128 & 100 & G-MPCF & 56773 & 35.746 & 5.42491 & 2.76\\
1000 & 96 & 25 & I-CFM & 25000 & 20.119 & 1.35625 & 0.90\\
1000 & 96 & 25 & G-MPCF & 24999 & 20.119 & 1.35619 & 0.87\\
1000 & 96 & 50 & I-CFM & 50000 & 20.119 & 2.71249 & 1.69\\
1000 & 96 & 50 & G-MPCF & 49999 & 20.119 & 2.71244 & 1.77\\
\bottomrule\end{tabular}
\end{table}\endgroup

\begin{figure}[!htbp]\centering\includegraphics[width=166mm,height=.89\textheight,keepaspectratio]{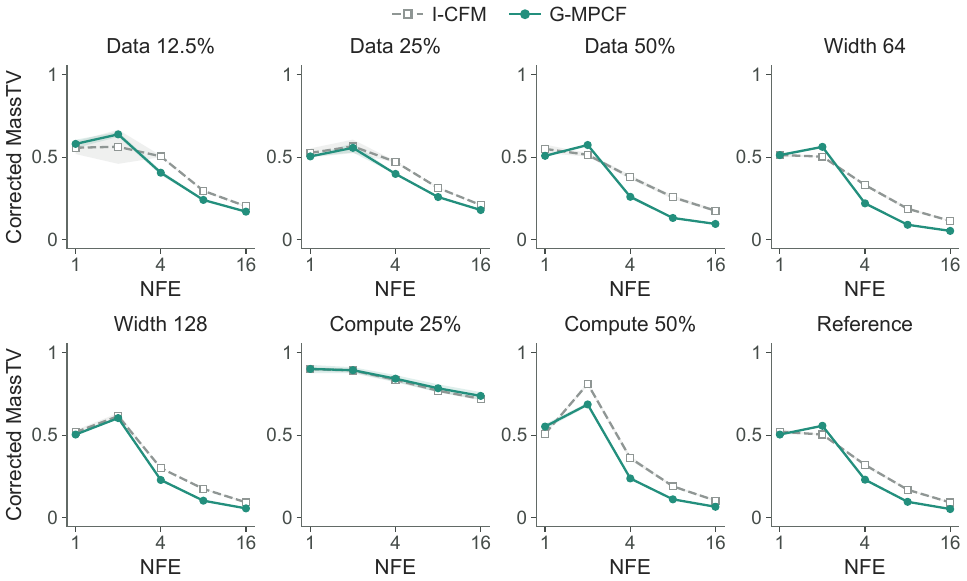}\caption{\textbf{Class-mass error across resource settings.} The eight configurations vary data quantity, model width or computation around the full-data, width-96 reference. Curves show means and bands show standard deviations across three paired training runs, with 10,000 images per run and inference budget.}\label{sfig:scaling_mass}\end{figure}
\begin{figure}[!htbp]\centering\includegraphics[width=166mm,height=.89\textheight,keepaspectratio]{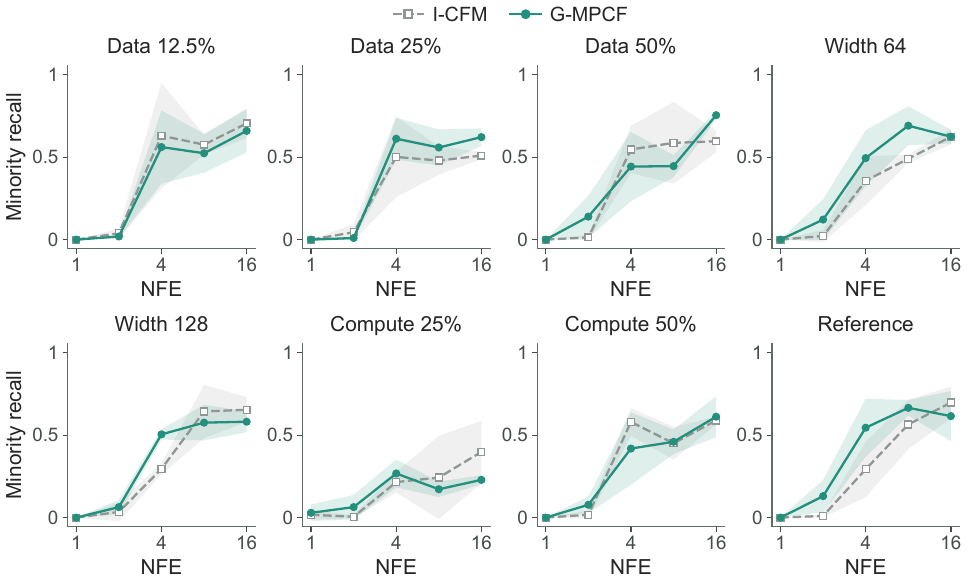}\caption{\textbf{Minority recall across resource settings.} The eight configurations vary training-data fraction, model width or computation. Curves show means and bands show standard deviations across three paired training runs, with 10,000 images per run and inference budget.}\label{sfig:scaling_recall}\end{figure}

\FloatBarrier

\section{Validation-based source and action selection}
\subsection{Candidate grid and decision rule}

The selector compares a standard Gaussian with a directly sampleable mixture of 20 probabilistic principal-component components\cite{tipping1999mixtures}, fitted on the training split. Components follow the fitted target regions and have probabilities $p_j$, rank 64 in 3,072-dimensional pixel space and covariance floor $10^{-5}$. For G-MPCF with this source, the sampled component index supplies the source-region label for conditional target sampling; the within-region rank permutation then uses the source projections. Each source is crossed with independent, OT-CFM and G-MPCF coupling. Each of these six source--coupling combinations is trained three times independently, giving 18 fields.

For each full numerical endpoint $T_{\theta,k}(z)$, define the residual action
\begin{equation}
 A_{\lambda,k}(z)=z+\lambda[T_{\theta,k}(z)-z],\qquad
 \lambda\in\{0.25,0.5,0.75,1\},\quad k\in\{1,2,4,8,16\}.
\end{equation}
Adding source-retention actions $A_0(z)=z$ gives $2(1+3\cdot4\cdot5)=122$ candidates. The residual fraction scales the final displacement after integration at the specified NFE.

Selection uses mean validation KID and corrected MassTV across three training runs. Actions qualify on mass when their mean MassTV is at most 0.01 above the best source-retention mean, 0.7929. This threshold of 0.8029 admits 76 of the 122 actions. Among them, the KID threshold is the minimum mean plus its standard error. The rule chooses the smallest NFE among qualifying actions, followed by fixed source, coupling and fraction tie breaks. The minimum validation KID is 0.0131 with standard error $1.23\times10^{-4}$, giving a threshold of 0.0132. A single action qualifies, using a standard Gaussian source, G-MPCF, $\lambda=1$ and NFE 16.

\subsection{Test-set performance across candidate actions}

The selected action has test FID $21.775\pm0.092$, corrected MassTV $0.05489\pm0.00306$ and minority recall $0.627\pm0.099$. The Gaussian independent full-flow control gives $24.359\pm0.108$, $0.09765\pm0.00791$ and $0.671\pm0.184$, respectively; the matched OT-CFM control has FID $23.874\pm0.248$ and MassTV $0.09197\pm0.00082$. Retaining the fitted PPCA source gives FID approximately 265.65. Values are means and sample standard deviations across three training runs, each supplying 10,000 images per candidate. Validation and test class-mass estimates use the same generated predictions and target law, while FID and KID use the appropriate real split.

\begingroup\fontsize{8.5}{10.2}\selectfont\setlength{\tabcolsep}{1.8pt}\renewcommand{\arraystretch}{1.02}
\begin{table}[!htbp]\centering
\caption{All 122 candidate actions for the two sources. I, OT and G-M denote independent, optimal-transport and G-MPCF pairings. Validation KID is multiplied by $10^3$; other outcomes use test data. FID is mean $\pm$ sample standard deviation over three training runs.}
\begin{tabular}{@{}lrrrrrr@{\hspace{4mm}}lrrrrrr@{}}\toprule

\multicolumn{7}{c}{Gaussian source} & \multicolumn{7}{c}{PPCA mixture source}\\
\cmidrule(lr){1-7}\cmidrule(lr){8-14}
Pairing & $\lambda$ & NFE & Val. KID & Test FID & MassTV & Min. R & Pairing & $\lambda$ & NFE & Val. KID & Test FID & MassTV & Min. R\\\midrule

Keep & 0.00 & 0 & 605.60 & $454.03\pm0.57$ & 0.9813 & 0.0000 & Keep & 0.00 & 0 & 277.44 & $265.65\pm0.58$ & 0.7929 & 0.1723\\
I & 0.25 & 1 & 574.61 & $440.40\pm0.56$ & 0.9813 & 0.0000 & I & 0.25 & 1 & 290.88 & $274.62\pm0.85$ & 0.7917 & 0.1046\\
I & 0.25 & 2 & 566.95 & $436.38\pm0.59$ & 0.9813 & 0.0000 & I & 0.25 & 2 & 285.62 & $270.72\pm0.86$ & 0.7817 & 0.1108\\
I & 0.25 & 4 & 552.32 & $428.67\pm0.75$ & 0.9813 & 0.0000 & I & 0.25 & 4 & 277.25 & $263.69\pm0.46$ & 0.7796 & 0.1489\\
I & 0.25 & 8 & 535.43 & $419.81\pm0.79$ & 0.9813 & 0.0000 & I & 0.25 & 8 & 271.44 & $258.00\pm0.52$ & 0.7874 & 0.2000\\
I & 0.25 & 16 & 524.47 & $414.06\pm0.81$ & 0.9813 & 0.0000 & I & 0.25 & 16 & 267.11 & $253.81\pm0.56$ & 0.7887 & 0.2357\\
I & 0.50 & 1 & 508.27 & $411.58\pm0.44$ & 0.9813 & 0.0000 & I & 0.50 & 1 & 316.77 & $292.94\pm1.48$ & 0.6959 & 0.0276\\
I & 0.50 & 2 & 488.41 & $400.28\pm0.35$ & 0.9813 & 0.0000 & I & 0.50 & 2 & 299.08 & $279.99\pm1.04$ & 0.6736 & 0.0527\\
I & 0.50 & 4 & 450.73 & $377.70\pm0.61$ & 0.9813 & 0.0000 & I & 0.50 & 4 & 266.93 & $253.22\pm0.54$ & 0.7024 & 0.1546\\
I & 0.50 & 8 & 412.26 & $353.50\pm0.49$ & 0.9813 & 0.0000 & I & 0.50 & 8 & 243.42 & $231.62\pm0.90$ & 0.7220 & 0.2734\\
I & 0.50 & 16 & 389.12 & $338.00\pm0.54$ & 0.9813 & 0.0000 & I & 0.50 & 16 & 228.06 & $217.48\pm1.12$ & 0.7105 & 0.3978\\
I & 0.75 & 1 & 510.84 & $422.31\pm0.59$ & 0.9338 & 0.0000 & I & 0.75 & 1 & 390.63 & $345.18\pm2.81$ & 0.4781 & 0.0047\\
I & 0.75 & 2 & 451.40 & $383.94\pm0.43$ & 0.9301 & 0.0000 & I & 0.75 & 2 & 323.19 & $298.05\pm1.42$ & 0.5295 & 0.0294\\
I & 0.75 & 4 & 340.74 & $304.28\pm0.88$ & 0.9293 & 0.0000 & I & 0.75 & 4 & 224.17 & $214.19\pm2.36$ & 0.6223 & 0.3719\\
I & 0.75 & 8 & 266.26 & $246.58\pm0.49$ & 0.9212 & 0.0066 & I & 0.75 & 8 & 171.93 & $168.40\pm1.45$ & 0.5634 & 0.4901\\
I & 0.75 & 16 & 233.39 & $219.82\pm0.58$ & 0.9073 & 0.0986 & I & 0.75 & 16 & 149.25 & $148.65\pm1.36$ & 0.5057 & 0.5540\\
I & 1.00 & 1 & 447.08 & $389.26\pm12.35$ & 0.5115 & 0.0000 & I & 1.00 & 1 & 390.93 & $336.92\pm2.61$ & 0.5829 & 0.0000\\
I & 1.00 & 2 & 166.22 & $165.00\pm1.63$ & 0.5035 & 0.0114 & I & 1.00 & 2 & 224.65 & $210.52\pm2.66$ & 0.5756 & 0.0200\\
I & 1.00 & 4 & 60.32 & $69.35\pm0.95$ & 0.3262 & 0.4561 & I & 1.00 & 4 & 71.26 & $82.17\pm1.26$ & 0.3991 & 0.4357\\
I & 1.00 & 8 & 25.28 & $33.97\pm0.16$ & 0.1733 & 0.6494 & I & 1.00 & 8 & 47.36 & $60.75\pm0.22$ & 0.2879 & 0.4478\\
I & 1.00 & 16 & 16.23 & $24.36\pm0.11$ & 0.0977 & 0.6712 & I & 1.00 & 16 & 41.46 & $56.10\pm0.20$ & 0.2498 & 0.4939\\
OT & 0.25 & 1 & 557.68 & $431.79\pm0.64$ & 0.9813 & 0.0000 & OT & 0.25 & 1 & 281.21 & $269.99\pm0.52$ & 0.7323 & 0.0999\\
OT & 0.25 & 2 & 549.34 & $427.36\pm0.67$ & 0.9813 & 0.0000 & OT & 0.25 & 2 & 278.65 & $267.61\pm0.59$ & 0.7442 & 0.1073\\
OT & 0.25 & 4 & 537.16 & $420.90\pm0.70$ & 0.9813 & 0.0000 & OT & 0.25 & 4 & 271.47 & $261.53\pm0.45$ & 0.7395 & 0.1704\\
OT & 0.25 & 8 & 526.76 & $415.38\pm0.73$ & 0.9813 & 0.0000 & OT & 0.25 & 8 & 264.22 & $254.97\pm0.36$ & 0.7411 & 0.2732\\
OT & 0.25 & 16 & 520.41 & $412.02\pm0.69$ & 0.9813 & 0.0000 & OT & 0.25 & 16 & 259.00 & $250.16\pm0.30$ & 0.7437 & 0.2258\\
OT & 0.50 & 1 & 463.58 & $387.20\pm0.28$ & 0.9813 & 0.0000 & OT & 0.50 & 1 & 292.86 & $282.18\pm0.40$ & 0.6200 & 0.0737\\
OT & 0.50 & 2 & 444.47 & $375.16\pm0.28$ & 0.9813 & 0.0000 & OT & 0.50 & 2 & 284.14 & $274.12\pm0.73$ & 0.6073 & 0.1187\\
OT & 0.50 & 4 & 417.11 & $357.18\pm0.27$ & 0.9813 & 0.0000 & OT & 0.50 & 4 & 258.09 & $251.06\pm0.39$ & 0.6247 & 0.2290\\
OT & 0.50 & 8 & 394.38 & $341.79\pm0.30$ & 0.9813 & 0.0000 & OT & 0.50 & 8 & 230.97 & $225.84\pm0.40$ & 0.6265 & 0.2693\\
OT & 0.50 & 16 & 380.50 & $332.20\pm0.33$ & 0.9813 & 0.0000 & OT & 0.50 & 16 & 213.04 & $209.02\pm0.44$ & 0.6234 & 0.2333\\
OT & 0.75 & 1 & 374.27 & $337.74\pm0.71$ & 0.9293 & 0.0000 & OT & 0.75 & 1 & 318.17 & $311.05\pm1.72$ & 0.5323 & 0.1379\\
OT & 0.75 & 2 & 331.85 & $302.87\pm0.45$ & 0.9293 & 0.0000 & OT & 0.75 & 2 & 289.14 & $282.89\pm1.37$ & 0.5098 & 0.0304\\
OT & 0.75 & 4 & 276.95 & $257.11\pm0.29$ & 0.9293 & 0.0000 & OT & 0.75 & 4 & 211.99 & $210.58\pm0.18$ & 0.5095 & 0.2576\\
OT & 0.75 & 8 & 241.88 & $227.75\pm0.49$ & 0.9196 & 0.0618 & OT & 0.75 & 8 & 153.43 & $156.04\pm0.03$ & 0.4565 & 0.4980\\
OT & 0.75 & 16 & 224.26 & $213.02\pm0.40$ & 0.9077 & 0.0824 & OT & 0.75 & 16 & 128.00 & $132.12\pm0.26$ & 0.4014 & 0.4947\\
OT & 1.00 & 1 & 226.20 & $238.25\pm1.91$ & 0.5124 & 0.0007 & OT & 1.00 & 1 & 238.60 & $254.31\pm0.45$ & 0.5035 & 0.0007\\
OT & 1.00 & 2 & 92.07 & $101.80\pm0.67$ & 0.5987 & 0.2706 & OT & 1.00 & 2 & 142.40 & $168.36\pm1.66$ & 0.5822 & 0.0496\\
OT & 1.00 & 4 & 40.44 & $47.38\pm0.44$ & 0.2637 & 0.5299 & OT & 1.00 & 4 & 57.13 & $72.70\pm0.27$ & 0.3839 & 0.5904\\
OT & 1.00 & 8 & 22.56 & $29.92\pm0.46$ & 0.1424 & 0.6058 & OT & 1.00 & 8 & 32.48 & $45.22\pm0.33$ & 0.2242 & 0.4354\\
OT & 1.00 & 16 & 16.10 & $23.87\pm0.25$ & 0.0920 & 0.6350 & OT & 1.00 & 16 & 24.79 & $37.93\pm0.33$ & 0.1806 & 0.5817\\
G-M & 0.25 & 1 & 573.61 & $439.88\pm0.71$ & 0.9813 & 0.0000 & G-M & 0.25 & 1 & 284.19 & $270.65\pm0.82$ & 0.7606 & 0.1026\\
G-M & 0.25 & 2 & 565.13 & $435.44\pm0.68$ & 0.9813 & 0.0000 & G-M & 0.25 & 2 & 275.54 & $264.11\pm0.99$ & 0.7623 & 0.1440\\
G-M & 0.25 & 4 & 550.26 & $427.62\pm0.75$ & 0.9813 & 0.0000 & G-M & 0.25 & 4 & 266.10 & $256.02\pm0.74$ & 0.7511 & 0.2444\\
G-M & 0.25 & 8 & 534.72 & $419.47\pm0.81$ & 0.9813 & 0.0000 & G-M & 0.25 & 8 & 259.69 & $250.21\pm0.59$ & 0.7464 & 0.2153\\
G-M & 0.25 & 16 & 524.32 & $414.01\pm0.89$ & 0.9813 & 0.0000 & G-M & 0.25 & 16 & 254.91 & $245.90\pm0.54$ & 0.7396 & 0.2093\\
G-M & 0.50 & 1 & 502.15 & $408.29\pm0.56$ & 0.9813 & 0.0000 & G-M & 0.50 & 1 & 297.33 & $281.87\pm1.22$ & 0.6125 & 0.0357\\
G-M & 0.50 & 2 & 480.67 & $395.84\pm0.41$ & 0.9813 & 0.0000 & G-M & 0.50 & 2 & 271.90 & $261.48\pm1.86$ & 0.6550 & 0.0877\\
G-M & 0.50 & 4 & 443.14 & $373.18\pm0.51$ & 0.9813 & 0.0000 & G-M & 0.50 & 4 & 240.52 & $233.95\pm1.08$ & 0.6604 & 0.1977\\
G-M & 0.50 & 8 & 408.80 & $351.30\pm0.54$ & 0.9813 & 0.0000 & G-M & 0.50 & 8 & 221.03 & $216.05\pm0.73$ & 0.6368 & 0.2616\\
G-M & 0.50 & 16 & 387.41 & $336.78\pm0.70$ & 0.9813 & 0.0000 & G-M & 0.50 & 16 & 208.11 & $204.20\pm0.64$ & 0.6083 & 0.2274\\
G-M & 0.75 & 1 & 472.41 & $401.05\pm0.71$ & 0.9348 & 0.0000 & G-M & 0.75 & 1 & 319.28 & $305.50\pm1.67$ & 0.5063 & 0.0223\\
G-M & 0.75 & 2 & 414.95 & $360.31\pm1.10$ & 0.9293 & 0.0000 & G-M & 0.75 & 2 & 251.55 & $245.38\pm5.14$ & 0.5488 & 0.2046\\
G-M & 0.75 & 4 & 314.01 & $286.31\pm0.77$ & 0.9293 & 0.0000 & G-M & 0.75 & 4 & 176.01 & $176.57\pm2.30$ & 0.5114 & 0.3774\\
G-M & 0.75 & 8 & 253.96 & $238.42\pm0.95$ & 0.9215 & 0.0096 & G-M & 0.75 & 8 & 143.23 & $146.28\pm1.25$ & 0.4360 & 0.3921\\
G-M & 0.75 & 16 & 226.98 & $215.70\pm0.79$ & 0.9091 & 0.0269 & G-M & 0.75 & 16 & 127.89 & $132.10\pm0.88$ & 0.3861 & 0.4331\\
G-M & 1.00 & 1 & 332.73 & $349.61\pm2.57$ & 0.5035 & 0.0000 & G-M & 1.00 & 1 & 208.89 & $256.31\pm2.77$ & 0.6093 & 0.0000\\
G-M & 1.00 & 2 & 107.06 & $126.88\pm2.66$ & 0.5622 & 0.0494 & G-M & 1.00 & 2 & 128.88 & $136.08\pm4.39$ & 0.6967 & 0.2399\\
G-M & 1.00 & 4 & 37.84 & $49.98\pm0.21$ & 0.2330 & 0.4752 & G-M & 1.00 & 4 & 48.35 & $58.12\pm0.54$ & 0.2751 & 0.5072\\
G-M & 1.00 & 8 & 17.83 & $27.60\pm0.08$ & 0.0977 & 0.5931 & G-M & 1.00 & 8 & 34.11 & $46.32\pm0.45$ & 0.1722 & 0.6202\\
G-M & 1.00 & 16 & 13.05 & $21.78\pm0.09$ & 0.0549 & 0.6272 & G-M & 1.00 & 16 & 29.90 & $43.60\pm0.43$ & 0.1435 & 0.6533\\
\bottomrule\end{tabular}
\end{table}\endgroup

\FloatBarrier

\section{Source adequacy and additional coupling comparisons}
\subsection{A finite-sample Gaussian shift calculation}

Consider independent training pairs $Z_i\sim\N(0,I_d)$ and $X_i\sim\N(\mu,I_d)$, $i=1,\ldots,n$. Fit a constant velocity $v\in\R^d$ by least squares. The objective and its derivative are
\begin{equation}
 L(v)=\frac1n\sum_{i=1}^n\|X_i-Z_i-v\|^2,
 \qquad \nabla_vL(v)=2v-\frac2n\sum_i(X_i-Z_i).
\end{equation}
Setting the derivative to zero gives $\widehat\mu=n^{-1}\sum_i(X_i-Z_i)$. Independence implies $X_i-Z_i\sim\N(\mu,2I_d)$ and hence
\begin{equation}
 \widehat\mu\sim\N(\mu,2I_d/n),\quad
 \E\widehat\mu=\mu,\quad
 \E\|\widehat\mu-\mu\|^2=\operatorname{tr}(2I_d/n)=2d/n.
\end{equation}
Integrating the constant field maps a fresh source draw to $Z+\widehat\mu$, with conditional law $\N(\widehat\mu,I_d)$. For any coupling $(A,B)$ of $\N(a,I_d)$ and $\N(b,I_d)$, Jensen's inequality gives
\begin{equation}
 \E\|A-B\|^2\geq\|\E A-\E B\|^2=\|a-b\|^2.
\end{equation}
The coupling $A=Z+a$, $B=Z+b$ attains the lower bound, so
\begin{equation}
\begin{aligned}
 W_2^2(\N(a,I_d),\N(b,I_d))
 &\leq\E\|(Z+a)-(Z+b)\|^2=\|a-b\|^2,\\
 W_2^2(\N(a,I_d),\N(b,I_d))&=\|a-b\|^2.
\end{aligned}
\end{equation}
Applying this identity conditionally on the training sample gives
\begin{equation}
\begin{aligned}
 \E W_2^2(\N(\widehat\mu,I_d),\N(\mu,I_d))
 &=\E\|\widehat\mu-\mu\|^2=2d/n,\\
 W_2^2(\N(0,I_d),\N(\mu,I_d))&=\|\mu\|^2.
\end{aligned}
\end{equation}
Source retention has lower expected squared endpoint error when $\|\mu\|^2<2d/n$.

For residual fraction $\lambda$, the fitted mean is $\lambda\widehat\mu$. Write $\widehat\mu=\mu+\varepsilon$ with $\E\varepsilon=0$. Expanding the squared error gives
\begin{align}
 R(\lambda)
 &=\E\|\lambda\widehat\mu-\mu\|^2
 =\E\|(\lambda-1)\mu+\lambda\varepsilon\|^2\\
 &=(1-\lambda)^2\|\mu\|^2+\lambda^2\E\|\varepsilon\|^2
     +2\lambda(\lambda-1)\mu^\top\E\varepsilon\\
 &=(1-\lambda)^2\|\mu\|^2+\lambda^2(2d/n).
\end{align}
With $a=\|\mu\|^2$ and $b=2d/n$, completing the square gives
\begin{equation}
\begin{aligned}
 R(\lambda)
 &=a-2a\lambda+(a+b)\lambda^2\\
 &=(a+b)\left(\lambda-\frac{a}{a+b}\right)^2+\frac{ab}{a+b},\\
 \lambda_\star&=\frac{a}{a+b}\in[0,1],\qquad
 R(\lambda_\star)=\frac{ab}{a+b}.
\end{aligned}
\end{equation}
Euler integration with $h=1/k$ is exact for this constant field,
\begin{equation}
 V_k=V_0+\sum_{j=0}^{k-1}h\widehat\mu
     =Z+kh\widehat\mu=Z+\widehat\mu.
\end{equation}

\subsection{Residual actions for a nonlinear map}

Let $T_\#Q=P$, and let $\widehat T$ and $\widehat T_k$ be the continuous and numerical learned endpoints. With a common $Z\sim Q$, define
\begin{equation}
 C=\|Z-T(Z)\|_{L^2},\qquad
 E_{\rm fit}=\|\widehat T(Z)-T(Z)\|_{L^2},\qquad
 E_{\rm num}=\|\widehat T_k(Z)-\widehat T(Z)\|_{L^2}.
\end{equation}
For $A_{\lambda,k}(Z)=(1-\lambda)Z+\lambda\widehat T_k(Z)$, adding and subtracting $\lambda T(Z)$ and $\lambda\widehat T(Z)$ gives
\begin{equation}
\begin{aligned}
 A_{\lambda,k}(Z)-T(Z)
 &=(1-\lambda)(Z-T(Z))\\
 &\quad+\lambda(\widehat T(Z)-T(Z))
       +\lambda(\widehat T_k(Z)-\widehat T(Z)).
\end{aligned}
\end{equation}
For $0\leq\lambda\leq1$, this coupling and Minkowski's inequality yield
\begin{equation}
\begin{aligned}
 W_2(\Law(A_{\lambda,k}(Z)),P)
 &\leq\|A_{\lambda,k}(Z)-T(Z)\|_{L^2}\\
 &\leq(1-\lambda)\|Z-T(Z)\|_{L^2}
      +\lambda\|\widehat T(Z)-T(Z)\|_{L^2}\\
 &\quad+\lambda\|\widehat T_k(Z)-\widehat T(Z)\|_{L^2}\\
 &=(1-\lambda)C+\lambda E_{\rm fit}+\lambda E_{\rm num}.
\end{aligned}
\end{equation}

\subsection{Source adequacy in feature space}

The source-adequacy experiment evaluates residual actions in 64-dimensional representations of Kuzushiji-MNIST (KMNIST)\cite{clanuwat2018classical}, CIFAR-10 and UCI Human Activity Recognition (UCI HAR)\cite{reyes2013har}, with 24 independent training runs per dataset and 72 in total. KMNIST uses 50,000 training, 10,000 validation and 10,000 test observations; CIFAR-10 uses 45,000, 5,000 and 10,000. Subject-disjoint UCI HAR partitions contain 6,063, 1,289 and 2,947 observations. Principal components and a single global variance scale are fitted on the training partition, using flattened pixels for the image datasets and the 561 supplied sensor features for UCI HAR.

A 32-component diagonal Gaussian mixture is fitted by hard $k$-means assignments, with component probabilities given by training frequencies and variance floor $10^{-6}$. Source and target observations share the fitted component label during flow training. The velocity network has three width-256 hidden layers with SiLU activation and uses AdamW, learning rate $5\times10^{-4}$, cosine scheduling and batch size 128. Each method receives $1.8\times10^{11}$ charged FLOPs, including source fitting and pairing. Euler trajectories span $[0.001,0.999]$ from a fresh source draw.

Each run compares source retention with 20 residual actions crossing $\lambda\in\{0.25,0.5,0.75,1\}$ and NFE $1,2,4,8,16$. Validation uses up to 3,000 observations and 512 common projection directions, grouped into 16 blocks of 32. An action qualifies when its mean SWD improvement over source retention is at least 1\% and its one-sided lower confidence bound is positive. The bound uses a $t$ quantile with 15 degrees of freedom and Bonferroni level $0.05/20$ on the paired block improvements. Among actions within 1\% of the best qualifying mean SWD, selection favours the lowest NFE, then lower SWD and smaller $\lambda$; if none qualifies, it retains the source. Test SWD uses the root-mean-square projected discrepancy in Section~\ref{sec:controlled-training}, with up to 3,000 samples and 128 directions. The selected action and full flow share the fitted field.

The comparators include Gaussian-source independent and minibatch-OT flows, mixture-source independent and component-matched flows, and a one-cycle cluster-transport adaptation\cite{chiang2026cotfm}. The last method allocates half of the available flow computation to Gaussian pretraining, reverse-integrates 8,192 cluster-stratified training observations for eight steps, fits full-covariance component Gaussians with eigenvalue floor $10^{-5}$, and fine-tunes with within-component minibatch OT. All stages enter the same computation budget.

Mean selected SWDs are 0.12004, 0.05422 and 0.10859 for KMNIST, CIFAR-10 features and UCI HAR; source-retention values are 0.12177, 0.05420 and 0.10855. For each dataset, we divide mean full-minus-selected SWD by mean source-retention SWD, then average equally across the three datasets. This normalized improvement is 0.07633, with stratified bootstrap interval $[0.06744,0.08532]$. Mean selected NFE is 3.556, a 77.8\% reduction from 16 for full flow. Dataset-specific means are 6.833, 3.333 and 0.5. Mean full-minus-selected latency is 2.77 milliseconds, with stratified interval $[2.53,2.98]$ milliseconds.

\begingroup\small\setlength{\tabcolsep}{4pt}
\begin{table}[!htbp]\centering
\caption{Source adequacy in fixed feature spaces, with 24 independent training runs per dataset. Intervals are 95\% bootstrap intervals from resampling complete training runs.}
\begin{tabular}{lrrrr}\toprule

Dataset & Action & $n$ & SWD & 95\% interval\\\midrule

CIFAR PCA & Selected & 24 & 0.0542 & [0.0529, 0.0557]\\
CIFAR PCA & COT, 1 cycle & 24 & 0.0956 & [0.0921, 0.0990]\\
CIFAR PCA & Gaussian I & 24 & 0.0834 & [0.0809, 0.0859]\\
CIFAR PCA & Gaussian OT & 24 & 0.0869 & [0.0849, 0.0888]\\
CIFAR PCA & Mixture I & 24 & 0.0676 & [0.0646, 0.0706]\\
CIFAR PCA & Mixture mode & 24 & 0.0557 & [0.0542, 0.0573]\\
CIFAR PCA & Retain source & 24 & 0.0542 & [0.0530, 0.0556]\\
KMNIST & Selected & 24 & 0.1200 & [0.1172, 0.1230]\\
KMNIST & COT, 1 cycle & 24 & 0.1092 & [0.1064, 0.1119]\\
KMNIST & Gaussian I & 24 & 0.1196 & [0.1166, 0.1226]\\
KMNIST & Gaussian OT & 24 & 0.1162 & [0.1133, 0.1192]\\
KMNIST & Mixture I & 24 & 0.1158 & [0.1124, 0.1189]\\
KMNIST & Mixture mode & 24 & 0.1408 & [0.1372, 0.1445]\\
KMNIST & Retain source & 24 & 0.1218 & [0.1190, 0.1247]\\
UCI HAR & Selected & 24 & 0.1086 & [0.1055, 0.1118]\\
UCI HAR & COT, 1 cycle & 24 & 0.1046 & [0.1000, 0.1098]\\
UCI HAR & Gaussian I & 24 & 0.1360 & [0.1306, 0.1413]\\
UCI HAR & Gaussian OT & 24 & 0.1343 & [0.1279, 0.1407]\\
UCI HAR & Mixture I & 24 & 0.1388 & [0.1321, 0.1462]\\
UCI HAR & Mixture mode & 24 & 0.1119 & [0.1086, 0.1153]\\
UCI HAR & Retain source & 24 & 0.1086 & [0.1054, 0.1118]\\
\bottomrule\end{tabular}
\end{table}\endgroup

\begingroup\fontsize{8.5}{10.2}\selectfont\setlength{\tabcolsep}{1pt}
\begin{table}[!htbp]\centering
\caption{Individual source-adequacy runs. Full flow uses 16 evaluations; selected fractions and budgets are fixed on validation data. Run numbers identify matched repetitions.}
\begin{tabular}{@{}lrrrrr@{\hspace{3mm}}lrrrrr@{\hspace{3mm}}lrrrrr@{}}\toprule
\multicolumn{6}{c}{KMNIST} & \multicolumn{6}{c}{CIFAR-10 features} & \multicolumn{6}{c}{UCI HAR}\\
\cmidrule(lr){1-6}\cmidrule(lr){7-12}\cmidrule(lr){13-18}
Run & Source & Full & Selected & NFE & $\lambda$ & Run & Source & Full & Selected & NFE & $\lambda$ & Run & Source & Full & Selected & NFE & $\lambda$\\\midrule
1 & 0.1224 & 0.1340 & 0.1194 & 8 & 0.50 & 1 & 0.0568 & 0.0562 & 0.0568 & 0 & 0.00 & 1 & 0.0951 & 0.0955 & 0.0951 & 0 & 0.00\\
2 & 0.1197 & 0.1299 & 0.1150 & 4 & 0.25 & 2 & 0.0527 & 0.0587 & 0.0527 & 0 & 0.00 & 2 & 0.1170 & 0.1187 & 0.1143 & 4 & 1.00\\
3 & 0.1412 & 0.1573 & 0.1380 & 4 & 0.25 & 3 & 0.0510 & 0.0505 & 0.0500 & 16 & 0.75 & 3 & 0.1084 & 0.1110 & 0.1084 & 0 & 0.00\\
4 & 0.1239 & 0.1462 & 0.1255 & 8 & 0.50 & 4 & 0.0559 & 0.0574 & 0.0559 & 0 & 0.00 & 4 & 0.1057 & 0.1092 & 0.1057 & 0 & 0.00\\
5 & 0.1261 & 0.1477 & 0.1273 & 8 & 0.50 & 5 & 0.0543 & 0.0598 & 0.0543 & 0 & 0.00 & 5 & 0.1158 & 0.1191 & 0.1158 & 0 & 0.00\\
6 & 0.1153 & 0.1393 & 0.1161 & 8 & 0.50 & 6 & 0.0516 & 0.0522 & 0.0516 & 0 & 0.00 & 6 & 0.1044 & 0.1097 & 0.1044 & 0 & 0.00\\
7 & 0.1220 & 0.1418 & 0.1212 & 8 & 0.50 & 7 & 0.0520 & 0.0536 & 0.0520 & 0 & 0.00 & 7 & 0.1232 & 0.1275 & 0.1232 & 0 & 0.00\\
8 & 0.1242 & 0.1364 & 0.1221 & 8 & 0.50 & 8 & 0.0501 & 0.0507 & 0.0496 & 16 & 0.50 & 8 & 0.1120 & 0.1163 & 0.1120 & 0 & 0.00\\
9 & 0.1193 & 0.1398 & 0.1191 & 8 & 0.50 & 9 & 0.0501 & 0.0550 & 0.0501 & 0 & 0.00 & 9 & 0.1085 & 0.1140 & 0.1120 & 8 & 1.00\\
10 & 0.1199 & 0.1406 & 0.1146 & 4 & 0.25 & 10 & 0.0637 & 0.0648 & 0.0635 & 16 & 0.25 & 10 & 0.1063 & 0.1093 & 0.1063 & 0 & 0.00\\
11 & 0.1083 & 0.1335 & 0.1056 & 4 & 0.25 & 11 & 0.0532 & 0.0534 & 0.0532 & 0 & 0.00 & 11 & 0.1006 & 0.1034 & 0.1006 & 0 & 0.00\\
12 & 0.1142 & 0.1345 & 0.1148 & 8 & 0.50 & 12 & 0.0577 & 0.0626 & 0.0577 & 0 & 0.00 & 12 & 0.1061 & 0.1132 & 0.1061 & 0 & 0.00\\
13 & 0.1208 & 0.1330 & 0.1178 & 8 & 0.50 & 13 & 0.0535 & 0.0516 & 0.0535 & 0 & 0.00 & 13 & 0.1187 & 0.1249 & 0.1187 & 0 & 0.00\\
14 & 0.1313 & 0.1628 & 0.1288 & 4 & 0.25 & 14 & 0.0522 & 0.0544 & 0.0522 & 0 & 0.00 & 14 & 0.1260 & 0.1265 & 0.1260 & 0 & 0.00\\
15 & 0.1278 & 0.1514 & 0.1287 & 8 & 0.50 & 15 & 0.0549 & 0.0594 & 0.0579 & 16 & 0.75 & 15 & 0.1187 & 0.1205 & 0.1187 & 0 & 0.00\\
16 & 0.1222 & 0.1465 & 0.1194 & 4 & 0.25 & 16 & 0.0536 & 0.0537 & 0.0536 & 0 & 0.00 & 16 & 0.1095 & 0.1148 & 0.1095 & 0 & 0.00\\
17 & 0.1352 & 0.1512 & 0.1334 & 8 & 0.50 & 17 & 0.0588 & 0.0585 & 0.0588 & 0 & 0.00 & 17 & 0.1094 & 0.1118 & 0.1094 & 0 & 0.00\\
18 & 0.1274 & 0.1416 & 0.1240 & 4 & 0.25 & 18 & 0.0581 & 0.0599 & 0.0581 & 0 & 0.00 & 18 & 0.0998 & 0.1021 & 0.0998 & 0 & 0.00\\
19 & 0.1143 & 0.1210 & 0.1157 & 16 & 0.75 & 19 & 0.0539 & 0.0551 & 0.0539 & 0 & 0.00 & 19 & 0.1054 & 0.1104 & 0.1054 & 0 & 0.00\\
20 & 0.1184 & 0.1444 & 0.1149 & 4 & 0.25 & 20 & 0.0589 & 0.0577 & 0.0589 & 0 & 0.00 & 20 & 0.1018 & 0.1045 & 0.1018 & 0 & 0.00\\
21 & 0.1210 & 0.1315 & 0.1175 & 8 & 0.50 & 21 & 0.0521 & 0.0506 & 0.0521 & 0 & 0.00 & 21 & 0.1035 & 0.1063 & 0.1035 & 0 & 0.00\\
22 & 0.1141 & 0.1359 & 0.1092 & 4 & 0.50 & 22 & 0.0522 & 0.0529 & 0.0522 & 0 & 0.00 & 22 & 0.1012 & 0.1064 & 0.1012 & 0 & 0.00\\
23 & 0.1176 & 0.1339 & 0.1157 & 8 & 0.50 & 23 & 0.0528 & 0.0585 & 0.0528 & 0 & 0.00 & 23 & 0.0955 & 0.0961 & 0.0955 & 0 & 0.00\\
24 & 0.1159 & 0.1442 & 0.1171 & 8 & 0.50 & 24 & 0.0505 & 0.0503 & 0.0497 & 16 & 0.50 & 24 & 0.1125 & 0.1151 & 0.1125 & 0 & 0.00\\
\bottomrule\end{tabular}
\end{table}\endgroup

\begin{figure}[!htbp]\centering\includegraphics[width=166mm,height=.89\textheight,keepaspectratio]{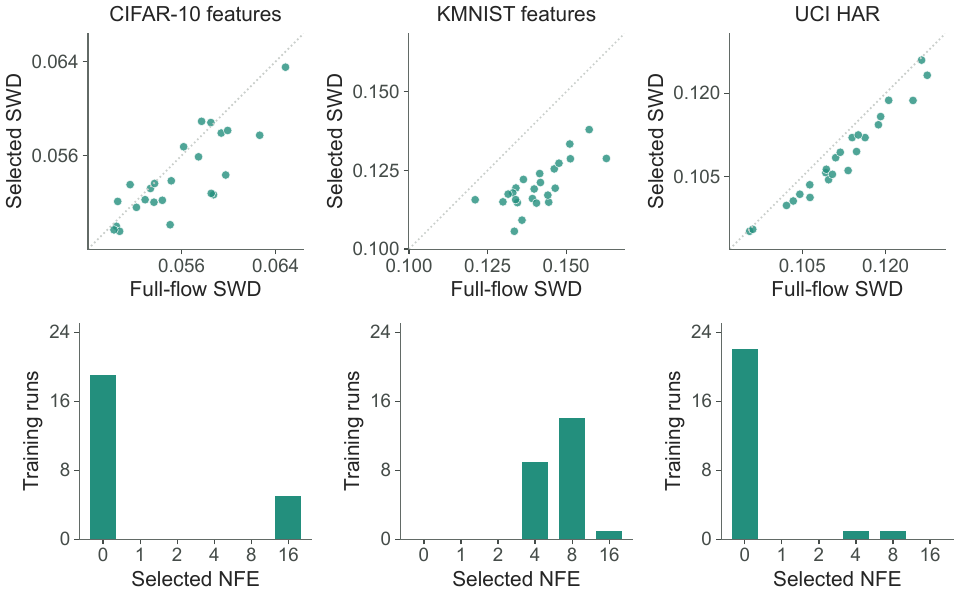}\caption{\textbf{Endpoints and budgets after validation-based action selection.} Columns show CIFAR-10 features, KMNIST features and UCI HAR. Upper plots compare selected and full-flow test SWD for 24 training runs per dataset; dotted lines indicate equality. Lower plots count selected inference budgets. Zero evaluations retain the fitted mixture source.}\label{sfig:source_endpoints}\end{figure}

\FloatBarrier

\subsection{Additional coupling comparisons}

The rare12 mechanism study uses five independent training runs per method to compare independent joint velocity/score training, exact sector coupling, a fitted Laguerre assignment and exact polar coupling. Sector coupling assigns angular widths $2\pi w_j$ to target modes; Laguerre cells assign $z$ to the maximizer of $z^\top c_j+h_j$, with offsets fitted to target probabilities. The exact polar construction is given in \textcolor{refblue}{Section~3.1}. All use 3,500 updates and batch size 512. The joint-output baseline has 9,348 parameters; the other three fields have 9,218. The 32-pair comparison matches the two-output network and parameter count. Nearest-mode distance averages each generated point's distance to its closest target centre; accumulated turning angle measures changes in successive trajectory directions.

A six-method comparison uses five independent training runs per method and target for I-CFM, OT-CFM, two-stage rectified flow, Optimal Flow Matching, MeanFlow and MPCF\cite{kornilov2024optimal,geng2025meanflow}. Each method receives 3,500 optimizer updates with batch size 512. I-CFM, OT-CFM, 2-RF and MPCF use the 9,218-parameter network in \textcolor{refblue}{Section~3.3}; MeanFlow adds interval conditioning, giving 9,282 parameters. OT assignments operate on subbatches of 64. Rectification allocates 1,750 updates to each stage and generates training pairs with a 64-evaluation teacher. OFM uses an 8,804-parameter convex-potential network with two width-88 hidden layers, including 200 identity-pretraining updates; its inverse solver uses at most five iterations and tolerance 0.01. Endpoint evaluation uses 3,000 samples and 128 common SWD projections. Vector-field methods also share the benchmark\textquotesingle s Euler interval and numerical reference; MeanFlow uses equal intervals on $[0,1]$. The six requested budgets are 1, 2, 4, 8, 16 and 32; Optimal Flow Matching uses a direct map with one effective evaluation throughout.

\begingroup\small\setlength{\tabcolsep}{4pt}
\begin{table}[!htbp]\centering
\caption{Coupling-mechanism comparison with five independent training runs per method. All runs use 3,500 updates and batch size 512. Values are means $\pm$ sample standard deviations. SWD and mode TV use 16 field evaluations.}
\begin{tabular}{lrrrrr}\toprule

Coupling & Params & Score error & Velocity error & SWD & Mode TV\\\midrule

Independent joint & 9348 & $0.087\pm0.003$ & $0.099\pm0.003$ & $0.183\pm0.016$ & $0.055\pm0.008$\\
Exact sector & 9218 & $0.020\pm0.001$ & $0.163\pm0.009$ & $0.156\pm0.014$ & $0.043\pm0.015$\\
Laguerre & 9218 & $0.008\pm0.00045$ & $0.062\pm0.005$ & $0.140\pm0.023$ & $0.030\pm0.008$\\
Exact polar & 9218 & $0.008\pm0.00033$ & $0.066\pm0.002$ & $0.135\pm0.023$ & $0.033\pm0.007$\\
\bottomrule\end{tabular}
\end{table}\endgroup

\begingroup\fontsize{8.5}{10.2}\selectfont\setlength{\tabcolsep}{1.5pt}\renewcommand{\arraystretch}{1.02}
\begin{table}[!htbp]\centering
\caption{Six-method benchmark, five training runs per target and method. Requested and effective NFE are shown separately. SWD and mode TV are means $\pm$ sample standard deviations; solver values are mean Euler-to-reference MSE. A dash marks an undefined solver outcome.}
\begin{tabular}{@{}lrrrrr@{\hspace{4mm}}lrrrrr@{}}\toprule

\multicolumn{6}{c}{rare12} & \multicolumn{6}{c}{gaussian8}\\
\cmidrule(lr){1-6}\cmidrule(lr){7-12}
Method & Req. & Eff. & SWD & Mode TV & Solver & Method & Req. & Eff. & SWD & Mode TV & Solver\\\midrule

I-CFM & 1 & 1 & $0.909\pm0.011$ & $0.217\pm0.093$ & 0.9859 & I-CFM & 1 & 1 & $2.016\pm0.016$ & $0.185\pm0.054$ & 4.4530\\
I-CFM & 2 & 2 & $0.493\pm0.009$ & $0.094\pm0.010$ & 0.2328 & I-CFM & 2 & 2 & $0.661\pm0.035$ & $0.045\pm0.017$ & 0.6872\\
I-CFM & 4 & 4 & $0.276\pm0.007$ & $0.069\pm0.004$ & 0.0527 & I-CFM & 4 & 4 & $0.241\pm0.015$ & $0.026\pm0.007$ & 0.0765\\
I-CFM & 8 & 8 & $0.228\pm0.009$ & $0.057\pm0.004$ & 0.0158 & I-CFM & 8 & 8 & $0.215\pm0.047$ & $0.025\pm0.008$ & 0.0210\\
I-CFM & 16 & 16 & $0.203\pm0.011$ & $0.048\pm0.006$ & 0.0044 & I-CFM & 16 & 16 & $0.215\pm0.048$ & $0.026\pm0.008$ & 0.0058\\
I-CFM & 32 & 32 & $0.192\pm0.013$ & $0.044\pm0.007$ & 0.0011 & I-CFM & 32 & 32 & $0.221\pm0.045$ & $0.025\pm0.007$ & 0.0016\\
OT-CFM & 1 & 1 & $0.207\pm0.010$ & $0.042\pm0.006$ & 0.0412 & OT-CFM & 1 & 1 & $0.343\pm0.010$ & $0.020\pm0.005$ & 0.1870\\
OT-CFM & 2 & 2 & $0.157\pm0.013$ & $0.031\pm0.005$ & 0.0108 & OT-CFM & 2 & 2 & $0.227\pm0.033$ & $0.023\pm0.007$ & 0.0445\\
OT-CFM & 4 & 4 & $0.147\pm0.019$ & $0.031\pm0.007$ & 0.0035 & OT-CFM & 4 & 4 & $0.231\pm0.054$ & $0.026\pm0.007$ & 0.0188\\
OT-CFM & 8 & 8 & $0.147\pm0.022$ & $0.030\pm0.008$ & 0.0013 & OT-CFM & 8 & 8 & $0.234\pm0.058$ & $0.025\pm0.005$ & 0.0074\\
OT-CFM & 16 & 16 & $0.146\pm0.023$ & $0.032\pm0.008$ & 0.0004 & OT-CFM & 16 & 16 & $0.236\pm0.057$ & $0.024\pm0.005$ & 0.0026\\
OT-CFM & 32 & 32 & $0.145\pm0.024$ & $0.033\pm0.008$ & 6.6e-05 & OT-CFM & 32 & 32 & $0.238\pm0.056$ & $0.023\pm0.004$ & 0.0008\\
2-RF & 1 & 1 & $0.247\pm0.005$ & $0.121\pm0.018$ & 0.0011 & 2-RF & 1 & 1 & $0.318\pm0.016$ & $0.032\pm0.009$ & 0.0898\\
2-RF & 2 & 2 & $0.248\pm0.006$ & $0.129\pm0.016$ & 0.0001 & 2-RF & 2 & 2 & $0.288\pm0.021$ & $0.031\pm0.010$ & 0.0218\\
2-RF & 4 & 4 & $0.249\pm0.006$ & $0.130\pm0.013$ & 4.8e-05 & 2-RF & 4 & 4 & $0.274\pm0.030$ & $0.032\pm0.010$ & 0.0065\\
2-RF & 8 & 8 & $0.249\pm0.006$ & $0.131\pm0.012$ & 1.4e-05 & 2-RF & 8 & 8 & $0.271\pm0.034$ & $0.033\pm0.011$ & 0.0020\\
2-RF & 16 & 16 & $0.249\pm0.005$ & $0.132\pm0.011$ & 3.3e-06 & 2-RF & 16 & 16 & $0.271\pm0.034$ & $0.034\pm0.011$ & 0.0005\\
2-RF & 32 & 32 & $0.249\pm0.005$ & $0.132\pm0.012$ & 7e-07 & 2-RF & 32 & 32 & $0.270\pm0.035$ & $0.034\pm0.011$ & 0.0001\\
OFM & 1 & 1 & $0.152\pm0.020$ & $0.031\pm0.007$ & -- & OFM & 1 & 1 & $0.242\pm0.034$ & $0.022\pm0.003$ & --\\
OFM & 2 & 1 & $0.152\pm0.020$ & $0.031\pm0.007$ & -- & OFM & 2 & 1 & $0.242\pm0.034$ & $0.022\pm0.003$ & --\\
OFM & 4 & 1 & $0.152\pm0.020$ & $0.031\pm0.007$ & -- & OFM & 4 & 1 & $0.242\pm0.034$ & $0.022\pm0.003$ & --\\
OFM & 8 & 1 & $0.152\pm0.020$ & $0.031\pm0.007$ & -- & OFM & 8 & 1 & $0.242\pm0.034$ & $0.022\pm0.003$ & --\\
OFM & 16 & 1 & $0.152\pm0.020$ & $0.031\pm0.007$ & -- & OFM & 16 & 1 & $0.242\pm0.034$ & $0.022\pm0.003$ & --\\
OFM & 32 & 1 & $0.152\pm0.020$ & $0.031\pm0.007$ & -- & OFM & 32 & 1 & $0.242\pm0.034$ & $0.022\pm0.003$ & --\\
MeanFlow & 1 & 1 & $0.385\pm0.013$ & $0.318\pm0.078$ & 0.0371 & MeanFlow & 1 & 1 & $0.411\pm0.083$ & $0.164\pm0.112$ & 0.1426\\
MeanFlow & 2 & 2 & $0.371\pm0.016$ & $0.334\pm0.095$ & 0.0142 & MeanFlow & 2 & 2 & $0.427\pm0.106$ & $0.163\pm0.115$ & 0.1188\\
MeanFlow & 4 & 4 & $0.372\pm0.017$ & $0.345\pm0.101$ & 0.0088 & MeanFlow & 4 & 4 & $0.441\pm0.120$ & $0.175\pm0.129$ & 0.0966\\
MeanFlow & 8 & 8 & $0.367\pm0.014$ & $0.352\pm0.102$ & 0.0045 & MeanFlow & 8 & 8 & $0.452\pm0.127$ & $0.189\pm0.141$ & 0.0542\\
MeanFlow & 16 & 16 & $0.362\pm0.015$ & $0.357\pm0.102$ & 0.0019 & MeanFlow & 16 & 16 & $0.458\pm0.131$ & $0.198\pm0.147$ & 0.0280\\
MeanFlow & 32 & 32 & $0.360\pm0.015$ & $0.360\pm0.102$ & 0.0007 & MeanFlow & 32 & 32 & $0.461\pm0.132$ & $0.204\pm0.151$ & 0.0139\\
MPCF & 1 & 1 & $0.161\pm0.007$ & $0.025\pm0.004$ & 0.0302 & MPCF & 1 & 1 & $0.266\pm0.007$ & $0.017\pm0.004$ & 0.1155\\
MPCF & 2 & 2 & $0.135\pm0.015$ & $0.026\pm0.007$ & 0.0089 & MPCF & 2 & 2 & $0.228\pm0.035$ & $0.027\pm0.005$ & 0.0287\\
MPCF & 4 & 4 & $0.136\pm0.017$ & $0.026\pm0.007$ & 0.0034 & MPCF & 4 & 4 & $0.258\pm0.031$ & $0.028\pm0.006$ & 0.0213\\
MPCF & 8 & 8 & $0.139\pm0.018$ & $0.026\pm0.007$ & 0.0012 & MPCF & 8 & 8 & $0.258\pm0.033$ & $0.025\pm0.005$ & 0.0130\\
MPCF & 16 & 16 & $0.139\pm0.019$ & $0.026\pm0.007$ & 0.0003 & MPCF & 16 & 16 & $0.256\pm0.032$ & $0.024\pm0.004$ & 0.0064\\
MPCF & 32 & 32 & $0.139\pm0.020$ & $0.026\pm0.007$ & 6.4e-05 & MPCF & 32 & 32 & $0.254\pm0.031$ & $0.024\pm0.004$ & 0.0029\\
\bottomrule\end{tabular}
\end{table}\endgroup

\begin{figure}[!htbp]\centering\includegraphics[width=166mm,height=.89\textheight,keepaspectratio]{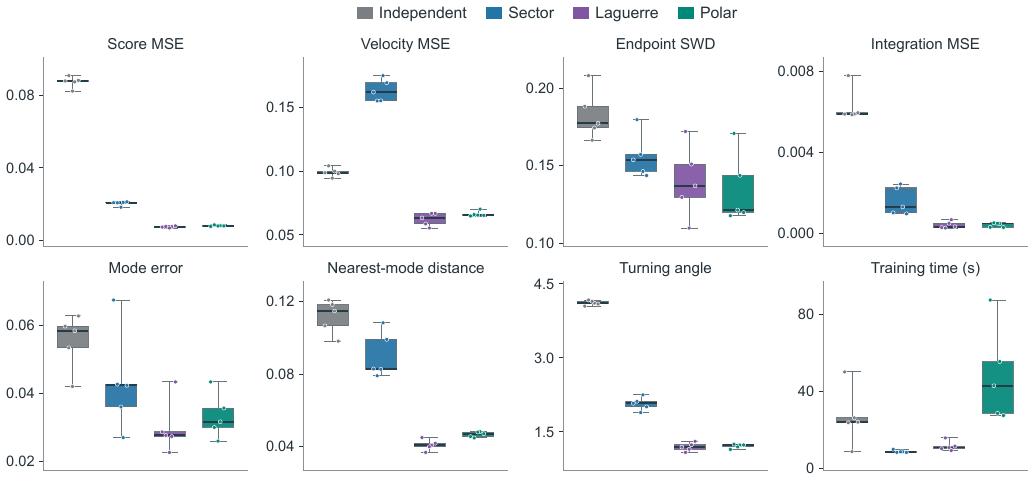}\caption{\textbf{Accuracy and cost of four coupling constructions.} Field, endpoint, trajectory and cost diagnostics for four couplings. Boxes show medians and interquartile ranges; whiskers span the minimum and maximum, and dots show all five training runs. Endpoint and trajectory measurements use 16 evaluations.}\label{sfig:pilot_fields}\end{figure}

\begin{figure}[!htbp]\centering\includegraphics[width=166mm,height=.89\textheight,keepaspectratio]{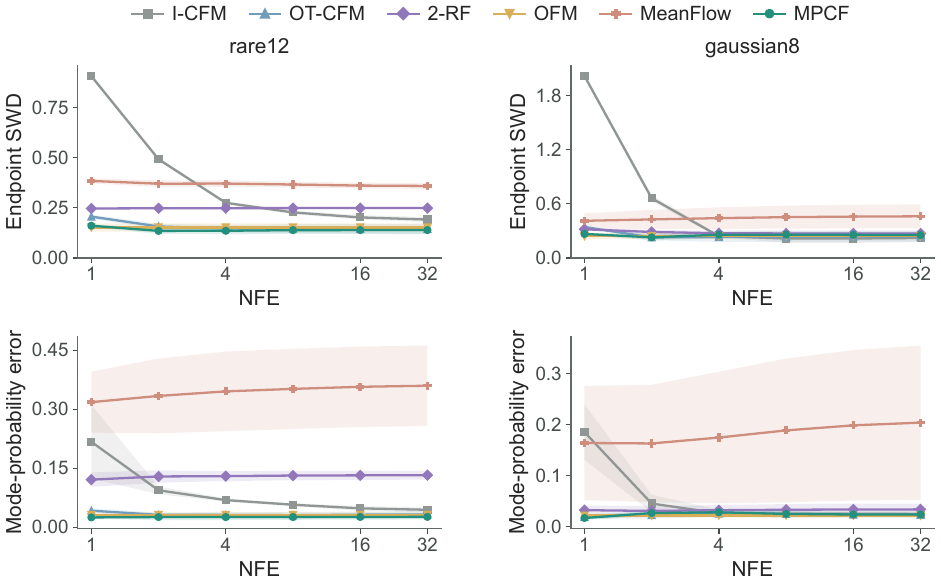}\caption{\textbf{Endpoint quality in the six-method comparison.} Columns show rare12 and gaussian8; rows show endpoint SWD and mode-probability error. Curves show means and bands show standard deviations across five training runs. NFE denotes requested evaluations; Optimal Flow Matching uses one effective evaluation at every budget.}\label{sfig:six_swd}\end{figure}

\FloatBarrier
\section{Additional validation and computational characterization}
\subsection{Validation profiles across displacement fractions}

At full displacement and 16 evaluations, Gaussian-source G-MPCF has validation KID 0.01305, compared with 0.01623 for I-CFM and 0.01610 for OT-CFM. Corresponding fitted-PPCA values are 0.02990, 0.04146 and 0.02479.

\begin{figure}[!htbp]\centering\includegraphics[width=166mm,height=.89\textheight,keepaspectratio]{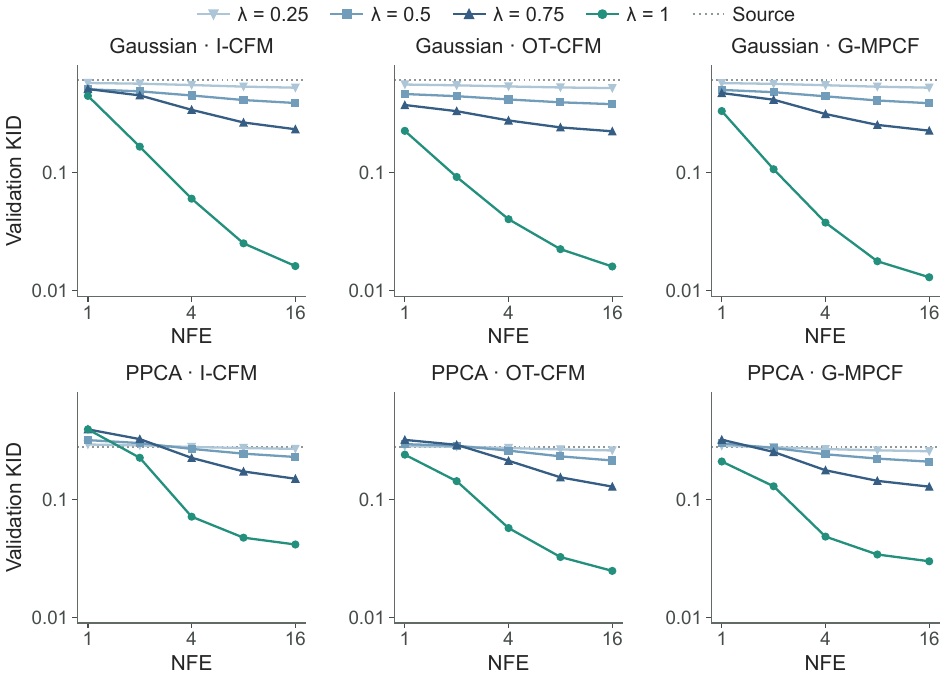}\caption{\textbf{Validation profiles for residual displacement selection.} Rows show Gaussian and fitted PPCA sources; columns show the three couplings. Colours identify displacement fractions. Curves show mean validation KID and bands show standard errors of the mean across three training runs, with 10,000 images per run. Dotted lines indicate source retention; both axes are logarithmic.}\label{sfig:validation_gaussian}\end{figure}

\FloatBarrier
\Needspace{6\baselineskip}
\subsection{Cross-validation across experimental conditions}

The full-data fit uses the 36 condition means. Cross-validation omits each condition in turn and keeps its three paired training repetitions together. Let $y_j=\overline g_{{\rm end},j}$ and let
\[
 x_j=(\overline g_{{\rm field},j},
       \overline g_{{\rm roll},j},
       \overline g_{{\rm solver},j})^\top
\]
contain its three mean diagnostics. For each condition $j$, define the training set $\mathcal I_{-j}=\{1,\ldots,36\}\setminus\{j\}$, its predictor mean $\mu_{-j}$ and coordinatewise sample standard deviation $s_{-j}$. Standardized predictors and an intercept give
\begin{equation}
 \widetilde x_{\ell,-j}=(x_\ell-\mu_{-j})\oslash s_{-j},
 \qquad a_{\ell,-j}=(1,\widetilde x_{\ell,-j}^\top)^\top,
\end{equation}
where $\oslash$ denotes coordinatewise division. Form $A_{-j}$ by stacking $a_{\ell,-j}^\top$ for $\ell\in\mathcal I_{-j}$ and stack the corresponding responses into $y_{-j}$. Ordinary least squares minimizes
\begin{equation}
 L_{-j}(\beta)=\|y_{-j}-A_{-j}\beta\|_2^2,
 \qquad
 \nabla_\beta L_{-j}=2A_{-j}^\top A_{-j}\beta-2A_{-j}^\top y_{-j}.
\end{equation}
All training designs have full column rank. Setting the derivative to zero and applying the same training-set standardization to the predictors for condition $j$ yields
\begin{equation}
 \widehat\beta_{-j}=(A_{-j}^\top A_{-j})^{-1}A_{-j}^\top y_{-j},
 \qquad
 \widehat y_{j,-j}=a_{j,-j}^\top\widehat\beta_{-j}.
\end{equation}
Repeating this operation for every condition gives one prediction per condition. The field-only model uses the same procedure with just the first predictor. The constant baseline predicts the mean response among the 35 training conditions.

For the 36 cross-validated predictions, define $\bar y=36^{-1}\sum_jy_j$ and
\begin{align}
 \mathrm{RMSE}_{\rm cv}
 &=\left[\frac1{36}\sum_{j=1}^{36}(y_j-\widehat y_{j,-j})^2\right]^{1/2},\\
 R^2_{\rm cv}
 &=1-\frac{\sum_{j=1}^{36}(y_j-\widehat y_{j,-j})^2}
              {\sum_{j=1}^{36}(y_j-\bar y)^2}.
 \label{eq:cv-score}
\end{align}

\Needspace{5\baselineskip}
The three-diagnostic model has cross-validated $R^2=0.655$ and RMSE 0.00347. The field-only model has $R^2=0.520$ and RMSE 0.00410; the training-mean predictor has RMSE 0.00608. The full-data fit has $R^2=0.888$ and RMSE 0.00198. Adding rollout and solver diagnostics reduces prediction error in all three dimension groups in Supplementary Fig.~\ref{sfig:prediction_profile}; \suppref{Table~S19} gives the aggregate prediction errors.

\begingroup\small\setlength{\tabcolsep}{7pt}
\begin{table}[!ht]\centering
\caption{Cross-validation across 36 conditions. Responses average three paired runs; standardization and fitting use the other 35 conditions. $R^2$ follows equation~\ref{eq:cv-score}.}\label{tab:cv}
\begin{tabular}{lrrr}\toprule
Predictors & $R^2$ & RMSE & Mean absolute error\\\midrule
Training mean & -0.058 & 0.00608 & 0.00409\\
Field only & 0.520 & 0.00410 & 0.00264\\
Three diagnostics & 0.655 & 0.00347 & 0.00220\\
\bottomrule\end{tabular}
\end{table}\endgroup

\begin{figure}[!htbp]\centering\includegraphics[width=166mm,height=.89\textheight,keepaspectratio]{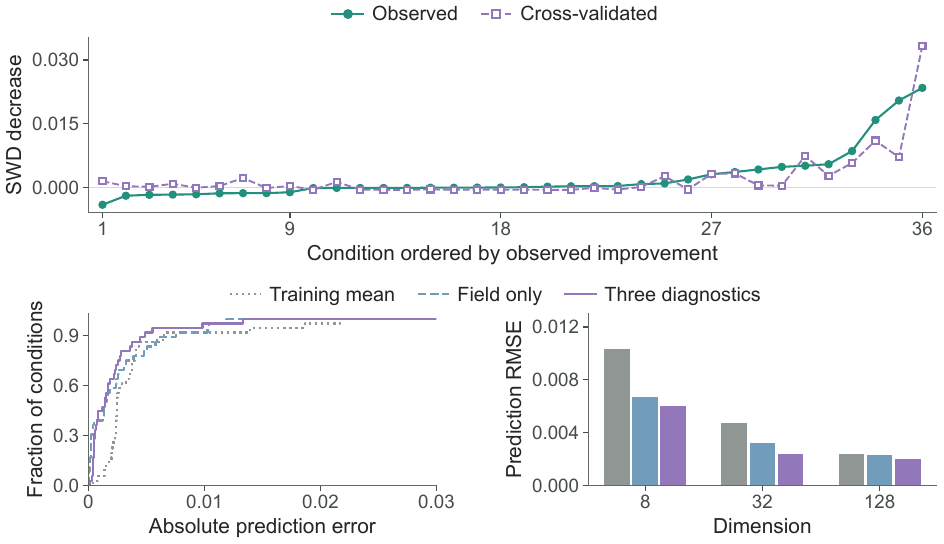}\caption{\textbf{Cross-validated endpoint predictions and errors.} The upper plot compares observed and predicted SWD decreases across 36 conditions, each averaging three paired runs. Every prediction uses the other 35 conditions. Lower plots show absolute-error distributions and RMSE for dimensions 8, 32 and 128, containing 8, 19 and 9 conditions.}\label{sfig:prediction_profile}\end{figure}

\FloatBarrier
\Needspace{8\baselineskip}
\subsection{Computational cost and memory}

Training uses PyTorch 2.7.1, torchvision 0.22.1 and CUDA 12.8 on one NVIDIA RTX PRO 6000 Blackwell Max-Q GPU. Costs are measured across eight runs per method under the shared computation budget. Host-side timers record training and pair construction, with pairing included in total training time. G-MPCF averages 3.683 hours of training, 11.529 minutes of pair construction and 9.219 GiB peak memory, compared with 3.713 hours, 1.604 minutes and 9.215 GiB for independent pairing.

\begin{figure}[!htbp]\centering\includegraphics[width=166mm,height=.89\textheight,keepaspectratio]{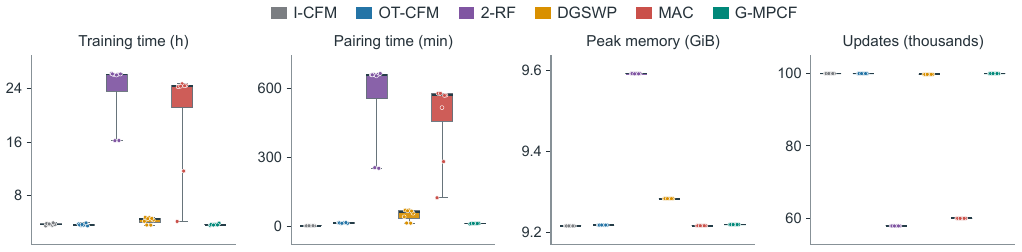}\caption{\textbf{Computational cost of the six image methods.} Training time, pair-construction time, peak GPU memory and completed updates under a shared computation budget and GPU. Boxes show medians and interquartile ranges; whiskers span the minimum and maximum, and dots show all eight training runs. Pair construction is included in training time; rectified-flow updates combine both stages.}\label{sfig:runtime}\end{figure}

\FloatBarrier


\end{document}